\authorrunning{J.D.L., J.H., M.J.P., I.P., and M.S.}
\spnewtheorem{claim}{Claim}{\itshape}{\normalfont}
    \newcommand{\href}[2]{#2}
\newif\ifabstract
\newif\iffull
\title{Self-Assembly of 3-D Structures Using 2-D Folding Tiles}
\author{
 J{\'{e}}r{\^{o}}me Durand{-}Lose
   \thanks{LIX, Ecole Polytechnique, UMR 7161, F-91128 Palaiseau Cedex, France and LIFO, Universit\'e d'Orl\'eans,  \'EA 4022, F-45067 Orl\'eans, France
   \protect\url{jerome.durand-lose@univ-orleans.fr}}
\and
 Jacob Hendricks%
    \thanks{Department of Computer Science and Information Systems, University of Wisconsin - River Falls, River Falls, WI, USA
    \protect\url{jacob.hendricks@uwrf.edu}}
\and
 Matthew J. Patitz
    \thanks{Department of Computer Science and Computer Engineering, University of Arkansas, Fayetteville, AR, USA
    \protect\url{patitz@uark.edu}  This author's research was supported in part by National Science Foundation Grant CCF-1422152 and CAREER-1553166.}
\and
 Ian Perkins
    \thanks{Department of Computer Science and Computer Engineering, University of Arkansas, Fayetteville, AR, USA
    \protect\url{irperkin@uark.edu}.}
\and
Michael Sharp
    \thanks{Department of Computer Science and Computer Engineering, University of Arkansas, Fayetteville, AR, USA.
    \protect\url{mrs018@uark.edu}.
    This author's research was supported in part by National Science Foundation Grants CCF-1422152 and CAREER-1553166.}
}
\institute{}
\date{}
\begin{document}

\maketitle

\vspace{-20pt}
\begin{abstract}
Self-assembly is a process which is ubiquitous in natural, especially biological systems.  It occurs when groups of relatively simple components spontaneously combine to form more complex structures.  While such systems have inspired a large amount of research into designing theoretical models of self-assembling systems, and even laboratory-based implementations of them, these artificial models and systems often tend to be lacking in one of the powerful features of natural systems (e.g. the assembly and folding of proteins), namely the dynamic reconfigurability of structures.  In this paper, we present a new mathematical model of self-assembly, based on the abstract Tile Assembly Model (aTAM), called the Flexible Tile Assembly Model (FTAM). In the FTAM, the individual components are 2-dimensional square tiles as in the aTAM, but in the FTAM, bonds between the edges of tiles can be flexible, allowing bonds to flex and entire structures to reconfigure, thus allowing 2-dimensional components to form 3-dimensional structures.  We analyze the powers and limitations of FTAM systems by (1) demonstrating how flexibility can be controlled to carefully build desired structures, and (2) showing how flexibility can be beneficially harnessed to form structures which can ``efficiently'' reconfigure into many different configurations and/or greatly varying configurations.  We also show that with such power comes a heavy burden in terms of computational complexity of simulation and prediction by proving that, for important properties of FTAM systems, determining their existence is intractable, even for properties which are easily computed for systems in less dynamic models.

\end{abstract}

\section{Introduction}

Proteins are a fantastically diverse set of biomolecules, with structures and functions that can vary wildly from each other, such as fibrous proteins (like collagen), enzymatic proteins (like catalase), and transport proteins (like hemoglobin). Truly amazing is the fact that such diversity arises solely from the linear combination of only $20$ amino acid building blocks.  It is the specific sequence of amino acids, interacting with each other as they are combined, which causes each chain to fold in a specific way and each protein to assume its particular three-dimensional structure, and this in turn dictates its structural and functional properties.  Inspired by the prowess of nature to build molecules with such precision and heterogeneity, scientists have studied the mechanisms of protein folding - to realize that the dynamics are so complex that predicting a protein's shape given its amino acid sequence is considered to be intractable \cite{fraenkel1993complexity, crescenzi1998complexity}, and engineers have begun to develop artificial systems which fold self-assembling molecules into complex structures~\cite{RotOrigami05, RothOrigami, OrigamiSeed, OrigamiTiles, dill1995} - but with results that to date still lack the diversity of biology.

In order to help progress understanding of the dynamics of systems which self-assemble out of folding components, and to provide a framework for studying such systems, in this paper we introduce the \emph{Flexible Tile Assembly Model} (FTAM).  The FTAM is intended to be a simplified mathematical model of self-assembling systems utilizing components which are able to dynamically reconfigure their relative 3-dimensional locations via folding and unfolding of flexible bonds between components.  It is based on the abstract Tile Assembly Model \cite{Winf98}, and as such the fundamental components are 2-dimensional square \emph{tiles} which bind to each other via \emph{glues} on their edges.  In contrast to the aTAM, in the FTAM each glue type can be specified to either form \emph{rigid} bonds (which force two adjacent tiles bound by such a glue to remain fixed in co-planar positions) or \emph{flexible} bonds (which allow two adjacent tiles bound by such a glue to possibly alternate between being in any of three relative orientations, as shown in Figure~\ref{fig:flexible-positions}).  Because the FTAM is meant to be a test bed for flexible, reconfigurable self-assembling systems, we present a version of the model which makes many simplifying assumptions about allowable positions of tiles and dynamics of the self-assembly process, but which also differs greatly from previously studied self-assembling systems which allow reconfigurability \cite{jRTAM, JonoskaSignals1, JonoskaSignals2, jSignals3D,jSignals,FlexibleCompModel,JonoskaFlexible,CommonUnfoldings, aichholzer2017folding}, other computational studies of folding such as \cite{CommonUnfoldings,aichholzer2017folding}, and algorithmic studies focused on constructing more simple 3D structures such as \cite{KaoFolding}.

\vspace{-15pt}
\begin{figure}[htp]
\centering
    \includegraphics[width=2.0in]{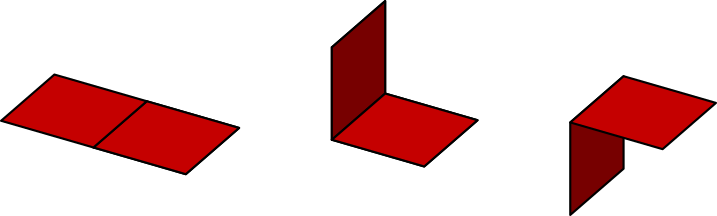}
    \caption{The three relative positions possible for two tiles bound via a flexible glue.}
    \label{fig:flexible-positions}
\end{figure}
\vspace{-10pt}

In Section~\ref{sec:definitions}, we formally introduce the FTAM and provide definitions and algorithms describing its dynamics. In Section~\ref{sec:robustness}, we show how to control flexibility in the model to build 3D shapes. In Section~\ref{sec:utilizing_flexibility}, we present a pair of constructions which demonstrate the potential utility of reconfigurability of assemblies in the FTAM. In the first construction, an FTAM system $\calT$ is given which produces a single terminal assembly that may be in many different configuration. In addition, a set $S$ of $n$ distinct types of tiles are given such that for each subset of $S$, adding this subset of tiles of $S$ to the types of tiles for $\calT$ gives a system with an assembly sequence that starts from the single terminal assembly of $\calT$ and yields a rigid terminal assembly (i.e., an assembly to which no tiles may bind and which, at a high-level, is in a configuration which cannot be folded via flexible glues to give another distinct configuration). Moreover, the resulting rigid assembly is distinct for each choice of subset of $S$. The second construction given in Section~\ref{sec:robustness} demonstrates how a reconfigurable initial assembly can be transformed into either a volume-maximizing hollow cube or a small, tightly compressed brick by selecting between and adding one of two small subsets of tile types. These two constructions demonstrate how algorithmic self-assembling systems could be designed which efficiently (in terms of ``input'' specified by tile type additions) make drastic changes to their surface structures and volumes. These constructions show that FTAM systems can be designed which utilize reconfigurability. In Section~\ref{sec:complexity}, we show that this utility comes at a cost in terms of the computational complexity of determining some important properties of arbitrary FTAM systems. In particular, we show that, given an arbitrary FTAM system, the problem of determining whether it produces an assembly which cannot be reconfigured (via folding along tile edges bonded by flexible glues) is undecidable.  Moreover, we show that, given an assembly, it is co-NP-complete to determine whether the assembly is rigid, i.e. has exactly one valid configuration.  Our final result modifies the previous to show that the problem of deciding if a given assembly for an FTAM system is terminal is also co-NP-complete.  This is especially interesting since, in the aTAM, there is a simple polynomial time algorithm to determine if a given assembly is terminal.

\vspace{-10pt}
\section{Definition of the FTAM} \label{sec:definitions}
\vspace{-5pt}

In this section we present definitions related to the Flexible Tile Assembly Model.

A \emph{tile type} $t$ in the FTAM is defined as a 2D unit square that can be translated, rotated, and reflected throughout 3-dimensional space, but can only occupy a location such that its corners are positioned on four adjacent, coplanar points in $\mathbb{Z}^3$. Each tile type $t$ has four sides $i \in \{N, E, S, W\}$, which we refer to as $t_i$.  Let $\Sigma$ be an alphabet of labels and $\Bar{\Sigma} = \{a^* | a \in \Sigma\}$ be the alphabet of \emph{complementary labels}, then each side of each tile has a \emph{glue} that consists of a \emph{label} $label(t_i) \in \Sigma \cup \Bar{\Sigma} \cup \epsilon$ (where $\epsilon$ is the unique \emph{empty} label for the \emph{null glue}), a non-negative integer \emph{strength} $str(t_i)$, and a boolean valued \emph{flexibility} $flx(t_i)$. (See Figure~\ref{fig:flexible-positions} for a depiction of the positions allowable by a flexible glue.)

A \emph{tile} is an instance of a tile type.  A \emph{placement} of a tile $p = (l,n,o)$ consists of a location $l \in \mathbb{Z}^3$, a \emph{normal} vector $n$ which starts at the center of the tile and points perpendicular to the plane in which the tile lies (i.e. $n \in \{+x,-x,+y,-y,+z,-z\}$\footnote{We refer to the vectors $\{(1,0,0),(-1,0,0),(0,1,0),(0,-1,0),(0,0,1),(0,0,-1)\})$ by the shorthand notation $\{+x,-x,+y,-y,+z,-z\}$ throughout this paper.}), and an \emph{orientation} $o$ which is a vector lying in the same plane as the tile which starts at the center of the tile and points to the $N$ side of the tile (i.e. $o \in \{+x,-x,+y,-y,+z,-z\}$).  Note that by convention, to avoid duplicate location specifiers for a given tile, we restrict a location $l$ to refer to only the $3$ possible tile locations with corners at $l$ and which extend in positive directions from $l$ along one of the planes (i.e. tiles are located by their vertices with the smallest valued coordinates). For any given $l$, there can only be a max of one tile with $n \in \{+x,-x\}$, one tile with $n \in \{+y,-y\}$, and one tile with $n \in \{+z,-z\}$, as to avoid overlapping tiles.

Let $p = (l,n,o)$ and $p' = (l',n',o')$ be placements of tiles $t$ and $t'$, respectively, such that $p$ and $p'$ are non-overlapping\footnote{Non-overlapping placements refer to different tile locations. Formally, two tile placements are non-overlapping if (1) $l != l'$ or (2) $n != n'$ and $n!=\texttt{inverse}(n')$.} and for some $i,j \in \{N,E,S,W\}$, sides $t_i$ and $t_j'$ are adjacent (i.e. touching).  We say that $p$ and $p'$, have \emph{compatible} normal vectors if and only if either (1) $n = n'$, (2) $n$ and $n'$ intersect, or (3) $\texttt{inverse}(n)$ and $\texttt{inverse}(n')$ intersect, where the $\texttt{inverse}$ function simply negates the signs of the non-zero components of a vector. (See Figures~\ref{fig:compatible-normals} and \ref{fig:incompatible-normals}.)  We will refer to these three orientations as ``Straight'', ``Up'', and ``Down'', respectively.  Furthermore, if (1) $label(t_i)$ is complementary to $label(t_j')$, (2) $str(t_i) = str(t_j')$, (3) $flx(t_i) = False$ and $flx(t_j') = False$, and (4) $n$ and $n'$ are in a ``Straight'' orientation, then the glues on $t_i$ and $t_j'$ can \emph{bind} with strength value $str(t_i)$ to form a \emph{rigid bond}. Similarly, if (1) $label(t_i)$ is complementary to $label(t_j')$, (2) $str(t_i) = str(t_j')$, (3) $flx(t_i) = True$ and $flx(t_j') = True$, and (4) $n$ and $n'$ are compatible, then the glues on $t_i$ and $t_j'$ can \emph{bind} with strength value $str(t_i)$ to form a \emph{flexible bond}. \footnote{Note that any glue can only bind to a single other glue. Also, we do not allow two pairs of coplanar tiles to bind through the same space (i.e. the two partial surfaces created by two pairs of bounded coplanar tiles are not allowed to intersect). Therefore, $4$ glues from $4$ different tiles that are all adjacent to each other can all form bonds only if they form two flexible bonds in non-straight orientations.}

\vspace{-15pt}
\begin{figure}[htp]
\centering
    \subfloat[][Compatible normal vectors.]{
        \label{fig:compatible-normals}
        \includegraphics[width=1.8in]{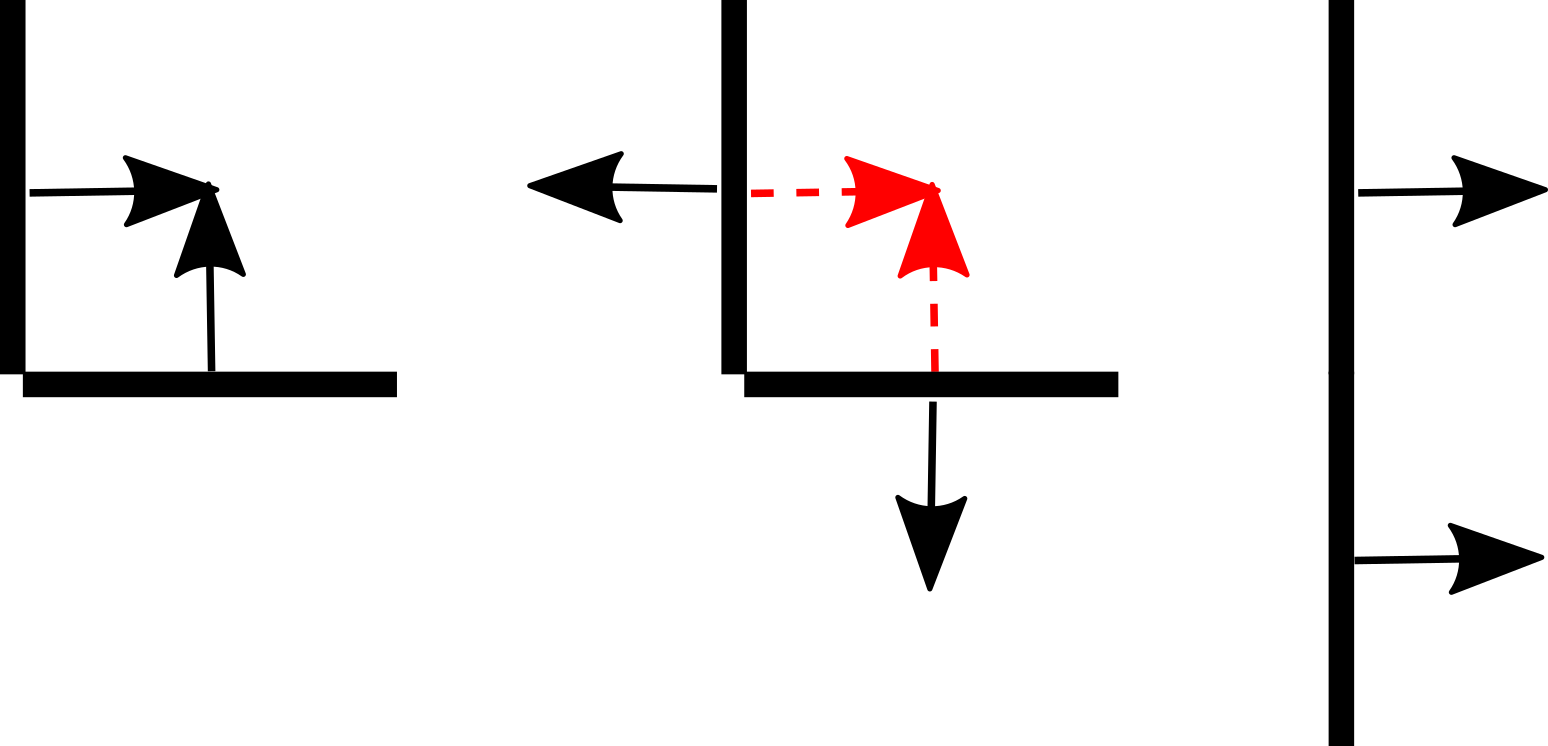}
    }
    \quad\quad
    \subfloat[][Incompatible normal vectors.]{
        \label{fig:incompatible-normals}
        \includegraphics[width=1.8in]{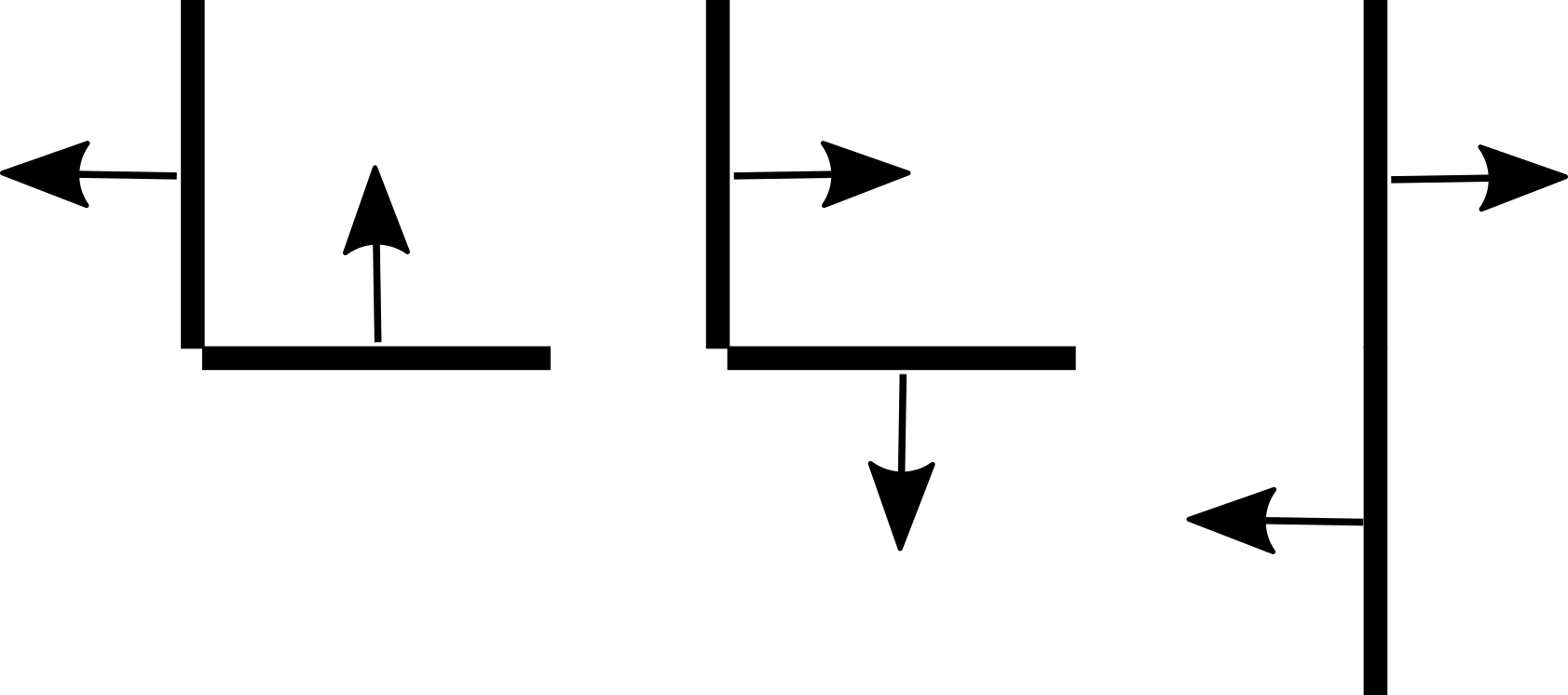}
    }
    \caption{Possible normal vectors of pairs of tiles.  Those in (a) are compatible and allow a bond to form between complementary glues in the orientations ``Up'', ``Down'', and ``Straight'', respectively. Those in (b) are not compatible.}
    \label{fig:normals}
\end{figure} \hfill

We define an \emph{assembly} $\alpha$ as a graph whose nodes, denoted $V(\alpha)$, are tiles and whose edges, denoted $E(\alpha)$, represent bound complementary glues between adjacent edges of two tiles. An edge between sides $i$ and $j$ of tiles $t$ and $t'$, respectively, is represented by the tuple $(t_i,t_j')$, which specifies which sides of $t$ and $t'$ the bond is between.  Whether it is flexible is denoted by $flx(t_i)$ and its strength is denoted by $str(t_i)$ (since those values must be equal for $t_i$ and $t_j'$).

We define a \emph{face} to be a set of coplanar tiles that are all bound together through rigid bonds. Additionally, we define a \emph{face graph} to be a graph minor of the assembly graph where every maximal subgraph in which every node can be reached from every other node using a path of rigid tiles is replaced by a single node in the face graph. Two nodes in the face graph that correspond to two groups of nodes in the assembly graph have an edge if and only if there is at least one flexible bond between any single node in the first group of the assembly graph and any single node in the second group of the assembly graph.

An FTAM system is a triple $\mathcal{T} = (T,\sigma,\tau)$ where $T$ is a finite set of tile types (i.e. tile set), $\sigma$ is an initial \emph{seed} assembly, and $\tau \in \mathbb{Z}^+$ is a positive integer called \emph{temperature} which specifies the minimum binding threshold for tiles. An assembly is $\tau$-stable if and only if every cut of edges of $\alpha$ which separates $\alpha$ into two or more components must cut edges whose strengths sum to $\ge \tau$.  We will only consider assemblies which are $\tau$-stable (for a given $\tau$), and we use the term assembly to refer to a $\tau$-stable assembly.

Given an assembly $\alpha$, a \emph{configuration} $c_\alpha$ is a mapping from every flexible bond in $\alpha$ to an orientation from \{``Up'', ``Down'', ``Straight''\}. An \emph{embedding} $e_\alpha$ is a mapping from each tile in $\alpha$ to a placement. Given an assembly and a configuration, we can obtain an embedding by choosing any single initial tile and assigning it a placement and computing the placement of each additional tile according to how it is bonded with tiles that are already placed. Note that, given tiles to which it is bound, their placements, and an orientation, there is only one tile location at which each additional tile can be placed. We say a configuration $c_{\alpha}$ is \emph{valid} if and only if an embedding obtained from the configuration (1) does not place more than one tile at any tile location, (2) doesn't bond tiles through the same space, and (3) does not have contradicting bond loops. To elaborate on (2), while 4 glues can all be adjacent at one point, we allow them to bind in pairs in ``Up'' or ``Down'' orientations but do not allow both pairs to bind across the gap in ``Straight'' orientations. To elaborate on (3), contradicting bond loops occur when placing a series of tiles that are all bound in a loop causes the last tile to be placed at a location that is not adjacent to the first tile, therefore making the loop unable to close. Examples of configurations that follow and contradict (3) are given in Figure \ref{fig:validation}. Note that two embeddings that use different initial tiles and initial placements but the same configuration will be equivalent up to rotation and translation.

\begin{figure}
\centering
    \includegraphics[width=0.65\textwidth]{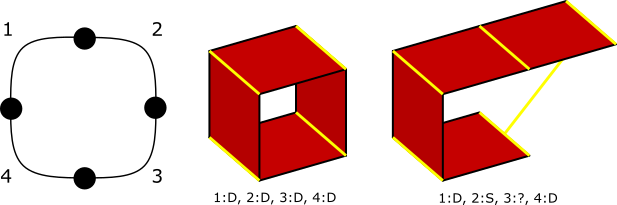}
    \caption{Here we see an assembly, a valid configuration, and an invalid configuration. In the third image, because of the orientations of bonds 1, 2, and 4, bond 3 is between two tiles that are not connected, making the configuration invalid.}
    \label{fig:validation}
\end{figure}

Let $\alpha$ be an assembly and $c_\alpha$ and $c_\alpha'$ be valid configurations of $\alpha$.  If for every flexible bond $b \in \alpha$ either $c_\alpha(b) = Up$ and $c_\alpha'(b) = Down$, $c_\alpha(b) = Down$ and $c_\alpha'(b) = Up$, or $c_\alpha(b) = Straight$ and $c_\alpha'(b) = Straight$, we say that $c_\alpha$ is the \emph{chiral} configuration of $c_\alpha'$ and vice versa. Note that the embeddings achieved from $c_\alpha$ and $c_\alpha'$ are reflections of each other. We refer to the special reconfiguration of an assembly to its chiral as \emph{inversion}.

Given an assembly $\alpha$ and two different embeddings $e_\alpha$ and $e_\alpha'$, we say that $e_\alpha$ and $e_\alpha'$ are \emph{equivalent}, written $e_\alpha \equiv e_\alpha'$, if one can be rotated and/or translated into the other. If two embeddings are equivalent, this means they were computed from the same configuration, although possibly using a different placement for the initial tile.

We define the set of all valid configurations of $\alpha$ as $\mathcal{C}(\alpha)$.  We say that an assembly $\alpha$ is \emph{rigid} if (1) $|\mathcal{C}(\alpha)| = 1$, or (2) $|\mathcal{C}(\alpha)| = 2$ and the two valid configurations are chiral versions of each other.  Conversely, if $\alpha$ is not rigid, we say that it is \emph{flexible}.

The \emph{frontier} of a configuration $c_\alpha$, denoted $\partial^{\mathcal{T}} c_\alpha$, is the set composed of all pairs $(t,B)$ where $t \in T$ is a tile type from tile set $T$ and $B$ is a set of up to 4 tile/glue pairs such that an embedding of $c_\alpha$ would place each tile adjacent to one location such that a tile of type $t$ could bind to each glue for a collective strength greater than or equal to the temperature parameter $\tau$. Given an assembly $\alpha$ and a set of valid configurations $\mathcal{C}(\alpha)$, we define the multiset of frontier locations of assembly $\alpha$ across all valid configurations to be $\hat{\partial}^{\mathcal{T}}\alpha = \bigcup_{c_\alpha \in \mathcal{C}(\alpha)} \partial^{\mathcal{T}} c_\alpha$, i.e. $\hat{\partial}^{\mathcal{T}}\alpha$ is the multiset resulting from the union of the sets of frontier locations of all valid configurations of $\alpha$.

Given assembly $\alpha$ and valid configuration $c_\alpha$, $\#(c_\alpha)$ is the maximum number of new bonds which can be formed across adjacent tile edges in an embedding of $\alpha$ which are not already bound in $\alpha$ (i.e. these are tile edges which have been put into placements allowing bonding in configuration $c_\alpha$ but whose bonds are not included in $\alpha$).  We then define $\mathcal{C}_{max}(\alpha) = \{ c_\alpha | c_\alpha \in \mathcal{C}(\alpha)$ and $\forall c_\alpha' \in \mathcal{C}(\alpha)$, $\#(c_\alpha) \ge \#(c_\alpha')\}$. Namely, $\mathcal{C}_{max}(\alpha)$ is the set of valid configurations of $\alpha$ in which the maximum number of bonds can be immediately formed.

Given an assembly $\alpha$ in FTAM system $\mathcal{T}$, a single step of the assembly process intuitively proceeds by first randomly selecting a frontier location from among all frontier locations over all valid configurations of $\alpha$. Then, a tile is attached at that location to form a new assembly $\alpha'$. Next, over all valid configurations of $\alpha'$, a configuration is randomly selected in which the maximum number of additional new bonds can be formed (i.e. the addition of the new tile may allow for additional bonds to form in alternate configurations, and a configuration which maximizes these is chosen), and all possible new bonds are formed in that configuration, yielding assembly $\alpha''$.
Assuming that $\alpha$ was not \emph{terminal} and thus $\alpha'' \neq \alpha$, we denote the single-tile addition as $\alpha \to_1^{\mathcal{T}} \alpha''$.  To denote an arbitrary number of assembly steps, we use $\alpha \to_*^{\mathcal{T}} \alpha''$.  For an FTAM system $\mathcal{T} = (T,\sigma,\tau)$, assembly begins from $\sigma$ and proceeds by adding a single tile at a time until the assembly is terminal (possibly in the limit).
(See Section~\ref{sec:definitions-append}
 for pseudocode of the assembly algorithms.)
For any $\alpha'$ such that $\sigma \to_*^{\mathcal{T}} \alpha'$, we say that $\alpha'$ is a \emph{producible} assembly and we denote the set of producible assemblies as $\prodasm{\mathcal{T}}$.  We denote the set of terminal assemblies as $\termasm{\mathcal{T}}$.

Note that in this section we have provided what is intended to be an intuitively simple version of the FTAM in which the full spectrum of all possible configurations of an assembly are virtually explored at each step, and only those which maximize the number of bonds formed at every step are selected. Logically, this provides a model in which assemblies reconfigure into globally optimal configurations, in terms of bond formation, between each addition of a new tile. Clearly, depending on the size of an assembly and the degrees of freedom of various components afforded by flexible bonds, such optimal reconfiguration could conceivably be precluded by faster rates of tile attachments.  Various parameters which seek to balance the amount of configuration-space exploration versus tile attachment rates have been developed to study more kinetically realistic dynamics, but are beyond the scope of this paper.

\section{Controlling Flexibility to Build Structures} \label{sec:robustness}

Our goal in this section is to deterministically assemble certain shapes in the FTAM at temperature two. We define a \emph{shape} to be a collection of connected tile locations. A shape is invariant through translation and rotation. Rather than go through an endless case-by-case analysis of all possible shapes, we focus on collections of 2D tile locations that form the outlines of three-dimensional shapes. We refer to these 3D shapes as \emph{polycubes} and the set of 2D tile locations on their outer surface as an \emph{outline}. We say that an FTAM system $\mathcal{T} = (T,\sigma,\tau)$ \emph{deterministically assembles} a shape $s$ if the embedding of all configurations $\mathcal{C}_\alpha$ of all terminal assemblies $\termasm{\mathcal{T}}$ of the system $T$ have shape $s$.

Due to the definition of the model, the most prominent additional challenge that is present in FTAM systems over traditional 2D aTAM systems is controlling the orientation of different faces in the assembly relative to one another as the assembly process is occurring. In which case, the approach that we use to demonstrate shape building in the FTAM is to make an \emph{edge frame} for each polycube using unique tile types and filling in each face. We define an edge frame to be the collection of the outer-most tiles of each face in the outline of a polycube. For now, we will make the assumption that every edge of the shape is connected and will address this later in the section. We claim that studying edge frames is sufficient for unveiling the power of the FTAM to orient new faces in the assembly process since, intuitively, the cooperation of other tiles on the edges of adjacent faces doesn't provide additional help in correctly orienting those faces over just the tiles at the vertex. This intuition stems from the idea that the faces of a shape incident on a vertex interact on the same axes that the individual tiles incident on a vertex do.

One big deciding factor about whether the outline of a specific polycube can be made in the FTAM comes down to the types of vertices in that polycube. Because of this, we continue our analysis by breaking down the types of vertices that can exist on a polycube. Every type of vertex possible on a polycube can be enumerated by enumerating all polycubes that can fit inside a $2\times2\times2$ space that are distinct up to rotation and reflection. You can see the outcome of this enumeration in Figure \ref{fig:possible_outline_vertices}. In each polycube, the vertex type is illustrated at the center point of the $2\times2\times2$ space. The illustration has labels to later reference each vertex type.

\begin{figure}[htp]
\centering
    \includegraphics[width=1.0\textwidth]{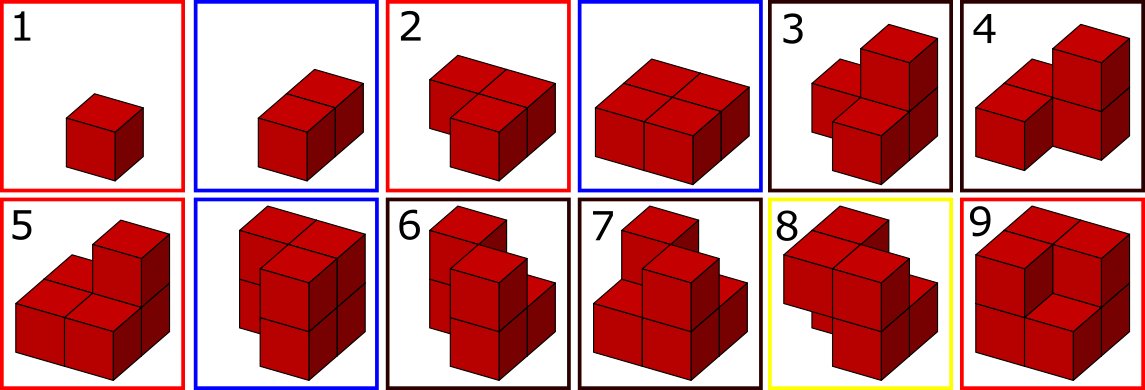}
    \caption{All possible polycubes that can fit inside of a $2\times2\times2$ space, and furthermore, all possible vertex types that could exist on a polycube.}
    \label{fig:possible_outline_vertices}
\end{figure}

This yields 6 distinct vertex types. Vertices 1, 8, and 9 are all the same, which we refer to as a \emph{convex} vertex. Vertices 3 and 5 are the same, which we refer to as a \emph{concave} vertex. Vertex 3, 4, 6, and 7 are all distinct and we refer to them collectively as the \emph{complex} vertices. In addition to the vertex type, the system must also be able to deterministically assemble the vertex from the correct perspective. A \emph{perspective} is the relative direction that the new edges that form with the vertex are pointing with respect to the tiles of the original edge that first grows up to the vertex in the assembly process. Each vertex can have any number of perspectives from 1 (\emph{symmetric} vertex) to the number of edges (\emph{asymmetric} vertex), inclusive.

Collectively, there are 15 unique perspectives among the 6 distinct vertices. Each perspective requires its own tiling protocol to get the vertex to configure correctly. We construct these protocols using (a) the number of tiles that are incident on the vertex that are bound in a loop (which we refer to as the \emph{loop length}) and (b) the sequence of flexibility values in the bonds of the loop (which we refer to as the \emph{bond sequence}). If a perspective has a unique protocol, then attaching a loop of tiles using the protocol will result in the only possible configuration available for the loop of tiles being the correct perspective. Of the 15 perspectives previously mentioned, 11 perspectives among 4 of the vertices will have their own unique protocol and will therefore be deterministic. The other 4 perspectives among 2 vertices will all share one protocol and will therefore not be deterministic. These 2 vertices are types 3 and 7, which we will subsequently refer to as \emph{reconfigurable} vertices. For a full enumeration and discussion of vertices, perspectives, and tiling protocols, see
Section \ref{sec:verts_and_perts}.

\begin{figure}[htp]
\centering
    \subfloat{%
        \label{fig:vertex_control_setup}%
        \includegraphics[width=0.24\textwidth]{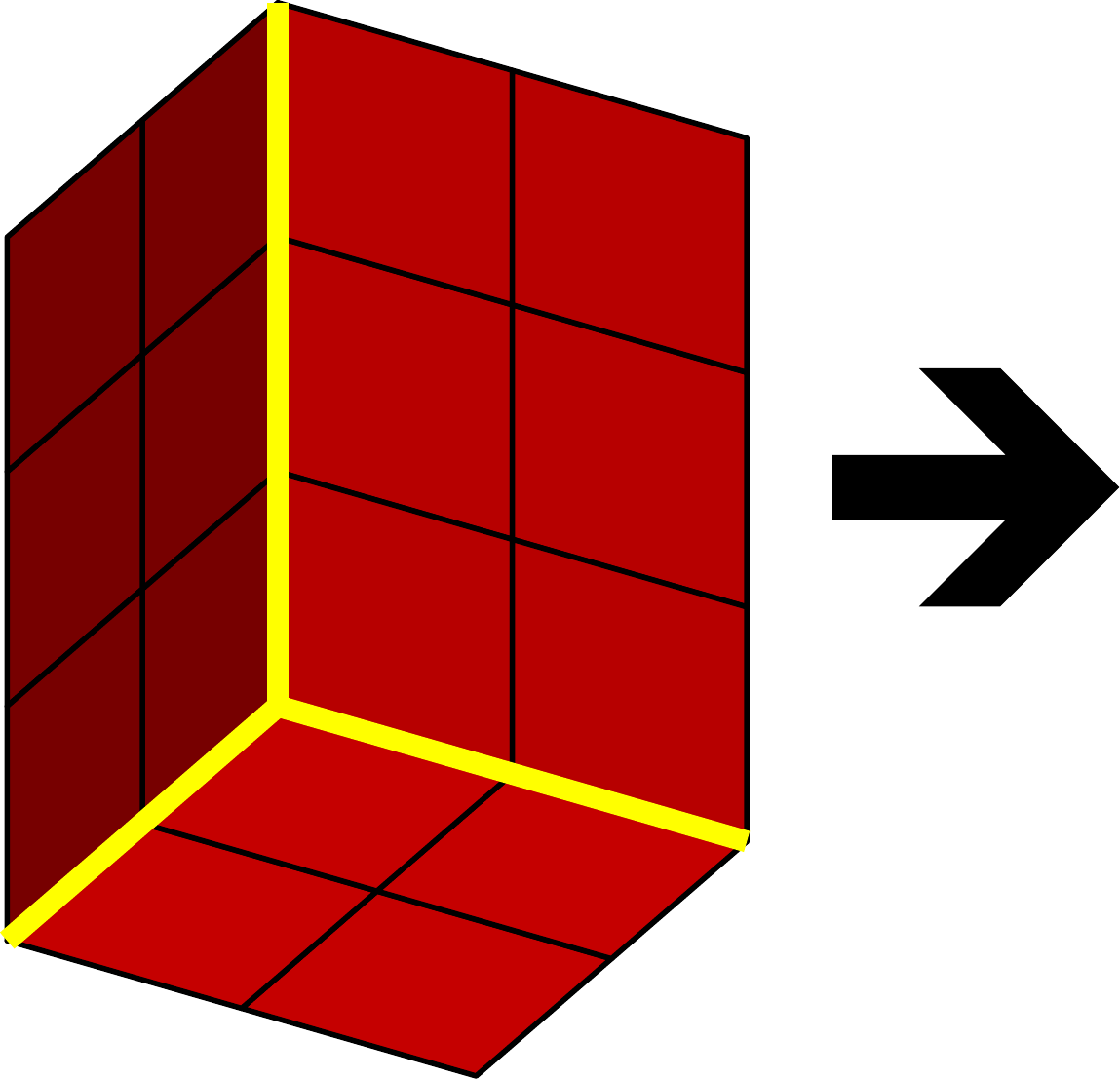}
        }
  \subfloat{%
        \label{fig:convex_vertex}%
        \includegraphics[width=0.16\textwidth]{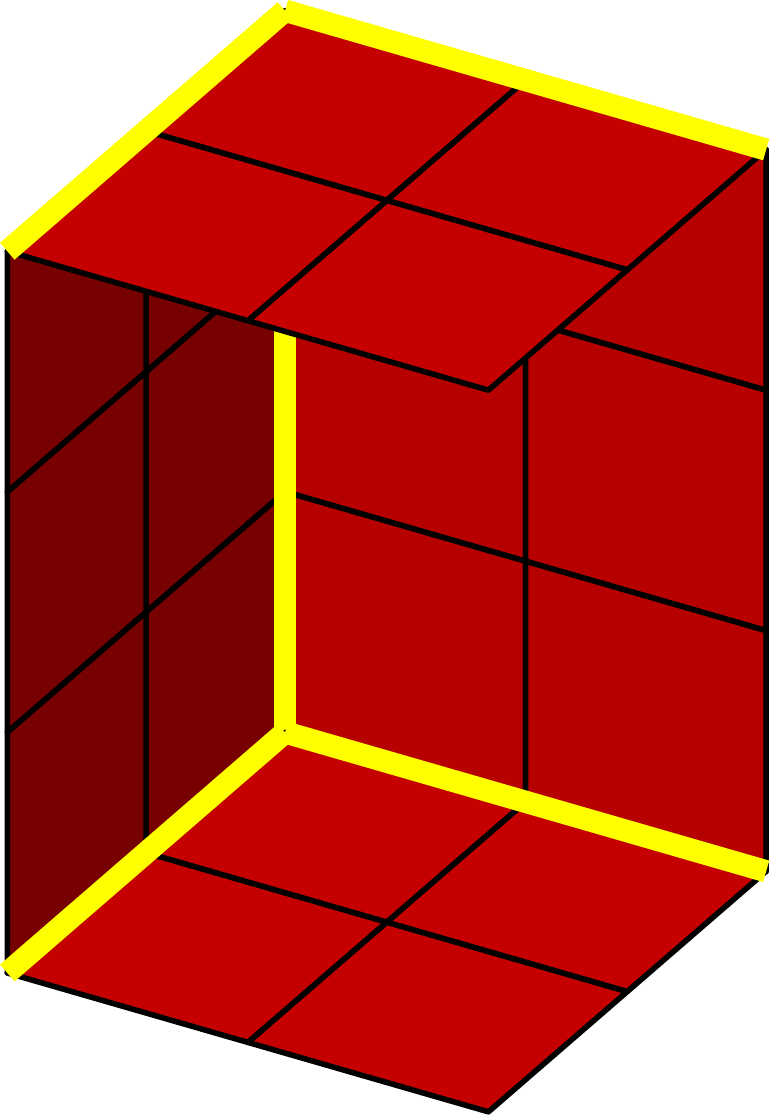}
        }
  \subfloat{%
        \label{fig:concave_vertex}%
        \includegraphics[width=0.30\textwidth]{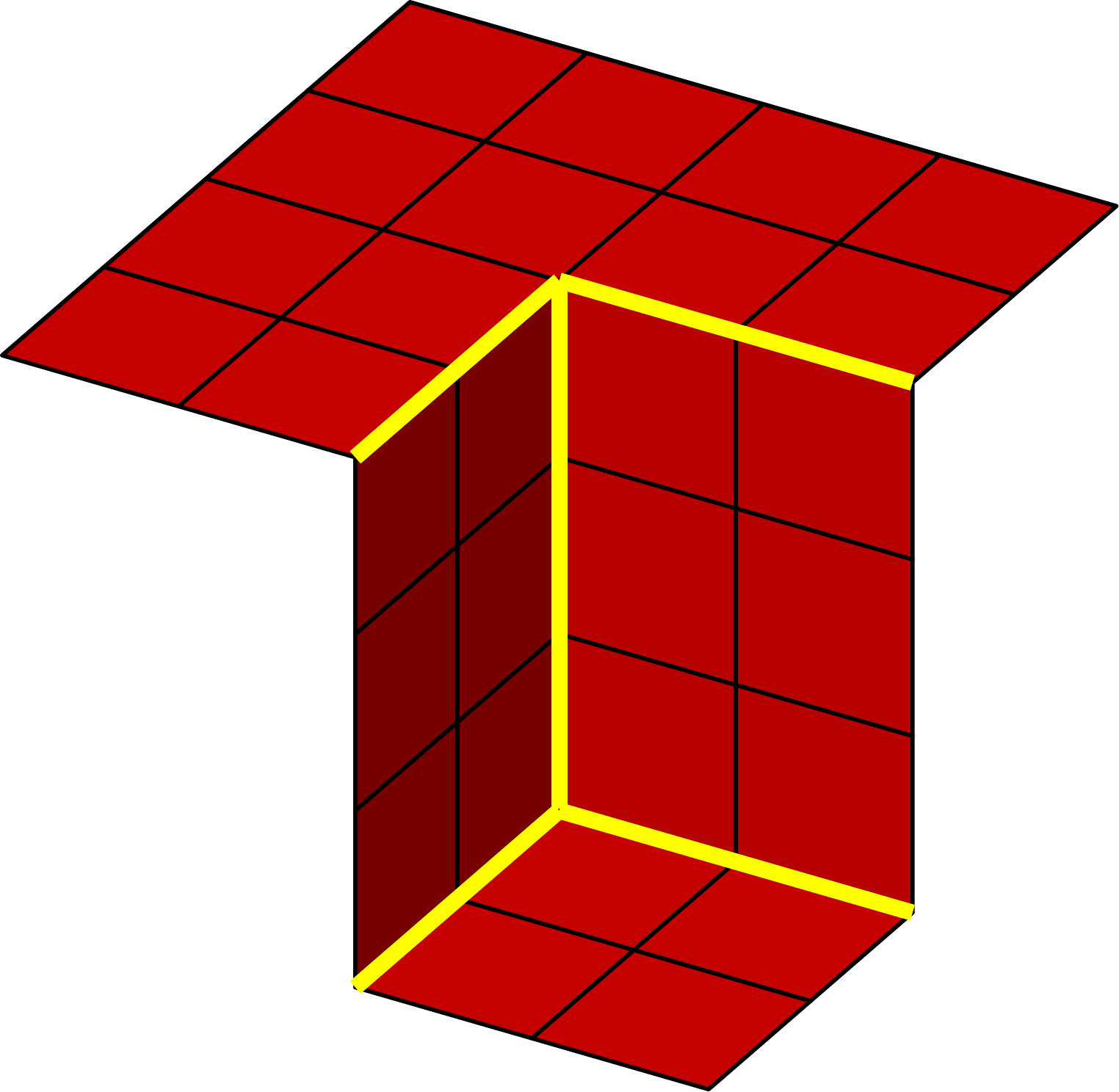}
        }
  \subfloat{%
        \label{fig:combined_vertex}%
        \includegraphics[width=0.20\textwidth]{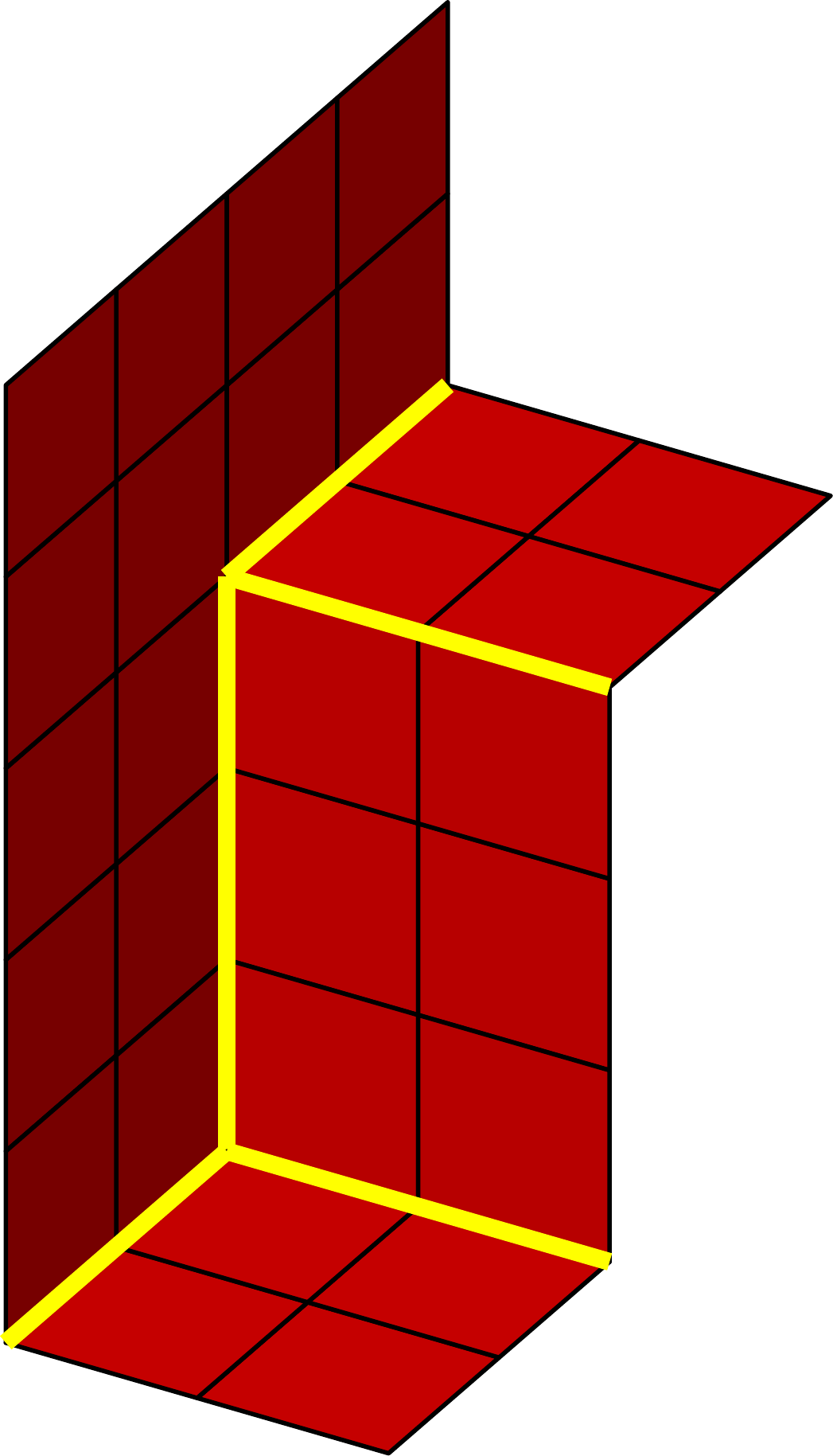}
        }%
        \quad
        
  \caption{(a) original edge, (b) convex vertex, (c) concave vertex, from one unique perspective, (d) concave vertex, from another perspective}
  \label{fig:types_of_vertices}
\end{figure}

\paragraph{Assembly Process.}

Now, we consider the assembly process. Let's assume we start with a seed that is just the three tiles in a simple convex vertex. Notice that as the assembly process starts, the seed vertex and the edges that are growing out from it can invert as a whole but cannot otherwise reconfigure (since that would require removing a bond from the assembly). (For assembling an edge, we outline a trivial protocol in
Section~\ref{sec:edge_protocol}
.) Each time the assembly grows up to a vertex, it will attach the loop of tiles that make this new vertex. As long as the new vertex is not a reconfigurable vertex, it will be forced to take a configuration that agrees with configuration of the seed vertex. By this, we mean that, if the seed vertex were to invert at this point, the edge connecting the two vertices would invert, and the new vertex would therefore be forced to invert. This cause-effect relationship is true for any vertices (excluding reconfigurable vertices) connected by an edge, which means that, if any bond in the partial assembly were to reorient, the whole partial assembly must invert, i.e. inversion is the only possible reconfiguration. An example of an edge frame started from a potential seed is shown in Figure \ref{fig:edge_frame_example}.

\begin{figure}[htp]
\centering
    \includegraphics[width=\textwidth]{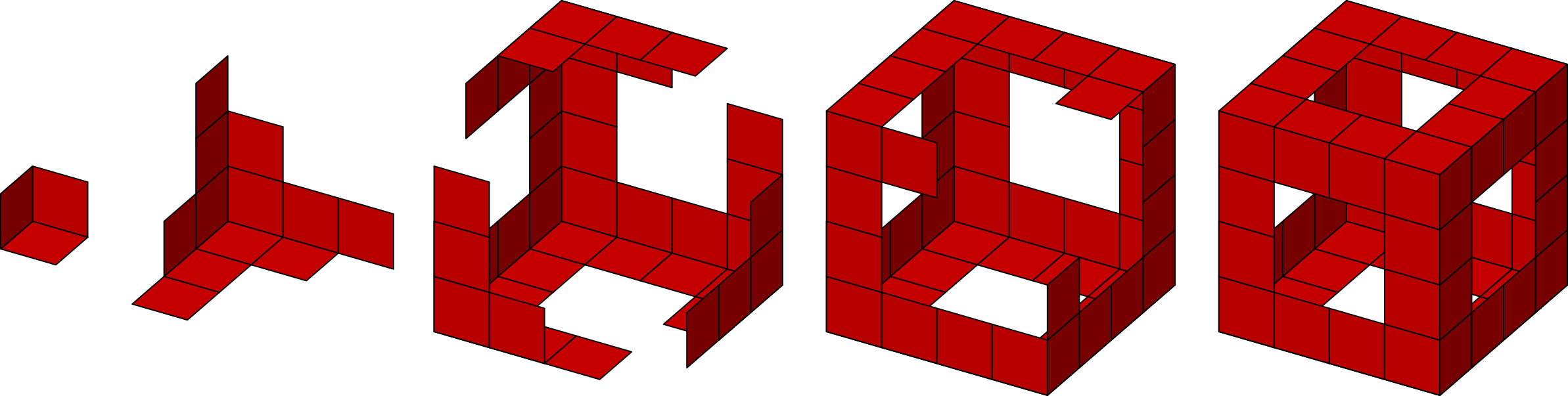}
    \caption{An assembling edge frame starting from a potential seed. Each edge grows up to a vertex and initializes other edges until the whole frame has filled out.}
    \label{fig:edge_frame_example}
\end{figure}

We now prove a claim that assembling in the correct configuration or the chiral configuration is identical (since both configurations have the same frontier) and will therefore yield the same shape.

\begin{claim}
Every frontier location $f$ in an assembly $\alpha$ for a given configuration $c_\alpha$ has a corresponding frontier location $f'$ in $\alpha$ in the chiral configuration $c_\alpha'$, such that attaching $f$ to $\alpha$ in $c_\alpha$ produces the same assembly but in the chiral configuration of attaching $f'$ to $\alpha$ in $c_\alpha'$.
\end{claim}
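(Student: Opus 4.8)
The plan is to exploit the fact that a chiral configuration is, by definition, a reflection of the original: the embeddings obtained from $c_\alpha$ and its chiral $c_\alpha'$ are mirror images of one another. Since the FTAM explicitly permits tile types to be reflected throughout space, and since all binding conditions (complementary labels, equal strengths, matching flexibility, and normal-vector compatibility) depend only on intrinsic glue data together with relative orientation, I expect the entire frontier structure to be preserved under reflection. Concretely, let $\rho$ be the isometric reflection carrying the embedding $e_\alpha$ induced by $c_\alpha$ onto the embedding $e_\alpha'$ induced by $c_\alpha'$; such a $\rho$ exists precisely because the two embeddings are reflections of each other.

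First I would establish that $\rho$ induces a bijection on frontier locations. A frontier location $f = (t,B) \in \partial^{\mathcal{T}} c_\alpha$ picks out an empty tile location $\ell$ in $e_\alpha$ together with the set $B$ of already-placed glues a tile of type $t$ would bind to there for total strength $\geq \tau$. Applying $\rho$, the location $\ell$ maps to an empty location $\rho(\ell)$ in $e_\alpha'$ (emptiness is preserved because $\rho$ is a bijection of occupied and unoccupied locations), and each glue in $B$ maps to the correspondingly reflected glue. Because $t$ may itself be reflected, a reflected copy of $t$ placed at $\rho(\ell)$ presents glues with identical labels, strengths, and flexibility values; the required complementarities and strength sum are therefore unchanged, yielding a frontier location $f' \in \partial^{\mathcal{T}} c_\alpha'$. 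The orientation conditions survive because $\rho$ sends ``Straight'' to ``Straight'' and swaps ``Up'' with ``Down'' --- exactly the correspondence defining chirality --- so a rigid (Straight) bond stays Straight and a flexible bond stays compatible. This map $f \mapsto f'$ is inverted by $\rho^{-1}$, hence is a bijection.

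Next I would verify the assembly-level statement. Attaching $f$ to $\alpha$ yields a new assembly graph $\beta$ whose nodes are $V(\alpha) \cup \{t\}$ and whose edges are $E(\alpha)$ together with the new bonds to $B$; attaching $f'$ yields a graph with the same node and edge sets (the reflected tile is still of type $t$ and binds the same glue pairs), so both attachments produce the \emph{same} assembly $\beta$. It remains to check the configurations are chiral. Since the FTAM dynamics reconfigure $\beta$ into a configuration maximizing the number of newly formable bonds, I would observe that $\rho$ is a bijection $\mathcal{C}(\beta) \to \mathcal{C}(\beta)$ preserving validity and preserving the bond count $\#(\cdot)$ (reflection neither creates nor destroys coincident, bindable edges). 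Consequently $\rho$ maps $\mathcal{C}_{max}(\beta)$ onto itself and carries the configuration selected after attaching $f$ to a configuration selected after attaching $f'$; by construction these differ precisely by the Up/Down swap with Straight fixed, i.e. they are chiral.

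The main obstacle I anticipate is not the geometric reflection itself but the bookkeeping around the post-attachment reconfiguration step: one must argue that chirality is respected not just on the single new bond but across the global reconfiguration to a bond-maximizing configuration, and that the ``same space'' and ``contradicting bond loop'' validity conditions are themselves reflection-invariant. Establishing that $\rho$ restricts to a validity- and count-preserving involution on $\mathcal{C}(\beta)$ is the crux; once that is in hand, the bijection of frontier locations and the chirality of the resulting configurations follow immediately.
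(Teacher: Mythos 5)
Your proposal is correct and follows essentially the same route as the paper's proof: both exploit that chiral configurations yield mirror-image embeddings, that reflection preserves glue labels, strengths, flexibility values, and strength sums (tile types being closed under reflection), and that it fixes ``Straight'' while swapping ``Up''/``Down'', so frontier locations correspond bijectively and the two attachments produce the same assembly in chiral configurations. The only difference is cosmetic: the paper reflects across the specific plane containing $f$ (so that $f'$ lies on the symmetry plane and the chirality of the two post-attachment configurations is immediate), whereas you use a general reflection $\rho$ between the two embeddings and additionally verify $\mathcal{C}_{max}$-equivariance of the bond-maximizing reconfiguration step --- a point the paper leaves implicit.
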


\begin{proof}
Notice that a frontier location in the FTAM is dependent on 12 neighboring tile locations, an ``Up'', ``Straight'', and ``Down'' location for each of the 4 sides of the tile. Also remember chiral configurations of an assembly $\alpha$ produce embeddings of $\alpha$ that are the reflections of each other. Now, take any frontier location $f$ in $c_\alpha$. By reflecting an embedding of $c_\alpha$ over the plane that $f$ exists in, the 12 tile locations that make $f$ into a frontier location will still be neighboring $f$, with the ``Up'' and ``Down'' neighboring locations switching places and also reflecting, thereby keeping the same glues incident on the location of $f$. Since all the same glues are incident on the tile location, this location, which we will call $f'$, is also a frontier location in $c_\alpha'$ with the same tile type as in $c_\alpha$, even if $c_\alpha'$ includes some translation or rotation. Since the frontier locations are on the plane of symmetry that we used to get the chiral configurations, adding the tile to the assembly in either configuration will produce two configurations that are also chiral configurations of each other.
\end{proof}

Once the assembly process has finished, the terminal assembly could also flip between the correct shape in its chiral. When there is at least one plane of symmetry in the shape, then reconfiguration in the assembly process actually will not prevent the system from being deterministic. This is because the chiral of a symmetric polycube is itself. Therefore, although the system will technically make two different terminal assemblies, one can be rotated into the other, meaning that the two different terminal assemblies have the same shape by definition, making the system deterministic.

\paragraph{Multiple Edge Frames.}

Up to this point, we have assumed all the edges in a polycube are connected. However, this is not always the case. For example, anytime two pieces of a shape are connected by a set of coplanar tiles (i.e. when the face graph has a cut vertex). Shapes like this are a problem because they require multiple edge frames to build, and similar to the chirality of asymmetric shapes, additional edge frames can also have chiral reconfigurations. Therefore, disagreeing chiralities of the edge frames can configure the terminal assembly of a system into a shape that is neither the intended shape nor its chiral. In general, each additional edge frame doubles the number of configurations that the terminal assembly can exist in, only one of which (or two, if symmetric) is the desired shape. There are some exceptions to this (as discussed in
Section~\ref{sec:combating_multiple_edge_frames})
such as blocking and symmetry.

\paragraph{Summary.}

Combining the results of this section, we get the following theorem.

\begin{theorem}
A temperature two FTAM system can deterministically assemble the outline of any polycube that meets the following conditions:

\begin{enumerate}
    \item the polycube is symmetric,
    \item there are no reconfigurable vertices in the polycube, and
    \item the edges of the polycube are all connected
\end{enumerate}
\end{theorem}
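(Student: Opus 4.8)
The plan is to show that the three hypotheses of the theorem are precisely the three obstacles to determinism identified earlier in this section, and that removing them lets us assemble the outline via a single edge frame whose only possible reconfiguration is global inversion. First I would fix, for each vertex appearing in the target polycube, a set of tile types for its edge frame that are entirely unique (so that every glue is distinct and no spurious bonds can form between unrelated parts of the frame), and I would assign to each vertex the deterministic tiling protocol --- specified by a loop length and a bond sequence --- associated with its type and perspective. By hypothesis (2) no vertex of the polycube is reconfigurable (types $3$ or $7$), so every perspective that arises is among the $11$ that admit a unique protocol; hence when the assembly grows up to a vertex and attaches the loop of tiles realizing that protocol, the loop has only one valid configuration consistent with the already-placed edges. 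Condition (3), that all edges are connected, guarantees that the whole outline is covered by a single edge frame, so we never meet the cut-vertex situation that would force independent edge frames, each of which could otherwise choose its chirality independently.

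With the tile set in place, the second step is to verify that the assembly sequence is deterministic up to inversion. Starting from a seed consisting of a single convex vertex, each edge grows by the trivial edge protocol and terminates at a vertex, where the unique loop protocol forces a configuration agreeing with that of the seed. The key structural fact, argued informally above, is the cause--effect relationship: because every non-reconfigurable vertex is rigidly pinned relative to its incident edges, reorienting any single flexible bond would force every other bond to reorient as well, so the only reconfiguration available to any partial assembly is the simultaneous flip of all bonds, i.e.\ inversion to the chiral configuration. I would make this precise by induction on the assembly sequence, maintaining the invariant that each producible partial assembly has exactly two valid configurations related by inversion (equivalently, its frontier is determined up to chirality).

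The third step combines the preceding Claim with hypothesis (1). By the Claim, the frontier of a configuration $c_\alpha$ and the frontier of its chiral $c_\alpha'$ are in bijection, and attaching corresponding frontier tiles yields assemblies that remain chiral; thus the entire assembly process, run in the chiral configuration, is identical to the original up to reflection, and the set of terminal assemblies consists of a shape and its chiral only. Finally, since the polycube is symmetric, its chiral is a rotation of itself, so both terminal assemblies realize the same shape $s$; by the definition of deterministic assembly every terminal configuration therefore has shape $s$, completing the argument.

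The hard part will be the inductive invariant in the second step: one must rule out, for every partial assembly and every arrangement of already-bonded flexible edges, any locally valid reconfiguration other than global inversion --- in particular, ensuring that a freshly attached vertex loop cannot ``slip'' into an alternate valid configuration through a combination of Up/Down choices at several bonds that happens to satisfy all three validity conditions (no overlaps, no binding through the same space, no contradicting bond loops). This is exactly where the uniqueness of the loop-length and bond-sequence protocols for the $11$ non-reconfigurable perspectives does the work, so the crux is confirming that those protocols genuinely leave no valid alternative except the chiral one. The residual flexibility of the terminal assembly (which can likewise only invert) then follows from the same uniqueness together with hypothesis (1).
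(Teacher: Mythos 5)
Your proposal is correct and takes essentially the same approach as the paper: the paper likewise combines the unique loop-length/bond-sequence protocols for the $11$ non-reconfigurable perspectives, the single edge frame guaranteed by connected edges, the chiral-frontier claim, and symmetry of the polycube to conclude that the only residual reconfiguration is global inversion, which yields the same shape. The inductive invariant you identify as the crux (only inversion is possible for any partial assembly) is exactly the step the paper argues informally via the cause--effect relationship between vertices connected by an edge, so your outline faithfully reconstructs, and if anything makes more explicit, the paper's own proof.
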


\section{Utilizing Flexibility} \label{sec:utilizing_flexibility}

As discussed previously, reconfigurability may be able to provide assembly systems with interesting properties that enable diverse applications. For example, changing geometry on the surface of a synthetic structure may allow it to interact with varying other structures in a system, or contracting/expanding volumes may impact how well it can diffuse through narrow channels. With a simple extension to the base FTAM model which allows an initial terminal assembly to form, and then at a later stage the addition of a new set of tile types allows the assembly to reconfigure, an assembly's final shape can be locked in based on these additional tiles. As previously mentioned, we extend the FTAM here to allow such staged assembly as the simplest mechanism for leveraging this type of reconfigurability, but note that alternative mechanisms could also work, such as glue activation and deactivation \cite{Signals}.

\subsection{Staged Functional Surface: Maximizing the number of reconfigurations}

For our first demonstration of a construction utilizing flexibility as a tool, we present a construction which maximizes the number of rigid configurations which a flexible assembly (formed during a first stage of assembly) can be locked into, based on the number of new tile types added during a second stage of assembly.  Figure \ref{fig:staged_functional_surface} gives a high-level schematic of a simple example of such a system. (Note that we omit full details of each tile type as these components can all be easily constructed using standard aTAM techniques and techniques from Section~\ref{sec:robustness}.) It shows the inner-makings of an initial structure that can later be modified by adding new tiles types into solution. We refer to this structure as a \emph{film}. The film works by allowing the tiles in the very top layer to move freely. By adding select subsets of tile types during the second stage, prescribed tiles can be pinned up from the surface or pinned to the bottom layer of the assembly. Pinning up works by using the second layer of the film (from the bottom) to block the incoming tiles from folding down into the assembly, thereby forcing them to fold up. Pinning down works by connecting the top layer to the bottom layer of the film, forcing the tiles to fold down. The bumps formed from pinning up, also called \emph{pixels}, can be arranged into a specified geometry, or \emph{image}.
 The setup of this system is shown in Figure \ref{fig:staged_functional_surface_example}.
The eight tiles on top are used to pin up the pixels in the image. The other tiles specified on the bottom can be used to pin down the rest of the free pieces in the assembly if this is required instead.

For this system, if the side lengths of the film are $n$, note that there are $O(n^2)$ potential pixel locations, meaning that there are a maximum of $O(2^{n^2})$ possible pixel configurations (i.e. each can be either up or down in any given configuration).  To transform the flexible film into a rigid configuration with a particular set of pixels projecting upward, it is necessary to add tiles of $O(n^2)$ tile types corresponding to the up or down orientations, which is optimal as each tile type is encoded by a constant number of bits and $\log(O(2^{n^2})) = O(n^2)$ bits are necessary to uniquely identify each of the $O(2^{n^2})$ configurations.  Note that although these reconfigurations are relatively trivial, the differences in the sizes of the reconfigurable sections can be arbitrarily large without requiring more unique tile types to be added in the second stage.  This construction displays a maximum number of resulting rigid configurations from an optimal number of additional tile types in the second stage.

\vspace{-10pt}
\begin{figure}[htp]
\centering
    \includegraphics[width=0.9\textwidth]{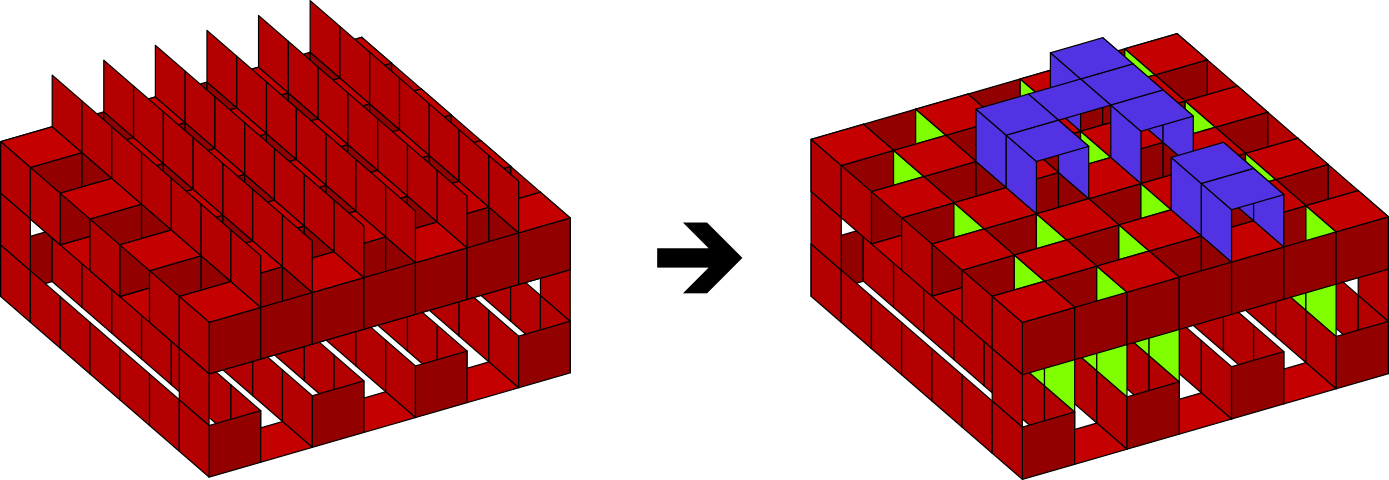}
    \caption{An example of reconfigurable shape that can be used in a staged environment to display a functional surface only after additional tiles are added.}
    \label{fig:staged_functional_surface}
\end{figure}
\vspace{-25pt}

\subsection{Compressing/expanding structures}\label{sec:compress-expand}

We now demonstrate a construction that is able to take advantage of the flexibility of bonds in the FTAM to allow a base assembly to lock into an expansive, rigid but hollow configuration given the addition of one subset of tile types in the second stage, or to instead lock into a compressed, compact and dense configuration given the addition of a different subset of tile types.

\begin{figure}[htp]
\centering
    \includegraphics[width=\textwidth]{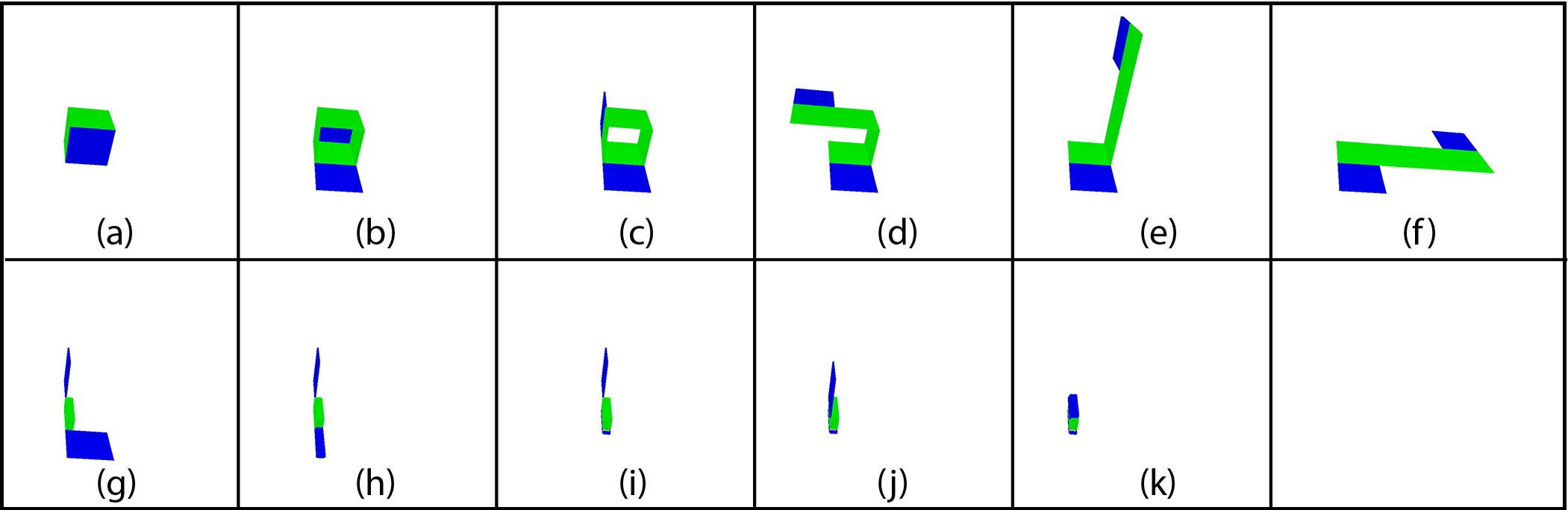}
    \caption{Series of images giving a schematic depiction of a transformation from a hollow $n \times n \times n$ cube into a compressed, approximately $n \times \sqrt{n} \times \sqrt{n}$ configuration. \vspace{10pt}}
    \label{fig:compression-animation}
\end{figure}

Figure~\ref{fig:compression-animation}(f) shows an assembly of six approximately $n \times n$ squares attached together and in a flattened ``sheet'' configuration.  Such an assembly can be efficiently self-assembled using $O(\log(n))$ tile types in the first stage. In the second stage, one of two sets of a constant number of additional tile types could be added so that either (1) the sheet folds into a hollow, volume-maximizing cube of dimensions $n \times n \times n$ (i.e. volume $n^3$).  A schematic representation of the transformation can be seen going backward from Figure~\ref{fig:compression-animation}(f) to Figure~\ref{fig:compression-animation}(a), or (2) the sheet folds into a compressed, compact ``brick'' of dimensions $O(n) \times O(\sqrt{n}) \times O(\sqrt{n})$ (i.e. volume $n^2$).  A schematic representation of the transformation can be seen going forward from Figure~\ref{fig:compression-animation}(f) to Figure~\ref{fig:compression-animation}(k).

\section{Complexity of FTAM Properties} \label{sec:complexity}
\vspace{-5pt}

In this section we consider the problem of deciding if a system produces rigid assemblies, the problem of deciding for a given assembly, if the assembly is rigid, and the problem of deciding for a given assembly, if the assembly is terminal. We first consider the problem of deciding if a system produces rigid assemblies.

\vspace{-10pt}
\subsection{Determining if a system produces a rigid assembly is uncomputable}
\vspace{-5pt}

We first show that, given an arbitrary FTAM system, determining if it produces a rigid terminal assembly is undecidable.

\begin{problem}[Rigidity-from-system]
Given an FTAM system $\mathcal{T}$, does there exist assembly $\alpha \in \termasm{\mathcal{T}}$ such that $\alpha$ is rigid?
\end{problem}

\begin{theorem} \label{thm:Rigidity}
Rigidity-from-system is undecidable.
\end{theorem}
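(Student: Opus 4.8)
The plan is to reduce from the halting problem, exploiting the fact that the aTAM is Turing universal and that the FTAM subsumes the aTAM: restricting every glue to be rigid (setting $flx = \mathit{False}$ everywhere) forces every bond into the ``Straight'' orientation, so a tile set using only rigid glues behaves exactly as in the aTAM and produces assemblies with a single valid configuration. Given a Turing machine $M$, I would build an FTAM system $\mathcal{T}_M$ at temperature two whose seed and rigid tile set deterministically simulate the computation of $M$ on empty input via a standard directed aTAM Turing-machine simulation. This entire portion of the assembly is rigid by construction.

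The only place flexibility enters is a small \emph{gadget} grown when and only when the simulation enters a halting state. At that point I would have the simulation expose a glue that initiates a short arm terminating in a single tile attached by a \emph{flexible} bond whose three neighboring locations (``Up'', ``Straight'', ``Down'') all lie in otherwise empty space. Because that dangling tile can flex freely without overlap, the resulting terminal assembly admits the three distinct valid configurations obtained by setting this lone flexible bond to Up, Down, or Straight. Since $|\mathcal{C}(\alpha)| = 3 > 2$, this assembly is flexible by definition: it fails both the $|\mathcal{C}(\alpha)| = 1$ case and the ``two chiral configurations'' case of rigidity.

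Correctness then follows by cases. If $M$ halts, the directed simulation has a unique terminal assembly, which contains the flexible gadget and is therefore flexible, so $\termasm{\mathcal{T}_M}$ contains no rigid assembly. If $M$ does not halt, the simulation grows without bound and its unique terminal assembly is the infinite assembly obtained in the limit; that assembly uses only rigid bonds, hence has exactly one valid configuration, $|\mathcal{C}(\alpha)| = 1$, and is rigid. Thus $\mathcal{T}_M$ produces a rigid terminal assembly if and only if $M$ does not halt, and a decision procedure for Rigidity-from-system would decide the complement of the halting problem, establishing undecidability.

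The main obstacles are the routine bookkeeping of a tile-assembly simulation rather than anything conceptually deep. First, I must guarantee that the simulation is \emph{directed}, so that in each case the terminal assembly is unique and I may speak of ``the'' terminal assembly; this is standard for aTAM Turing-machine constructions and is preserved here since every simulation tile uses only rigid glues. Second, I must verify the geometry of the halting gadget: the arm must be routed so that the three flex positions of the final tile avoid overlap with the rest of the assembly, form no contradicting bond loops, and in particular enable no additional bonds in any position---otherwise the bond-maximizing reconfiguration dynamics of the model could pin the gadget into a single configuration and spoil flexibility. Finally, I must confirm that the infinite non-halting assembly is admitted as a (rigid) terminal assembly under the model's ``possibly in the limit'' convention, which the definitions in Section~\ref{sec:definitions} permit.
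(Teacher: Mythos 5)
Your proposal is correct, but it runs the reduction with the opposite polarity from the paper, and the constructions differ accordingly. The paper starts from a zig-zag aTAM simulation of $M$ and makes the north/south glues \emph{flexible}, so that during simulation the assembly is an accordion-like flexible structure with ever more valid configurations; only if $M$ halts does a column of tiles with \emph{rigid} north/south glues (a ``backbone'') grow along the colinear westernmost tiles, locking every row to every other and making the terminal assembly rigid. Thus in the paper a rigid terminal assembly exists iff $M$ halts. You instead keep the entire simulation rigid (all glues rigid, hence exactly one valid configuration, behaving as in the aTAM) and introduce flexibility only upon halting, via a dangling tile on a single flexible bond with all three positions unobstructed, so a rigid terminal assembly exists iff $M$ does \emph{not} halt. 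Both directions yield undecidability, since deciding either the halting set or its complement is impossible. Your route buys a much simpler flexibility argument --- $|\mathcal{C}(\alpha)|=3$ follows immediately from one free flexible bond, with no need to reason about why a backbone rigidifies an accordion of arbitrarily many rows --- at the cost of resting the rigid case on the infinite limit assembly (which the model's ``possibly in the limit'' convention does admit, as you note) and of placing the undecidable ``yes'' answers on the non-halting side, whereas the paper's polarity aligns the existence of a rigid assembly with the semi-decidable halting event. Two small details you should make explicit: the dangling tile must be held by a flexible glue of strength at least $\tau$ (e.g., strength~$2$ at temperature~$2$) so the assembly remains $\tau$-stable, and, as you already observe, the gadget must enable no new bonds in any of the three positions lest the bond-maximizing dynamics collapse the configuration count; with those checked, your argument is sound.
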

\vspace{-5pt}

For any given Turing machine $M$, we show that $M$ can be simulated by an FTAM system that produces a single terminal rigid assembly iff $M$ halts. The full proof of Theorem~\ref{thm:Rigidity} is given in
Section~\ref{sec:complexity_appendix}. 

\vspace{-10pt}
\subsection{Determining the rigidity of an assembly is co-NP-complete}\label{sec:assembly-rigidity}
\vspace{-5pt}

Now, we look at the complexity of determining the rigidity of a given assembly. %

\begin{problem}[Rigidity-from-assembly]
Given an FTAM system $\mathcal{T}$ and assembly $\alpha \in \prodasm{\mathcal{T}}$, is $\alpha$ rigid?
\end{problem}

\begin{theorem}\label{thm:rigidity-from-assembly}
  Rigidity-from-assembly is co-NP-complete.
\end{theorem}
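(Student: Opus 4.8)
The plan is to establish membership in co-NP and co-NP-hardness separately, after first replacing the global definition of rigidity by a short, checkable witness. Write $\chi(c_\alpha)$ for the chiral of $c_\alpha$ (swap Up$\leftrightarrow$Down, fix Straight), which is valid whenever $c_\alpha$ is. I would prove the helper claim that $\alpha$ is \emph{flexible} if and only if there exist two distinct valid configurations $c_1,c_2\in\mathcal{C}(\alpha)$ that are not chiral partners. For the forward direction, if $\alpha$ is flexible then $\mathcal{C}(\alpha)$ is neither a singleton nor a chiral pair, and it is nonempty since $\alpha\in\prodasm{\mathcal{T}}$; when $|\mathcal{C}(\alpha)|=2$ the two configurations cannot be chiral (else $\alpha$ is rigid), and when $|\mathcal{C}(\alpha)|\ge 3$ the fact that $\chi$ is an involution forces some pair among any three distinct configurations to be non-chiral. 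The reverse direction is immediate from the definition of rigid.

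For membership, a configuration is a map from the at most $|E(\alpha)|$ flexible bonds into $\{\text{Up},\text{Down},\text{Straight}\}$, hence a polynomial-size object. A verifier for ``$\alpha$ is flexible'' guesses a pair $(c_1,c_2)$ and checks that (i) each $c_i$ is valid, (ii) $c_1\ne c_2$, and (iii) $c_2\ne\chi(c_1)$. The only nontrivial check is validity, which I would perform by computing the induced embedding: fix a placement for one tile and propagate placements along a spanning tree of the connected assembly graph, using the fact that a bond together with its orientation determines the neighbor's placement uniquely; then verify that no two tiles share a location, that no two coplanar bonded pairs pass through the same gap, and that every non-tree bond closes consistently (no contradicting loop). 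All steps are polynomial in the number of tiles, so ``flexible'' is in NP and Rigidity-from-assembly is in co-NP.

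For hardness I would reduce 3-SAT to the flexibility problem, so that NP-hardness of flexibility yields co-NP-hardness of rigidity. Given a 3-CNF formula $\phi$, I would build (in polynomial time) a system $\mathcal{T}_\phi$ and a producible assembly $\alpha_\phi$ whose canonical configuration $c_0$ is a single flat, all-Straight arrangement (so $c_0=\chi(c_0)$ and $\mathcal{C}(\alpha_\phi)=\{c_0\}$ is rigid by default), designed so that a valid configuration $c_1\ne c_0$ exists if and only if $\phi$ is satisfiable. The assembly would carry, on a rigid backbone built with the face and edge-frame machinery of Section~\ref{sec:robustness}, one \emph{variable gadget} per variable (a flexible arm whose only non-canonical valid orientations are Up and Down, encoding true/false) and one \emph{clause gadget} per clause. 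Each clause gadget would be a rigidly framed region into which its three incident variable arms are routed, positioned so that the three arms are forced to occupy a common tile location (a forbidden overlap) precisely when all three literals are false. A non-canonical valid configuration would then have to commit every arm to Up/Down and satisfy every clause, i.e. encode a satisfying assignment; conversely a satisfying assignment yields a valid $c_1\ne c_0$ that is non-chiral to $c_0$ because it contains folded bonds, making $\alpha_\phi$ flexible.

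The main obstacle will be the three-dimensional gadget design and the global consistency it must guarantee. I must construct clause gadgets that convert exactly the all-false literal pattern into an overlap (or a contradicting loop) while leaving the other $2^3-1$ patterns collision-free, route the variable arms through $\mathbb{Z}^3$ without unintended crossings or ``bonds through the same space,'' and, most delicately, rule out \emph{spurious} partial foldings: valid configurations that deviate from $c_0$ without encoding a full assignment, which would make $\alpha_\phi$ flexible even when $\phi$ is unsatisfiable. Enforcing an all-or-nothing commitment (for instance, a master fold whose activation geometrically forces every variable arm out of its Straight position simultaneously) and verifying that the canonical $c_0$ is genuinely reachable by the assembly process so that $\alpha_\phi\in\prodasm{\mathcal{T}_\phi}$ are the points that will require the most care.
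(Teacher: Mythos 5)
Your membership argument is sound and in fact tightens the paper's own. The paper's proof of Lemma~\ref{lem:rigidity-from-assembly-co-NP} has its verifier accept any two \emph{distinct} valid configurations $c_\alpha \neq c_\alpha'$, but by the paper's definition an assembly with exactly two valid configurations that are chiral partners is still rigid, so distinctness alone is not a correct witness of flexibility. Your extra check $c_2 \neq \chi(c_1)$, justified by the involution argument in your helper claim (among any three distinct configurations some pair is non-chiral), closes exactly this hole, and your embedding-propagation validity check matches the paper's.

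The hardness half, however, contains a genuine gap, and it sits precisely at the point you flag but do not resolve. In the FTAM a configuration is valid unless geometry forbids it: in your flat, all-Straight canonical $c_0$, the base bond of each variable arm can fold Up or Down with nothing nearby to collide with, so $\alpha_\phi$ would be flexible for \emph{every} $\phi$ unless each arm is geometrically pinned in $c_0$ --- and then your ``master fold'' must somehow simultaneously unpin all arms when activated, a mechanism you never construct. This all-or-nothing blocking is where the paper's construction spends essentially all of its effort: a ``rope'' constrains the main loop (evaluation space, two hats, rope) to exactly two states; in the trivial state the trivial assignment hat's force bumps and blocker piece pin every evaluation gadget, variable constraint gadget, and checker (Claim~\ref{clm:trivial_rigid}), while in the satisfied state the satisfying assignment hat's checkers admit a collision-free placement iff a satisfying assignment exists (Claim~\ref{clm:satisfied_flexible}), with rigid-component entanglement claims ruling out partial inversions. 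A second underspecified step is variable consistency: a single arm per variable must present blockers at several clause regions, but an arm's one base fold gives only a rigid translate of itself, so routing it through arbitrarily placed clause gadgets in $\mathbb{Z}^3$ is not obviously possible; the paper instead uses one evaluation gadget per literal \emph{occurrence} and synchronizes occurrences of the same variable via chimneys, variable constraint levels, and bridges. As written, your reduction is an architecture sketch whose central difficulty is named rather than solved, so it does not yet establish Lemma~\ref{lem:rigidity-from-assembly-NP-hard}.
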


To prove Theorem~\ref{thm:rigidity-from-assembly}, we prove the following two lemmas.

\begin{lemma}\label{lem:rigidity-from-assembly-co-NP}
  The complement of rigidity-from-assembly is in NP.
\end{lemma}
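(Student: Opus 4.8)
The plan is to exhibit, for the complementary problem (deciding whether a given producible assembly $\alpha$ is \emph{flexible}), a certificate of polynomial size together with a polynomial-time verification procedure. Recall that $\alpha$ is flexible exactly when it is not rigid, i.e. when it is \emph{not} the case that $|\mathcal{C}(\alpha)| = 1$ or that $|\mathcal{C}(\alpha)| = 2$ with the two configurations chiral. The certificate I would use is simply a pair of configurations $(c_1, c_2)$, where each $c_i$ assigns an orientation from $\{\textup{Up}, \textup{Down}, \textup{Straight}\}$ to every flexible bond of $\alpha$. Since the number of flexible bonds is at most $|E(\alpha)|$, which is polynomial in the size of $\alpha$, each configuration, and hence the whole certificate, has polynomial size.

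Next I would establish that such a certificate captures flexibility precisely. The verifier accepts $(c_1,c_2)$ iff (i) $c_1$ and $c_2$ are both valid configurations of $\alpha$, (ii) $c_1 \neq c_2$, and (iii) $c_1$ is not the chiral configuration of $c_2$. For soundness, if an accepting certificate exists then $\alpha$ cannot be rigid: condition (ii) rules out $|\mathcal{C}(\alpha)| = 1$, and if $|\mathcal{C}(\alpha)| = 2$ its two elements would be $c_1$ and $c_2$, which (iii) forbids from being chiral. For completeness, suppose $\alpha$ is flexible. If $|\mathcal{C}(\alpha)| = 2$ the two configurations are distinct and non-chiral, giving a certificate directly. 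If $|\mathcal{C}(\alpha)| \geq 3$, note that chirality determines a unique partner of each configuration (flip Up$\leftrightarrow$Down and fix Straight on every bond), so among any three distinct configurations some pair must be non-chiral: if $a,b$ were chiral and $a,c$ were chiral, then $b$ and $c$ would both equal the unique chiral partner of $a$, contradicting distinctness. Thus a certificate always exists. (The case $|\mathcal{C}(\alpha)| = 0$ does not arise, as $\alpha$ is producible and hence has at least one valid configuration.)

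It remains to check that conditions (i)--(iii) are verifiable in polynomial time. Conditions (ii) and (iii) are immediate: comparing two orientation assignments bond by bond, and checking the chirality predicate (for every flexible bond $b$, the values $c_1(b)$ and $c_2(b)$ are Up/Down opposites or both Straight), each take time linear in the number of bonds. The substantive step is verifying validity of a configuration $c_i$, which I would do by computing an embedding directly from the graph structure: fix an arbitrary initial tile and placement, then traverse $\alpha$ (say by breadth-first search over the bond edges), using the fact established in the model that a neighboring tile's placement is uniquely determined by its bond, the neighbor's placement, and the configured orientation. Each of the three validity requirements can then be checked in polynomial time: no two tiles receive the same location (condition (1)), no two bonds are routed through the same space (condition (2)), and every bond not on the spanning tree of the traversal closes consistently with the placements already assigned, which is exactly the absence of contradicting bond loops (condition (3)).

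The step I expect to be the main obstacle is making the validity check rigorous, and in particular the loop-closure test for condition (3): one must argue that a single spanning-tree traversal followed by a consistency check on every remaining (non-tree) edge correctly detects all contradicting bond loops, and that the propagation of placements is well-defined and polynomially bounded. Once this is in place, the certificate above together with the linear-time checks for (ii) and (iii) shows that the complement of Rigidity-from-assembly lies in NP.
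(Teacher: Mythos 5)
Your proposal follows the same overall route as the paper's proof: the certificate is a pair of configurations, each of polynomial size since a configuration is just an orientation assignment to the flexible bonds, and the verifier checks validity by computing an embedding from each configuration and testing for placement collisions, bonds routed through the same space, and loop closure, all in time polynomial in $|\alpha|$. The one genuine difference is your condition (iii) --- that $c_1$ is not the chiral partner of $c_2$ --- together with the completeness argument it necessitates: the chiral partner of a configuration is unique (swap Up/Down, fix Straight on every bond), so among any three distinct valid configurations some pair is non-chiral, and the case $|\mathcal{C}(\alpha)|=2$ with a non-chiral pair yields a certificate directly. The paper's verifier checks only that the two configurations are valid encodings and distinct, and then concludes non-rigidity; but under the paper's own definition of rigidity --- which deems $\alpha$ rigid when $|\mathcal{C}(\alpha)|=2$ and the two configurations are chiral versions of each other --- two distinct valid configurations do not by themselves witness flexibility, so the paper's verifier as written would wrongly accept a chiral pair arising from a rigid assembly. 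Your extra check costs only linear time (per flexible bond, test that the orientations are Up/Down opposites or both Straight) and closes exactly this gap, so your version is the more careful of the two; in every other respect the proofs are interchangeable, and your flagged concern about making the spanning-tree loop-closure test rigorous applies equally to the paper's argument, which handles it at the same level of detail you propose.
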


\begin{proof}
To illustrate this, we take an instance of the problem that contains the FTAM system $\mathcal{T}$ and assembly $\alpha \in \prodasm{\mathcal{T}}$. Our certificate in this instance will be configurations $c_\alpha$ and $c_\alpha'$. Since a configuration is simply a mapping from every flexible bond in $\alpha$ to an orientation, each configuration requires $O(|\alpha|)$ space, and thus the cerficate is polynomial in the size of $\alpha$.  To determine if the certificate is valid, and thus if $\alpha$ is flexible (and therefore not rigid), we first check that $c_\alpha$ and $c_\alpha'$ are valid encodings of a configurations, meaning they each map every flexible bond in $\alpha$ to an orientation from \{``Up'', ``Down'', ``Straight''\}. Then we must ensure that $c_\alpha'$ is different than $c_\alpha$. Both of these can be done in linear time with respect to the number of flexible bonds in the assembly. Next, we compute embeddings of $\alpha$ from $c_\alpha$ and $c_\alpha'$, taking linear time in the number of tiles in the assembly. While computing the embeddings, we simply check that no tile is assigned a placement already taken by another tile, that no bonds overlap the same space, and that every tile is adjacent to the tiles it is connected to in $\alpha$ such that their glues line up correctly. Computing the embeddings and checking these conditions takes linear time with respect to the number of tiles in the assembly. If all of these conditions are met, then both $c_\alpha$ and $c_\alpha'$ are valid configurations of $\alpha$, and therefore $\alpha$ is not rigid.  Since the certificate has polynomial size in relation to $\alpha$ and can be verified in polynomial time to show that $\alpha$ is not rigid, the problem of determining if $\alpha$ is rigid is in co-NP.
\end{proof}

\vspace{-10pt}
\begin{lemma}\label{lem:rigidity-from-assembly-NP-hard}
  The complement of rigidity-from-assembly is NP-hard.
\end{lemma}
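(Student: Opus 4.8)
My plan is to prove \textbf{Lemma~\ref{lem:rigidity-from-assembly-NP-hard}} by a polynomial-time reduction from \textsc{3-SAT} (or any convenient NP-complete satisfiability variant) to the \emph{flexibility} problem, which is exactly the complement of rigidity-from-assembly. That is, from a formula $\phi$ I would construct, in polynomial time, an FTAM system $\mathcal{T}=(T,\sigma,\tau)$ together with an assembly $\alpha\in\prodasm{\mathcal{T}}$ such that $\alpha$ is \emph{flexible} (i.e.\ $|\mathcal{C}(\alpha)|>2$, or $|\mathcal{C}(\alpha)|=2$ with the two configurations not chiral) if and only if $\phi$ is satisfiable. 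The producibility requirement $\alpha\in\prodasm{\mathcal{T}}$ can be met trivially by taking $\sigma=\alpha$ in a fixed \emph{canonical} configuration $c_0$, so that $\sigma\to_*^{\mathcal{T}}\sigma$ in zero steps; the only real obligation is then that $\alpha$ be a $\tau$-stable assembly possessing at least the one valid configuration $c_0$, which I will guarantee by construction independently of whether $\phi$ is satisfiable.

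The heart of the construction is a family of gadgets built from rigid faces (to transmit information) and flexible bonds (to encode choices). I would use a \emph{variable gadget} for each $x_i$, whose truth value is recorded by the fold state of a distinguished flexible bond, with ``Up''$=\textsf{true}$ and ``Down''$=\textsf{false}$; the geometry of the gadget forbids the ``Straight'' state from contributing a meaningful signal, so the bond behaves binarily once the gadget is ``activated.'' \emph{Wire gadgets}, consisting of chains of rigidly bonded coplanar tiles (faces), carry each literal's value from its variable gadget to the clauses in which it appears. Each \emph{clause gadget} is a small loop/junction of tiles into which the three literal wires feed: I would design the loop so that, under the model's validity rules, it closes consistently (no contradicting bond loop, no two tiles at one location, no two coplanar bonds through the same space) exactly when at least one incoming literal is in its satisfying orientation, and so that the all-false arrival forces a contradicting bond loop or a tile overlap, rendering that configuration \emph{invalid}. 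The canonical configuration $c_0$ places every gadget in a fixed reference state in which all clause loops close trivially, ensuring $c_0$ is valid regardless of $\phi$; any configuration that instead commits the variable bonds to a genuine Up/Down truth assignment is valid precisely when that assignment satisfies every clause.

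Correctness then follows in two directions. If $\phi$ is satisfiable, a satisfying assignment determines a valid configuration $c_1$; by arranging the gadget geometry so that $c_1$ differs from both $c_0$ and its chiral $\overline{c_0}$ (not a mere global reflection), we get $|\mathcal{C}(\alpha)|\ge 3$ with a non-chiral pair, so $\alpha$ is flexible. If $\phi$ is unsatisfiable, every truth-assignment configuration is invalidated by some clause gadget, leaving only $c_0$ and $\overline{c_0}$, so $\alpha$ is rigid. The reduction is polynomial since one gadget is produced per variable, literal, and clause, each of constant size. I expect the \textbf{main obstacle} to be two-fold and geometric: first, engineering the clause gadget so that the OR constraint is enforced \emph{solely} through the three validity conditions of Section~\ref{sec:definitions} (in particular realizing ``all literals false $\Rightarrow$ non-closing loop or overlap'' without introducing tile collisions in the satisfying cases), and second, the bookkeeping needed to prove there are \emph{no spurious} valid configurations---ruling out mixed or partially activated fold states and carefully tracking chirality---so that the valid configurations are exactly $c_0$, $\overline{c_0}$, and the satisfying-assignment configurations, making flexibility equivalent to satisfiability.
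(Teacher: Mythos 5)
Your high-level strategy coincides with the paper's: a polynomial reduction from 3SAT in which truth values are encoded in flexible-bond orientations, clause satisfaction is enforced through the geometric validity conditions, and a canonical configuration $c_0$ is valid regardless of $\phi$, so that $\alpha$ is flexible iff $\phi$ is satisfiable (your $\sigma=\alpha$ trick for producibility is legitimate). However, there is a genuine gap at precisely the point you defer as ``bookkeeping,'' and it is the crux of the whole proof: you need a mechanism that makes $c_0$ \emph{totally locked}. As you describe it, $c_0$ keeps every clause loop ``closed trivially'' while the variable bonds sit in a reference state; but then nothing prevents a configuration that flips one variable bond (or any subset of them) while every clause gadget simply stays in its trivially-closed reference state --- such a configuration would be valid, not chiral to $c_0$, and would exist for \emph{unsatisfiable} $\phi$ as well, breaking the reduction. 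The same problem arises for variables occurring in few clauses and for partially activated or ``Straight'' fold states. Unless every residual degree of freedom in $c_0$ is physically pinned, the assembly is flexible for trivial reasons, independent of $\phi$.

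The paper devotes most of its construction to exactly this locking problem, and it is worth seeing how: the \emph{trivial assignment hat} presses onto the evaluation space so that its Force Bumps pin every evaluation gadget and variable constraint level into the all-false position, an extra blocker pins the clause checkers of the satisfying assignment hat, and a fixed-length \emph{rope} joining the two hats (together with a displacement-vector argument) forbids both the intermediate state where neither hat is engaged and the inversion of one hat without the other; Claim~\ref{clm:trivial_rigid} then shows that in the trivial state \emph{no} other bond can move, and Claim~\ref{clm:satisfied_flexible} shows the only escape --- toggling the main loop into the satisfied state --- is geometrically possible iff each clause checker finds a True bump, i.e., iff $\phi$ is satisfiable. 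Two secondary architectural differences are also worth noting. First, you enforce the clause OR via non-closing (``contradicting'') bond loops, whereas the paper uses pure tile overlap (a checker's blocking tile collides with a popped-up evaluation gadget), which is considerably easier to verify against the validity conditions. Second, you propagate literal values along wires, whereas the paper enforces variable consistency globally through chimneys, variable constraint levels, and a bridge, proved rigid via entanglement arguments (Claims~\ref{clm:two_piece_entanglement} and~\ref{clm:vcl_rigid}). Your plan is salvageable, but only by inventing machinery of the hat/rope kind; as written, the unsatisfiable direction of your correctness argument does not go through.
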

\vspace{-5pt}

We prove Lemma~\ref{lem:rigidity-from-assembly-NP-hard} by a 3SAT reduction. In particular, we give an FTAM system, $\mathcal{T}$ say, and show how to encode a 3SAT formula as a producible assembly of $\mathcal{T}$ in a configuration, $c$ say, such that there exists a configuration $c'$ of $\alpha$ that is distinct from $c$ iff the 3SAT formula is satisfiable.
(See Section~\ref{sec:complexity} for details.)

Finally, Theorem~\ref{thm:rigidity-from-assembly} is proven by Lemmas~\ref{lem:rigidity-from-assembly-co-NP} and \ref{lem:rigidity-from-assembly-NP-hard}.

\vspace{-15pt}
\begin{figure}[htp]
\centering
    \subfloat{%
        \includegraphics[width=0.240\textwidth]{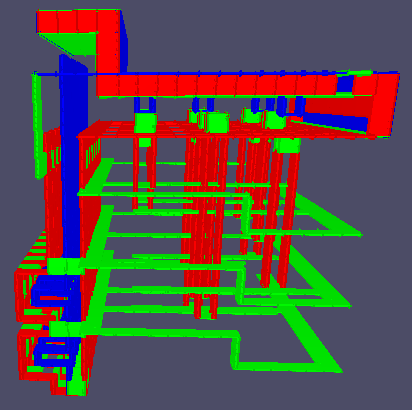}}
        \enskip
    \subfloat{%
        \includegraphics[width=0.290\textwidth]{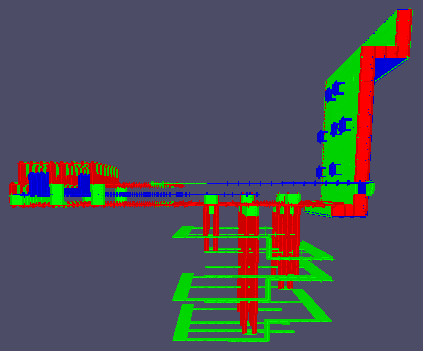}}
        \enskip
    \subfloat{%
        \includegraphics[width=0.425\textwidth]{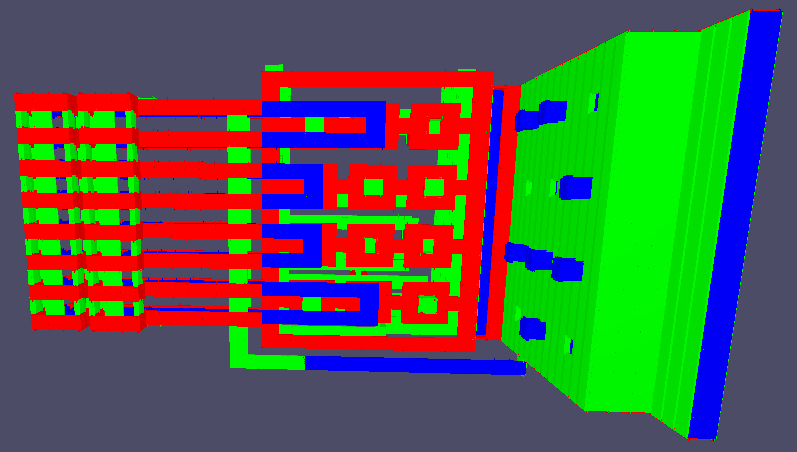}}
        \enskip
  \caption{Example assembly representing a 3SAT instance, visualized in FTAM simulator, used in the proof of Lemma~\ref{lem:rigidity-from-assembly-NP-hard}.}
  \label{fig:simulater_machine}
\end{figure}
\vspace{-10pt}

\vspace{-20pt}
\subsection{Determining the terminality of an assembly is co-NP-complete}\label{sec:assembly-terminality}

In addition to rigidity, terminality is another useful-to-know property of assemblies. Using much of the same logic from the previous result, we can prove a similar result regarding the terminality of arbitrary assemblies.%

\begin{problem}[Terminality-from-assembly]
Given an FTAM system $\mathcal{T}$ and assembly $\alpha \in \prodasm{\mathcal{T}}$, is $\alpha$ terminal?
\end{problem}

\begin{theorem}\label{thm:terminality-from-assembly}
Terminality-From-Assembly is co-NP-complete.
\end{theorem}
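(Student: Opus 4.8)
The plan is to mirror the two-lemma structure used for Theorem~\ref{thm:rigidity-from-assembly}: first establish membership in co-NP, then prove that the complement is NP-hard by a reduction adapted from the one behind Lemma~\ref{lem:rigidity-from-assembly-NP-hard}. The starting observation is that an assembly $\alpha$ is terminal exactly when $\hat{\partial}^{\mathcal{T}}\alpha = \emptyset$, i.e. when no valid configuration of $\alpha$ exposes a frontier location at which a tile of $T$ can bind with collective strength at least $\tau$. Thus the complement problem, ``is $\alpha$ non-terminal?'', asks whether there exists a valid configuration $c_\alpha$ with $\partial^{\mathcal{T}} c_\alpha \neq \emptyset$.

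For membership in co-NP I would show the complement is in NP using a certificate consisting of a single valid configuration $c_\alpha$ together with one witnessed frontier location, that is, a pair $(t,B)$ with $t \in T$ and $B$ a set of at most four tile/glue pairs. Exactly as in the proof of Lemma~\ref{lem:rigidity-from-assembly-co-NP}, $c_\alpha$ has size $O(|\alpha|)$ and can be checked to be valid in linear time by computing an embedding and verifying that no two tiles share a location, that no bonds pass through the same space, and that all bond loops close consistently. To the verifier I would add a check that the location adjacent to the tiles of $B$ in this embedding is empty and that the glues of $B$ together with tile type $t$ sum to strength at least $\tau$ in an allowable orientation; this too is polynomial. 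A valid certificate exhibits a configuration with a nonempty frontier, certifying non-terminality, so the complement is in NP and Terminality-from-assembly is in co-NP.

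For NP-hardness of the complement I would reduce from 3SAT by reusing the gadgetry of Lemma~\ref{lem:rigidity-from-assembly-NP-hard}, which for a formula $\phi$ produces a system $\mathcal{T}$ and a producible assembly $\alpha$ in a base configuration $c$ that admits a distinct valid configuration iff $\phi$ is satisfiable. I would augment $T$ with one additional \emph{reporter} tile type $r$ and a small binding pocket attached to the portion of the assembly that moves only under a satisfying assignment, engineered so that (i) in $c$ (and its chiral) no tile of $T$ can bind anywhere, so $\partial^{\mathcal{T}}c = \emptyset$, while (ii) the reconfiguration corresponding to a satisfying assignment brings two complementary glues into adjacency in a compatible orientation, creating a cooperative (temperature-two) frontier location at which $r$ attaches. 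Then $\hat{\partial}^{\mathcal{T}}\alpha \neq \emptyset$ iff some satisfying configuration of $\alpha$ exists iff $\phi$ is satisfiable, so $\alpha$ is non-terminal iff $\phi$ is satisfiable, which gives the reduction.

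The main obstacle I anticipate is designing the reporter pocket so that its binding site is present in precisely the satisfying configuration(s) and in no other valid configuration---in particular not in $c$, not in the chiral of $c$, and not in any partial or non-satisfying reconfiguration---while simultaneously guaranteeing that the augmented $\alpha$ is still producible in $\mathcal{T}$ and that adding $r$ introduces no unintended frontier locations elsewhere that would render $\alpha$ non-terminal independently of $\phi$. Handling chirality carefully (ensuring that inversion of $c$ remains attachment-free) and confirming that the extra gadget does not accidentally enlarge $\mathcal{C}(\alpha)$ in the unsatisfiable case are the delicate points; once these are pinned down, combining the two halves yields co-NP-completeness exactly as Theorem~\ref{thm:rigidity-from-assembly} follows from its two lemmas.
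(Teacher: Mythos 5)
Your proposal matches the paper's proof essentially step for step: the same two-lemma decomposition, the same certificate (a valid configuration plus one witnessed frontier location, verifiable in polynomial time) for membership of the complement in NP, and the same NP-hardness reduction that augments the 3SAT machine of Lemma~\ref{lem:rigidity-from-assembly-NP-hard} with a single added tile type that can bind if and only if the assembly can reach the satisfied state. The ``reporter pocket'' design you flag as the delicate open point is exactly what the paper pins down concretely: it adds one unique flexible glue to each of the two tiles forming the Rope--TAH bond, placed $90^\circ$ from that bond so that in the trivial state (bond ``Straight'') the glues face two \emph{distinct} adjacent locations---so the temperature-two tile cannot attach there or anywhere else, uniqueness of the glues ruling out stray frontier locations---while in the satisfied state (bond ``Down'') they face a \emph{mutual} location where the new tile binds cooperatively.
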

\vspace{-5pt}

To prove Theorem~\ref{thm:terminality-from-assembly}, we prove the following two lemmas.

\begin{lemma}\label{lem:terminality-from-assembly-co-NP}
  The complement of terminality-from-assembly is in NP.
\end{lemma}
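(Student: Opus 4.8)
The plan is to show the complement of terminality-from-assembly is in NP by exhibiting a short certificate for non-terminality together with a polynomial-time verifier, closely paralleling the argument for Lemma~\ref{lem:rigidity-from-assembly-co-NP}. First I would observe that, by the assembly dynamics, an assembly $\alpha$ is terminal exactly when no tile can be attached in any valid configuration, i.e. when $\partial^{\mathcal{T}} c_\alpha = \emptyset$ for every $c_\alpha \in \mathcal{C}(\alpha)$ (equivalently, when $\hat{\partial}^{\mathcal{T}}\alpha = \emptyset$). This is a statement with a universal quantifier over configurations, so its negation --- ``$\alpha$ is not terminal'' --- is the existential statement that there exists a valid configuration $c_\alpha$ whose frontier is nonempty. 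That existential structure is exactly what we certify.

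The certificate for non-terminality is a single configuration $c_\alpha$ together with a witnessing frontier element $(t,B) \in \partial^{\mathcal{T}} c_\alpha$, where $t \in T$ is a tile type and $B$ is the set of up to four tile/glue pairs to which $t$ would bind. Since a configuration maps each flexible bond of $\alpha$ to an orientation from \{``Up'', ``Down'', ``Straight''\}, it has size $O(|\alpha|)$, and $(t,B)$ adds only constant additional information, so the certificate is polynomial in $|\alpha| + |T|$. The verifier then does two things. First, it checks that $c_\alpha$ is a valid configuration using exactly the procedure from Lemma~\ref{lem:rigidity-from-assembly-co-NP}: it confirms that $c_\alpha$ assigns an orientation to every flexible bond, computes an embedding in linear time, and checks that no two tiles share a placement, that no bonds pass through the same space, and that there are no contradicting bond loops. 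Second, it verifies that $(t,B)$ is genuinely a frontier element of $c_\alpha$: using the embedding it locates the (empty) tile location that the pairs in $B$ surround, confirms that each glue in $B$ is complementary to the corresponding side of $t$ with matching strength, flexibility, and a permitted orientation, checks that the strengths of the glues in $B$ sum to at least $\tau$, and confirms that placing $t$ there respects the geometric restrictions of the model (the candidate tile overlaps no existing tile and its bonds do not bind through occupied space).

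Both checks run in time polynomial (indeed linear) in the number of tiles and bonds of $\alpha$ and the size of $T$: validating the configuration is linear, and checking the single frontier witness involves only the constant number of locations neighboring the target position. If both checks pass, then $c_\alpha$ is a valid configuration admitting a tile attachment, so $\alpha \to_1^{\mathcal{T}} \alpha''$ for some $\alpha'' \neq \alpha$ and $\alpha$ is not terminal; conversely, any non-terminal $\alpha$ possesses such a configuration and frontier element to serve as a certificate. Hence the complement of terminality-from-assembly is in NP, so terminality-from-assembly is in co-NP. I expect the only delicate point to be the second verification step --- correctly recognizing a frontier element --- because one must check not merely that complementary glues of total strength $\ge \tau$ meet at the location, but also that the location is \emph{empty} in the embedding and that the attachment obeys the model's prohibition on overlapping tiles and on bonds binding through the same space; but all of these are local conditions involving only the twelve neighboring locations of the target position, and are therefore checkable in constant time per candidate.
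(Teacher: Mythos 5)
Your proposal is correct and takes essentially the same approach as the paper: the certificate is a valid configuration $c_\alpha$ together with a frontier location (tile type plus binding set), and the verifier checks configuration validity exactly as in Lemma~\ref{lem:rigidity-from-assembly-co-NP} and then confirms the location is unoccupied and the adjacent glues permit a $\ge \tau$-strength attachment, all in polynomial time. Your version is somewhat more explicit about the local geometric checks, but the certificate, verifier, and complexity analysis match the paper's proof.
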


\begin{proof}
For an instance of the problem, we are given an FTAM system $\mathcal{T} = (T,\sigma,\tau)$ and assembly $\alpha$.  Our certificate in this case includes a configuration $c_\alpha$ for the assembly $\alpha$ and a frontier location $f$. Similar to in the proof of Lemma~\ref{lem:rigidity-from-assembly-co-NP}, (and since the encoding of $f$ requires space $\le |\alpha|)$ we know that the certificate is polynomial in size to $\alpha$.  Also, we can check the validity of configuration $c_\alpha$ in polynomial time. Now, we simply need to verify the frontier location $f$ (a) isn't already occupied by a tile and (b) is adjacent to tiles in $\alpha$ while it's in configuration $c_\alpha$ such that the adjacent glues allow the tile specified by $f$ to bind to $\alpha$ with bonds collectively $\ge \tau$ strength, which can be done in time $O(|T|) + O(|\alpha|)$. Since the certificate has polynomial size in relation to $\alpha$ and can be verified in polynomial time to show that $\alpha$ is not terminal, the problem of determining if $\alpha$ is terminal is in co-NP.
\end{proof}

Now, we will also show that the complement of terminality is NP-hard. A slight augmentation to the 3SAT machine assembly can be used to achieve this.

\begin{lemma}\label{lem:terminality-from-assembly-NP-hard}
  The complement of terminality-from-assembly is NP-hard.
\end{lemma}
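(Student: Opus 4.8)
The plan is to prove NP-hardness by a reduction from 3SAT, reusing and lightly augmenting the assembly constructed for Lemma~\ref{lem:rigidity-from-assembly-NP-hard}. Recall that in that construction, given a 3SAT formula $\phi$, we build an FTAM system $\mathcal{T} = (T,\sigma,\tau)$ together with a producible assembly $\alpha$ in a distinguished ``default'' configuration $c$, arranged so that $\alpha$ admits a valid configuration distinct from $c$ (and its chiral) if and only if $\phi$ is satisfiable, where the distinct configuration corresponds to folding the clause-checking gadgets according to a satisfying assignment. For the present lemma I would modify this so that a \emph{satisfying} configuration no longer merely differs from $c$, but also exposes a frontier location where a new tile can attach. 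Concretely, I attach a small gadget of $O(1)$ extra tiles and introduce $O(1)$ new tile types into $T$, positioned so that the folding induced by a satisfying assignment brings a previously inaccessible glue (or a pair of coplanar glues whose strengths sum to $\tau$) into a placement at which a tile of one of the new types can bind, while in the default configuration $c$ that same location is blocked by an existing tile, or the relevant glue points inward, so that no attachment of strength $\ge \tau$ is possible there.

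With this augmentation I would prove the biconditional: the augmented assembly $\alpha$ is \emph{not} terminal if and only if $\phi$ is satisfiable. For the forward direction, if $\phi$ is satisfiable then the corresponding satisfying configuration $c'$ is a valid configuration of $\alpha$ in which the gadget exposes the designed binding site, so $(t,B) \in \partial^{\mathcal{T}} c' \subseteq \hat\partial^{\mathcal{T}}\alpha$ for some new tile type $t$; hence $\alpha$ has a nonempty frontier and is not terminal. For the converse, if $\phi$ is unsatisfiable then, by the analysis inherited from Lemma~\ref{lem:rigidity-from-assembly-NP-hard}, the only valid configurations of $\alpha$ are the default $c$ and its chiral, and by the earlier claim that chiral configurations have corresponding frontier locations it suffices to check that $c$ exposes no binding site; since the gadget is designed so that its site is blocked precisely in $c$, we obtain $\hat\partial^{\mathcal{T}}\alpha = \emptyset$ and $\alpha$ is terminal. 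As the encoding of $\phi$ is computed in polynomial time by the construction of Lemma~\ref{lem:rigidity-from-assembly-NP-hard} and the gadget adds only constant overhead, the reduction runs in polynomial time, establishing NP-hardness of the complement of terminality-from-assembly.

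The main obstacle will be guaranteeing that the augmentation creates a frontier location in \emph{exactly} the satisfying configurations and nowhere else. I must verify three things: (i) in the default configuration $c$ (and its chiral) the designed site is genuinely blocked, so that no stray attachment of strength $\ge \tau$ is available there or anywhere else in $\alpha$; (ii) the added gadget tiles and tile types introduce no configuration-independent binding site, no new valid configuration, and no overlap or contradicting bond loop that would invalidate $c$ or $c'$; and (iii) the augmented $\alpha$ remains producible, i.e. $\alpha \in \prodasm{\mathcal{T}}$ with the new tile not yet attached, which requires exhibiting an assembly sequence from $\sigma$ that builds $\alpha$ in configuration $c$ without the bond-maximization at any intermediate step forcing the satisfying fold and the extra attachment. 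Carefully designing the blocking geometry of the gadget, together with the chirality bookkeeping, is where the bulk of the technical work lies; once these invariants are in place the equivalence follows immediately from the satisfiability characterization already established for Lemma~\ref{lem:rigidity-from-assembly-NP-hard}.
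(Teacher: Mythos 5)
Your proposal is correct and takes essentially the same approach as the paper: a 3SAT reduction that augments the machine from Lemma~\ref{lem:rigidity-from-assembly-NP-hard} with $O(1)$ extra flexible glues and one new tile type so that a cooperative strength-$\tau$ binding site exists exactly when the assembly is in the satisfied configuration. The paper instantiates your abstract gadget concretely at the Rope--TAH bond, placing the two new glues so that they face distinct adjacent locations in the trivial state but a mutual location in the satisfied state, which is precisely your ``pair of coplanar glues whose strengths sum to $\tau$'' mechanism.
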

\vspace{-5pt}

To prove Lemma~\ref{lem:terminality-from-assembly-NP-hard}, we use almost identical techniques as for the proof of Lemma~\ref{lem:rigidity-from-assembly-NP-hard}, with a slight modification to the 3SAT machine so that, if and only if the 3SAT instance is satisfiable, then there will be a valid configuration of the assembly which represents the satisfying assignment, and in that configuration - and no other valid configuration - there will be a frontier location, which means that the assembly is not terminal.

Theorem~\ref{thm:terminality-from-assembly} is proven by Lemmas~\ref{lem:terminality-from-assembly-co-NP} and \ref{lem:terminality-from-assembly-NP-hard}.
\vspace{-10pt}

\bibliographystyle{splncs04}
\bibliography{FTAM,tam}

\newpage
\newgeometry{margin={1in,1in}} %

\appendix

\section{Technical Details for Section \ref{sec:definitions}}\label{sec:definitions-append}

\begin{algorithm}[H]
\caption{A procedure to perform one step of the self-assembly process of FTAM system $\mathcal{T}$}
\label{alg:assembly-step}
\begin{algorithmic}[1]
\Procedure{\texttt{ASSEMBLY-STEP}}{$\alpha$, $\mathcal{T}$} \Comment{Takes an assembly $\alpha$ and FTAM system $\mathcal{T}$}
    \If {$|\partial^{\mathcal{T}} \alpha| = 0$}
        \State \textbf{return} $\alpha$ \Comment{No frontier locations remain, $\alpha$ is terminal}
    \Else
        \State Uniformly at random select $(t, B) \in \hat{\partial}^{\mathcal{T}} \alpha$ \Comment{Select a frontier location}
        \State Attach a tile of type $t$ with bonds to tiles in $B$,  $\alpha \to_1^{\mathcal{T}} \alpha'$ \Comment{Add a tile}
        \State Uniformly at random select $c_\alpha' \in \mathcal{C}_{max}(\alpha')$ \Comment{Find new-bond-maximizing configuration}
        \State Form all bonds possible in $c_\alpha'$ to yield $\alpha''$ \Comment{Form those bonds}
        \State \textbf{return} $\alpha''$ \Comment{Return the new assembly}
    \EndIf
\EndProcedure
\end{algorithmic}
\end{algorithm}

\begin{algorithm}[H]
\caption{A procedure to perform the self-assembly process of FTAM system $\mathcal{T}$}
\label{alg:full-assembly}
\begin{algorithmic}[1]
\Procedure{\texttt{FULL-ASSEMBLY}}{$\alpha$, $\mathcal{T}$} \Comment{Takes an assembly $\alpha$ and FTAM system $\mathcal{T}$}
    \State $\alpha' = \texttt{ASSEMBLY-STEP}(\alpha, \mathcal{T}$)
    \If {$\alpha == \alpha'$}
        \State \textbf{return} $\alpha'$
    \Else
        \State \textbf{return} $\texttt{FULL-ASSEMBLY}(\alpha', \mathcal{T})$
    \EndIf
\EndProcedure
\end{algorithmic}
\end{algorithm}

\section{Technical Details for Section \ref{sec:robustness}} \label{sec:robustness_appendix}

\subsection{Filler tiles}

First, we discuss the ``filler tiles'' that are used to fill in edge frames. On perfectly square faces, this can trivially be done, with filler tiles allowing to attach as the assembly grows. However, in cases where the face has a concave corner, a rectangular decomposition of the face with each rectangle being assigned a unique filler tile would prevent the filler tiles from overgrowing their bounds.

\subsection{Vertices and perspectives} \label{sec:verts_and_perts}

We again provide the enumeration of all $2\times2\times2$ polycubes for reference.

\begin{figure}
\centering
    \includegraphics[width=\textwidth]{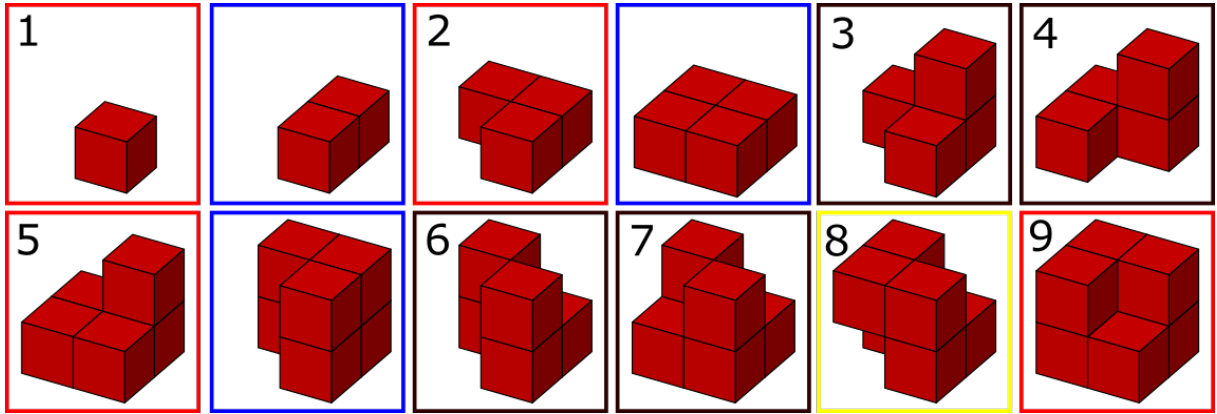}
    \caption{All possible polycubes that can fit inside of a $2\times2\times2$ space, and furthermore, all possible vertex types that could exist on a polycube.}
    \label{fig:possible_outline_vertices_appendix}
\end{figure}

We categorize the types of vertices into two groups, \emph{simple} and \emph{complex}. In the enumeration in Figure \ref{fig:possible_outline_vertices}, the polycubes in the blue squares actually don't have a vertex in the center. The vertices in the polycubes in red (1,2,5,9) have three edges and three faces incident on the center point, creating what we refer to as a \emph{simple} vertex. Of these 4 vertices, 1 and 9 are the same vertex type, which we will refer to as a \emph{convex} vertex, and 2 and 5 are the same vertex type, which we will refer to as a \emph{concave} vertex. The vertices in the polycubes in brown (3,4,6,7) have more than three edges and more than three faces incident on the center point, creating what we refer to as a \emph{complex} vertex. All of these complex vertices are unique, and we will refer to them by their number. The polycube in yellow (8) is a special case in which there are more than three edges and more than three faces incident on the center point, but the polycube is arranged in a way that the center point can be thought of as two different simple convex vertices, one for each location that is missing a cube.

A vertex can be \emph{symmetric}, meaning all edges have the same perspective, \emph{semi-symmetric}, meaning some edges have the same perspective, or \emph{asymmetric}, meaning no edges have the same perspective. Semi-symmetric and asymmetric vertices have to differentiate between the different perspectives the vertex can exist in. The simple convex vertex and vertex 3 are both symmetric, meaning they have only one perspective each. Vertex 4 and 7 are semi-symmetric, with vertex 4 having 2 perspectives (even though it has 4 edges) and vertex 7 having 3 perspectives (even though it has 6 edges). The simple concave vertex and vertex 6 and asymmetric, with the simple concave vertex having 3 perspectives (and 3 edges) and the vertex 6 having 5 perspectives (and 5 edges). An example of different perspectives can be seen in the difference between Figure \ref{fig:concave_vertex} and Figure \ref{fig:combined_vertex}. All together, there are 15 different perspective, meaning we need 15 different tiling protocols to handle all situations.

We will now look at the loop lengths and bond sequences of different vertices. For bond sequences, we will use the notation $(b,b, ... ,b)$ where $b \in \{R,F\}$ and $R$ stands for rigid and $F$ stands for flexible. The first bond in the sequence will represent the edge assembling up to the vertex in the assembly process, and will therefore always be flexible. The other bonds in the sequence will represent the other bonds within the loop of tiles incident on the vertex, following the ordering set by the loop in either direction, without loss of generality. The set of all bond sequences for a single vertex is a non-repeating cyclic permutation group, minus the elements that begin with a rigid bond instead of a flexible bond.

We start with the simple vertices. First is the simple convex vertex. It has a loop length of 3 tiles and a bond sequence $(F,F,F)$. It is symmetric, meaning we don't have to differentiate between the edges. Therefore, to make a simple convex vertex, an edge will just initiate a protocol where the last two tiles of the edge end in flexible glues and another tile that matches both glues attaches to complete the loop. The next vertex is the simple concave vertex. It has a loop length of 5 and can have a bond sequence of either $(F,R,R,F,F)$, $(F,F,R,R,F)$, or $(F,F,F,R,R)$. Using each bond orientation will result in a different perspective, meaning all 3 perspectives can be deterministically assembled. Therefore, attaching the loop of tiles with the first bond sequence will yield the vertex in \ref{fig:combined_vertex}, the second bond sequence will yield the vertex in \ref{fig:concave_vertex}, and the third sequence will yield the mirror opposite of the vertex in \ref{fig:combined_vertex}.

The most unique complex vertex is vertex 6. It has a loop length of 7 and can have a bond sequence of either $(F,R,F,R,F,F,F)$, $(F,F,R,F,R,F,F)$, $(F,F,F,R,F,R,F)$, $(F,F,F,F,R,F,R)$, or $(F,R,F,F,F,R,F)$. As with the simple concave vertex, each bond sequences results in a different perspective, allowing the system to differentiate between the 5 different perspectives of vertex 6. Similarly, vertex 4 also has its own unique combinations. It has a loop length of 6 and can have a bond sequence of $(F,R,F,F,R,F)$ or $(F,F,R,F,F,R)$. The two bond orientations correspond to the two perspectives of vertex 4, allowing both to be deterministically assembled.

Vertex 3 and 7, however, share a combination of a loop length and bond sequence. Both have a loop length of 6 and a bond sequence of $(F,F,F,F,F,F)$. Because of this, attaching a loop of 6 tiles using all flexible bonds at the end of an edge can result in either vertex. In addition, since flexible bonds can ``mimic'' rigid bonds using a ``Straight'' orientation, the loop of tiles can even configure into vertex 4. Note that the reverse is mitigated by the fact vertex 4 has rigid bonds and no bonds in vertex 3 or 7 are ``Straight''. All together, it can end up in the one perspective from vertex 3, one of the two perspectives from vertex 4, or one of the three perspective from vertex 7. Given all these possibilities, vertex 3 and 7 cannot be deterministically assembled.

\begin{figure}[htp]
\centering
    \includegraphics[width=\textwidth]{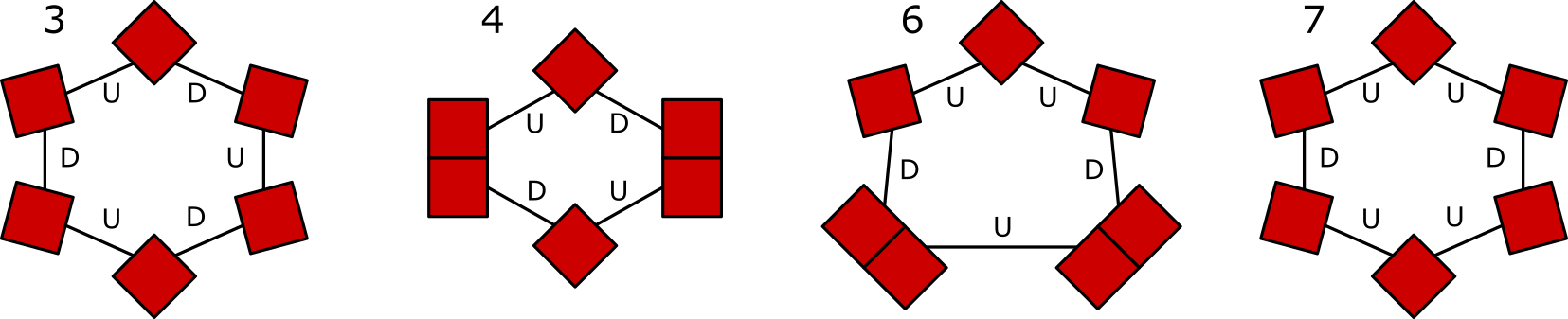}
    
    \caption{The loops of tiles that form each complex vertex.}
    \label{fig:complex_vetex_loops}
\end{figure}

\subsection{Edge growth protocol} \label{sec:edge_protocol}

When attempting to deterministically assemble outlines of polycubes in the FTAM, edges in the edge frame can be grown using the following protocol. Side 1 and side 2 refer to the two columns of tiles on the two faces that make up the edge. The hinge refers to the series of flexible bonds between tiles on side~1 and tiles on side~2. \begin{enumerate}
    \item An exposed rigid double strength glue on side 1 of the edge will attach a new tile $t$ on side 1,
    \item A flexible glue on tile $t$ on side 1 and an exposed rigid glue on side 2 will cooperate to attach a new tile $t'$ on side 2 of the edge,
    \item A rigid double strength glue on tile $t'$ on side 2 of the edge will attach a new tile $t''$ on side 2, and
    \item A flexible glue on tile $t''$ on side 2 and a rigid glue on tile $t$ on side 1 will cooperate to attach a new tile $t'''$ on side 1 of the edge
\end{enumerate} An edge can grow indefinitely by repeating this process using unique glues to grow up to a certain length. Notice that each new tile attaches using at least one rigid bond, meaning that, additional flexibility cannot be added to the edge past the flexibility of the hinge. Furthermore, there will only ever be one frontier location (per configuration, if multiple, but these are the same tiles and bonds) on the edge at any assembly step, leaving no room for non-determinism.

\subsection{Combating multiple edge frames} \label{sec:combating_multiple_edge_frames}

The two methods of combating multiple edge frames mentioned in the main section are blocking and symmetry. Blocking refers to when the faces surrounded by one edge frame would collide with the faces of another if the chiralities of the edge frames disagreed. This is actually the case in the example given in Figure \ref{fig:dumbbell_shape}. In these situations, even if the additional edge frames are configured to the wrong chirality during the assembly process, eventually the tiles with the potential to collide will be added to the assembly and force the correct chirality of both edge frames with respect to each other. In Figure \ref{fig:dumbbell_path}, you can see on the right how the inversion of the additional edge frame would cause the collision of tiles in the assembly. In this example, the yellow tile on the left of Figure \ref{fig:dumbbell_path} would force the correct relative chirality.

One other aspect of a shape with multiple edge frames that may reduce the number of possible configurations it can exist in is, like with full shapes, symmetry of the edge frames. To utilize the same example, imagine if the shape in Figure \ref{fig:dumbbell_shape} did not contain the yellow tile on left end or the symmetric tile on the right end. In this case, the ends would be free to reconfigure into the wrong chirality. However, this doesn't result in 4 different shapes that the mismatching chirality of 2 additional edge frames should produce. Instead, inversion of the left end of the assembly and not the right yields the same shape as inversion of the right end of the assembly and not the left, resulting in only 3 different possible shapes.

\begin{figure}[ht]
\centering
    \subfloat[][The full dumbbell shape, a narrow connection piece and two end pieces]{%
        \label{fig:dumbbell_shape}%
        \includegraphics[width=0.7\textwidth]{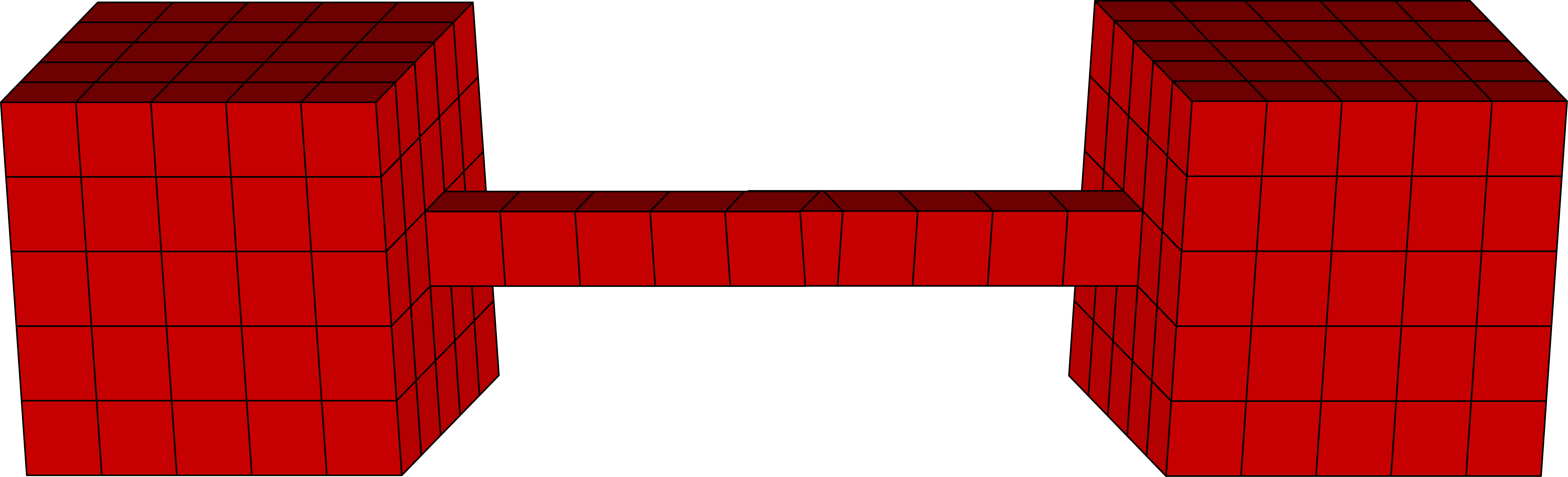}
        }%
        \quad
  \subfloat[][An outline of the edges in the shape, showing 3 different edge frames]{%
        \label{fig:dumbbell_edges}%
        \includegraphics[width=0.7\textwidth]{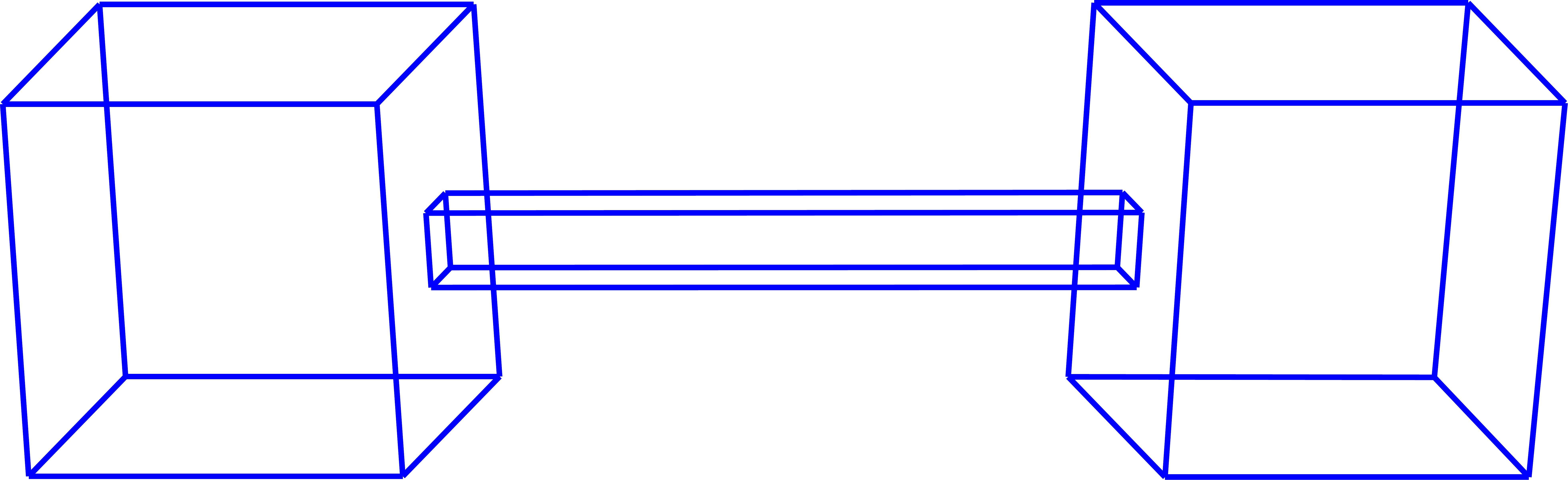}
        }%
        \quad
  \subfloat[][The rights shows how an inversion of an end piece would cause a collision. The left shows the tile that needs to be placed to force the correct orientation of the end pieces (the yellow tile).]{%
        \label{fig:dumbbell_path}%
        \includegraphics[width=0.7\textwidth]{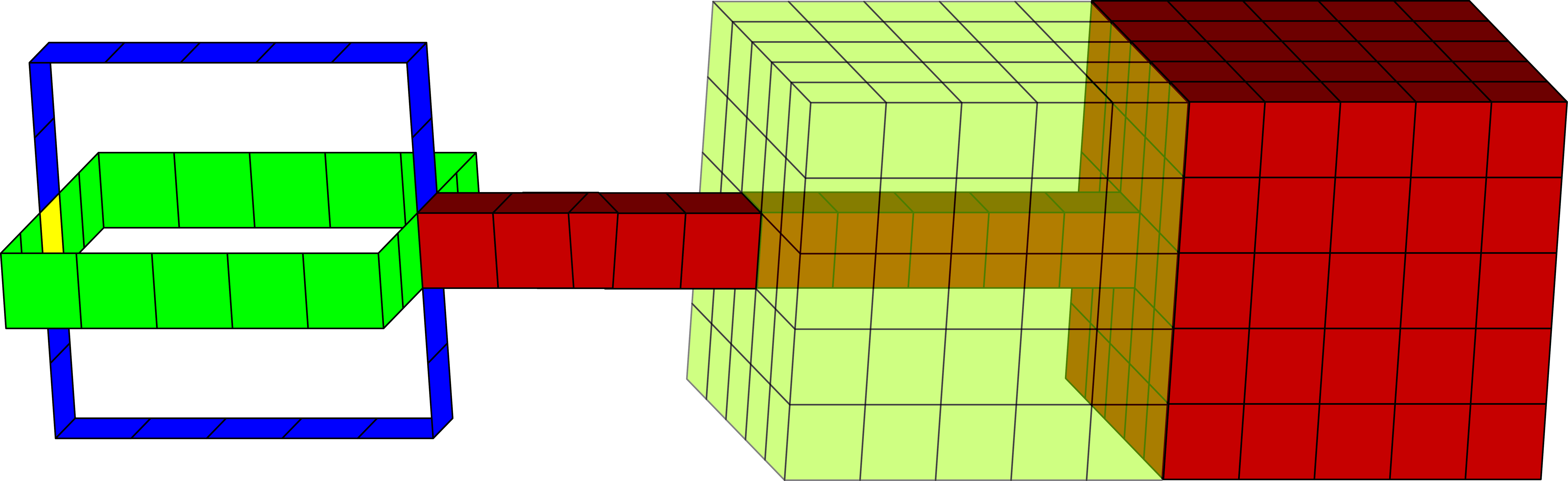}
        }%
        \quad
  \caption{Example of a shape with multiple edge frames}
  \label{fig:dumbbell}
\end{figure}

\section{Technical Details for Section \ref{sec:utilizing_flexibility}} \label{sec:utilizing_flexibility_appendix}

\subsection{Staged Functional Surface: Maximizing the number of reconfigurations}

For this section, we simply provide another visual to illustrate how secondary stages could be designed to make certain functional surfaces. This visual is provided in Figure \ref{fig:staged_functional_surface_example}, while Figure \ref{fig:staged_functional_surface_appendix} was included again for the reader's reference.

\begin{figure}[htp]
\centering
    \includegraphics[width=0.9\textwidth]{images/staged_functional_surface.png}
    \caption{An example of reconfigurable shape that can be used in a staged environment to display a functional surface only after additional tiles are added to the system.}
    \label{fig:staged_functional_surface_appendix}
\end{figure}

\begin{figure}[htp]
\centering
    \includegraphics[width=\textwidth]{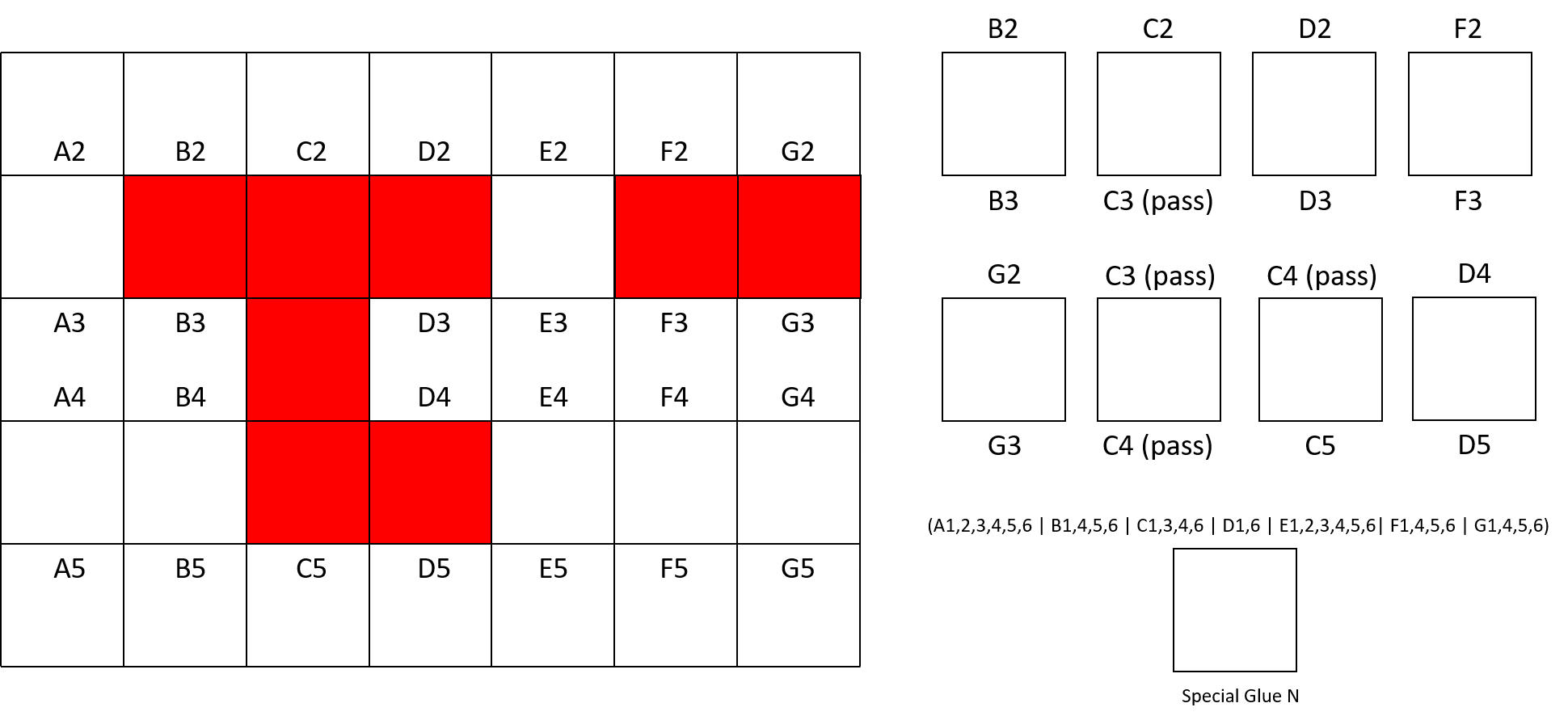}
    \caption{An example of a system that could be used to match pixels with tiles. Here, the tiles are shown that are used to make the image on the right of Figure \ref{fig:staged_functional_surface_appendix}. The tiles in the bottom row on the right can be added to pin the unused pixel location down into the surface of the assembly.}
    \label{fig:staged_functional_surface_example}
\end{figure}

\subsection{Compressing/expanding structures}\label{sec:compress-expand=append}

In this section we provide technical details about the construction in Section~\ref{sec:compress-expand}.

Let $\mathcal{T}$ be a staged FTAM system which is capable of self-assembling assembly $\alpha$ which can be thought of as a set of $6$ ``sheets'' that are each approximately $n \times n$ squares of tiles (which can be seen laid out flat in Figure~\ref{fig:compression-animation}f and in more detail in Figure~\ref{fig:unfolded-labels} where $\alpha$ contains each $n \times n$ square minus the outer perimeter strips which are colored grey). The ordering of growth of the squares follows the path of the arrow shown in Figure~\ref{fig:unfolded-labels}, and the squares are built using standard aTAM techniques to efficiently build $n \times n$ squares using $O(\log(n))$ tile types \cite{RotWin00}.

For the second stage, if the expanded cube is desired, tile types are added which attach in counterclockwise order along the perimeter to form one additional row of tiles and fill out the squares to size $n \times n$ (starting from the topmost left corner as shown in Figure~\ref{fig:unfolded-labels}) so that the glues of the outside perimeter on the sides labeled in Figure~\ref{fig:unfolded-labels} are flexible strength-1 glues with the labels shown.  These perimeter tiles also have rigid glues between each other (except at the boundaries between squares) which will provide the rigid frame of the cube (while the other glues between tiles in the interior of the squares are flexible).  A constant number of unique tile types are sufficient to tile these perimeter locations in this second stage and thus form the rigid cube (shown in Figure~\ref{fig:compression-animation}a).

If, instead, the compressed structure is desired, for the second stage the tiles added to form the perimeter create an alternating pattern of glues and grow $2 \times 2$ \emph{tabs} as shown in Figure~\ref{fig:wrinkle-labels}, which forces them to fold into the wrinkled pattern shown in Figure~\ref{fig:compressed}.  The tabs form ``caps'' which stably bind the structure. Note that for clarity, in Figure~\ref{fig:wrinkle-labels}, the labeled locations are given unique labels to make it easier to see which locations will bind, but due to the dynamics of the FTAM, as a square self-assembles, the perimeter glues will force it to fold into the desired pattern even if a constant-sized set of glue types is repeatedly used. This is because the ordering of perimeter growth is fixed (by cooperative growth) and the fact that the tabs are $2 \times 2$ squares allows enforcement of the growth ordering so that each tab is completed before perimeter growth proceeds beyond it. Since the squares fold into approximately $\sqrt{n}$ length wrinkled layers, the constant-sized pattern must repeat each $O(\sqrt{n})$ distance, which requires $O(\log(\sqrt{n}))$ unique tile types which can be embedded as part of the counter tile types which form the original squares (and thus the tile complexity remains at $O(\log(n))$, causing marker glues to be positioned with correct periodicity in case the tile types for forming the compressed structure are added.  Otherwise, the tile types which force formation of the expanded cube still attach to those glues, but they don't propagate this pattern to the perimeter.  Therefore, only a constant number of unique tile types are required for the second stage to form this configuration as well.

\begin{figure}[htp]
\centering
    \includegraphics[width=3.0in]{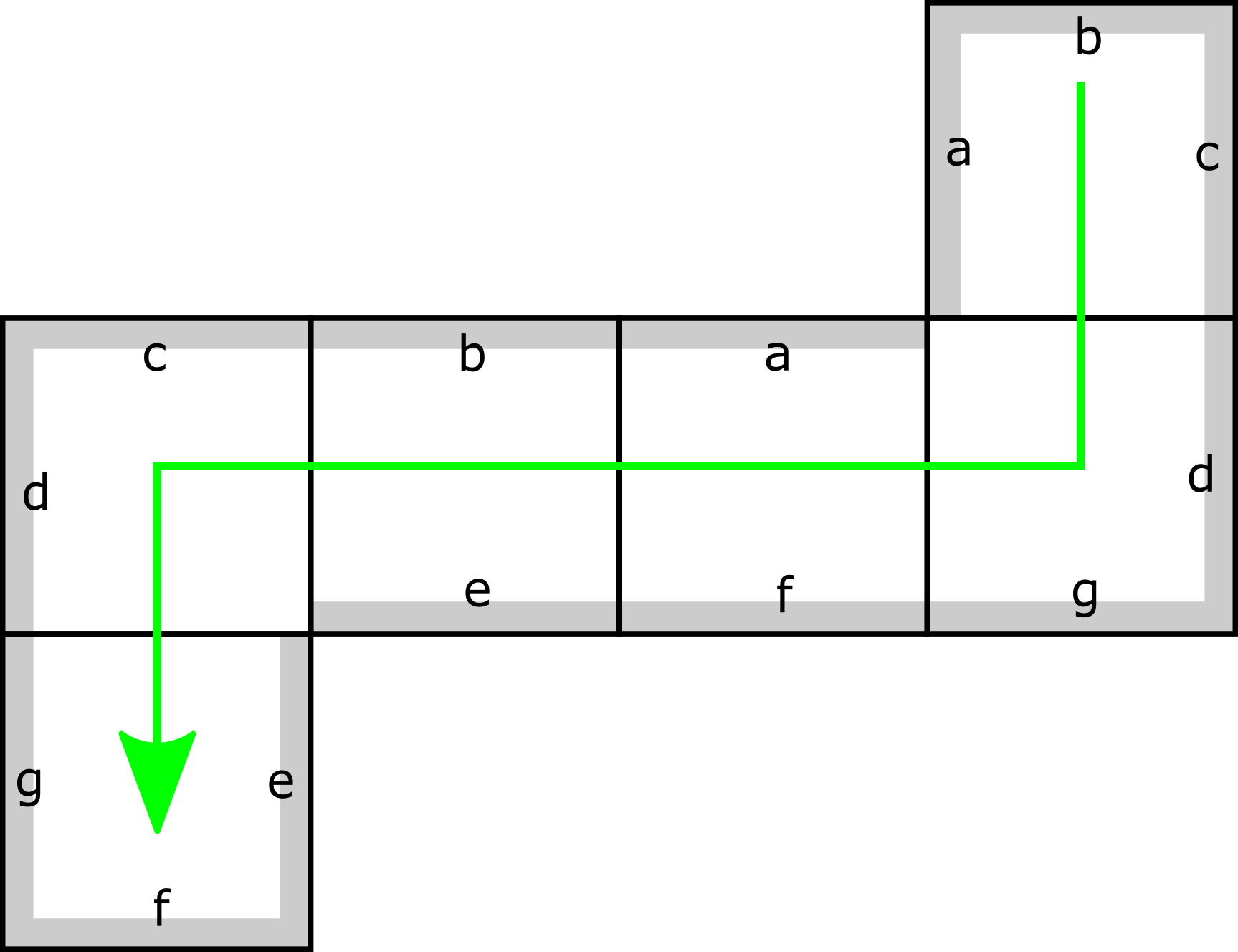}
    \caption{Label types for glues on each side.}
    \label{fig:unfolded-labels}
\end{figure}

\begin{figure}[htp]
\centering
    \includegraphics[width=\textwidth]{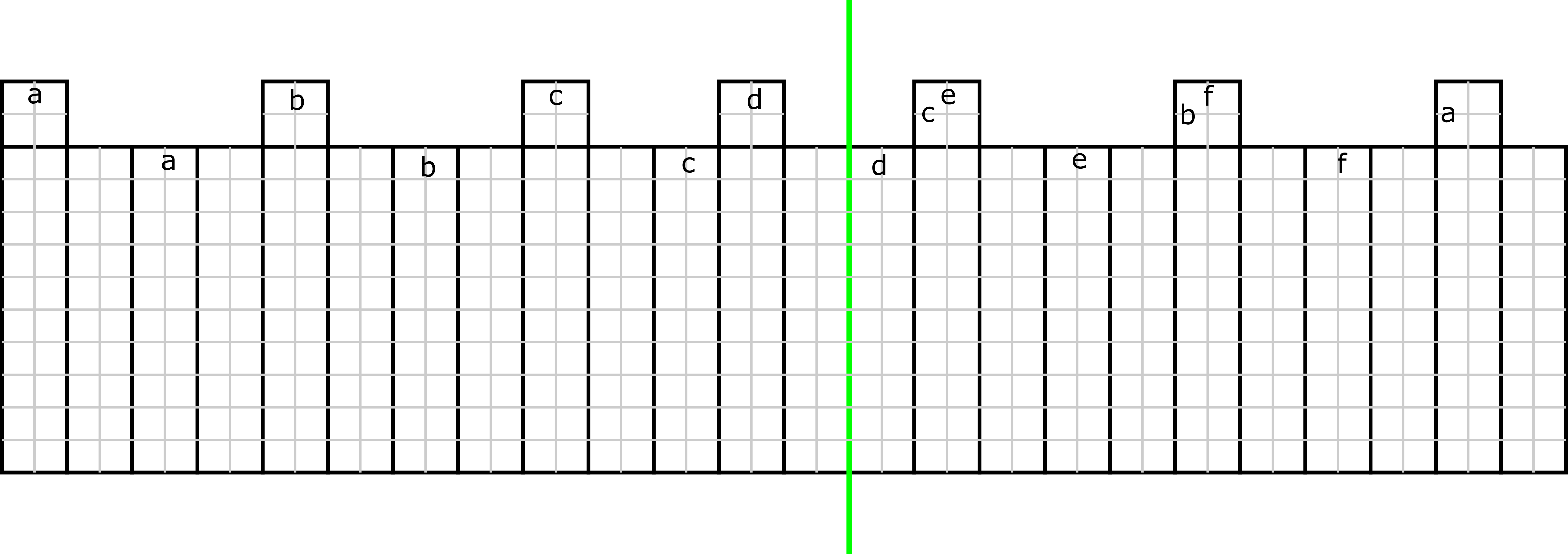}
    \caption{Schematic depiction of a portion of the perimeter of a square where additional tiles attach, forming tabs, and glue labels are exposed which are needed to make a portion of a sheet fold into two ``wrinkled'' layers. The green line shows the boundary between two separate wrinkled layers.  The left side will fold to make a wrinkled layered from left-to-right, and the right side will fold underneath to form a second wrinkled layer from right-to-left, also binding to the layer above it to form a stable structure.}
    \label{fig:wrinkle-labels}
\end{figure}

\begin{figure}[H]
\hfill
\includegraphics[width=1.0in]{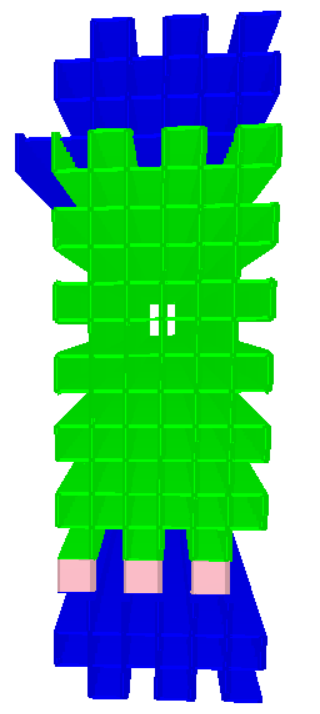}
\hfill
\includegraphics[width=1.0in]{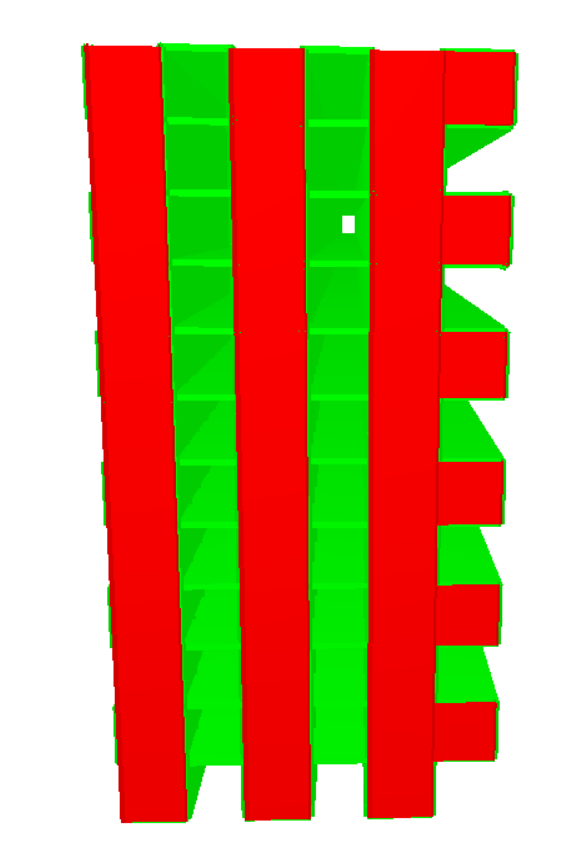}
\hfill
\,
\caption{(Left) The compressed configuration of the tiles which can also form a cube (shown without end~``caps''). (Right) A portion of the compressed configuration with end ``caps'', which fold from the tabs, included.}\label{fig:compressed}
\end{figure}

\section{Technical Details for Section \ref{sec:complexity}} \label{sec:complexity_appendix}

\subsection{Rigidity-from-system is uncomputable: Technical details}

Here we provide an illustration in Figure~\ref{fig:turing_machine} of an embedding of our construction at three different points.

\begin{figure}[htp]
\centering
  \subfloat{%
        \label{fig:flexible_turing_machine}%
        \includegraphics[width=0.4\textwidth]{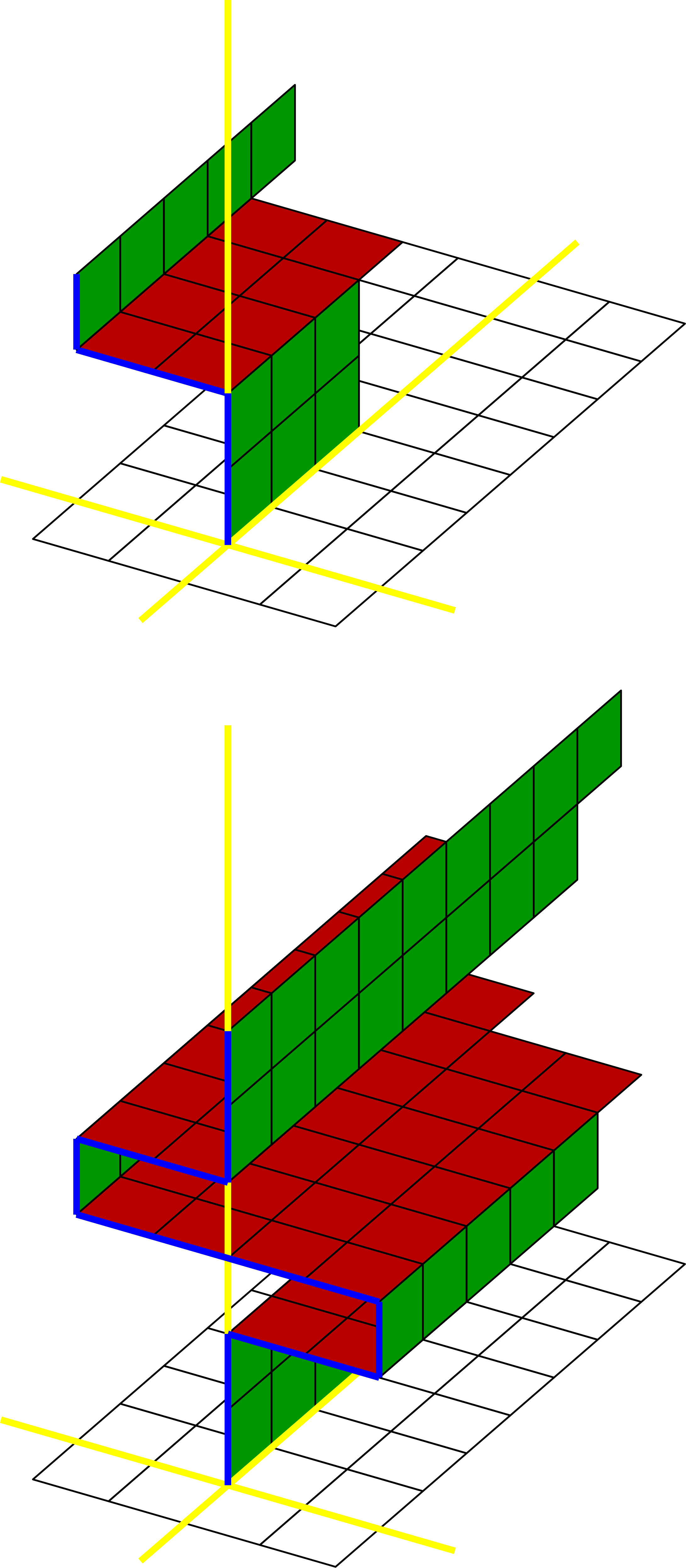}
        }%
  \subfloat{%
        \label{fig:rigid_turing_machine}%
        \includegraphics[width=0.4\textwidth]{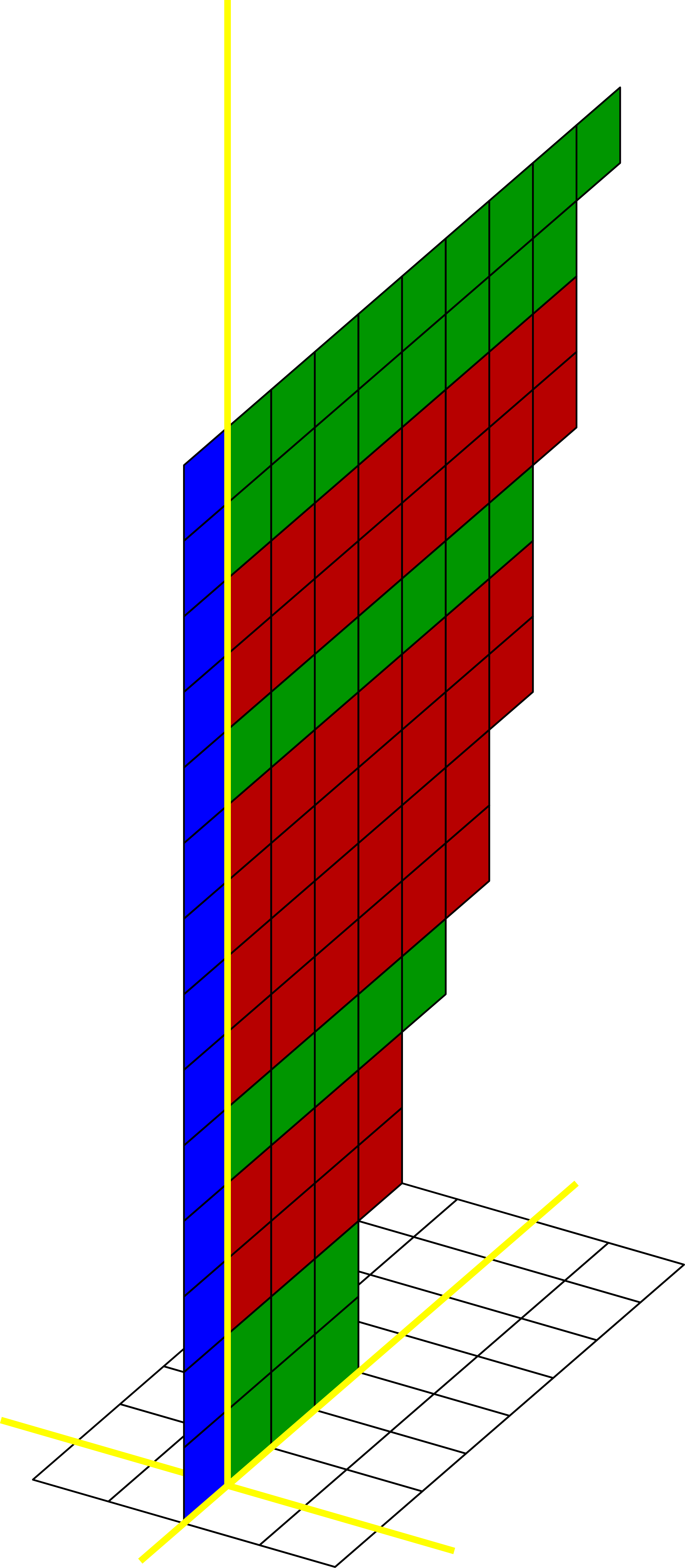}
        }%
        \quad
  \caption{(a) As the Turing machine is being simulated, the assembly is grown with flexible bonds, allowing every row to be flexible relative to one another and the entire assembly to fold back and forth like an accordion. (b) If the Turing machine ever halts, a ``backbone'' of tiles connected through rigid bonds is grown on the side of the assembly, making the entire structure rigid.}
  \label{fig:turing_machine}
\end{figure}

First, we consider a general structure of a commonly used type of aTAM  tile  assembly  system  for  simulating  the  behavior  of  Turing  machines. A zig-zag aTAM system is one which grows in a strict row-by-row ordering.  More specifically, the first row grows either left-to-right or right-to-left, completely, at which point the second row begins growth in the opposite direction.  When it completes, the third grows, again in reversed direction, and so on. Given a Turing machine $M$, $M$ can be simulated by a temperature-$2$ zig-zag aTAM system, $\mathcal{P}$ say, such that if $M$ halts,  a final ``halting'' tile attaches to the westernmost column in the northernmost row of the terminal assembly of $\mathcal{P}$. One can also show that $M$ can be simulated by a zig-zag system which produces a single terminal assembly such that the westernmost tiles of this assembly (possibly including the halting tile) are colinear.   Moreover, such an aTAM system gives rise to an FTAM system, $\calT$ say, where the tile types of $\calT$ are identical to the tile types of the aTAM system and all glues are rigid. 
To show Theorem~\ref{thm:Rigidity}, we consider the FTAM system $\calT'$ that is obtained from $\calT$ by (1) modifying the glues on north and south edges so that they are flexible, and (2) adding appropriate glues to tile types and tiles to the tile set of $\calT$ so that if $M$ halts,  tiles of these types initially bind to the west edge of the halting tile (via an added strength-$2$ glue) of the terminal assembly of $\calT'$, and then cooperatively bind one at a time along the west edges of the westernmost tiles of this terminal assembly (via strength-$1$ glues added to these west edges and north/south glues of the tiles of the additional tile types) to form a single tile wide column of tiles each of which is bound to a westernmost tile in the terminal assembly of $\calT'$.  Moreover, the north and south glues of the tile types that bind to form the column of westernmost tiles are rigid. We call such a column of tiles a ``backbone''. Then, as the east/west glues of tiles belonging to any row of tiles in the terminal assembly of $\calT'$ are rigid, and since a backbone of tiles self-assembles iff $M$ halts, we see that the terminal assembly of $\calT'$ is rigid iff $M$ halts. In other words, we have a system such that for any terminal assembly $\alpha$, $\alpha$ is rigid iff $M$ halts. See Figure~\ref{fig:turing_machine} for an intuitive description of self-assembly in $\calT'$ for a simulation of a machine that halts. This suffices to show Theorem~\ref{thm:Rigidity}.

\subsection{Rigidity-from-assembly is co-NP-complete: Technical details} \label{sec:assembly-rigidity-append}

Here we provide technical details for the proof of Lemma~\ref{lem:rigidity-from-assembly-NP-hard}.

To prove Lemma~\ref{lem:rigidity-from-assembly-NP-hard}, we will reduce 3SAT to the complement of the rigidity-from-assembly problem. For our reduction, we introduce a new construction that we will subsequently refer to as the \emph{3SAT machine}. This is a computable assembly that is made up of four gadgets that connect together and will be able to reconfigure if and only if the corresponding 3SAT formula has a satisfying assignment.

\begin{figure}[htp]
\centering
    \includegraphics[width=0.70\textwidth]{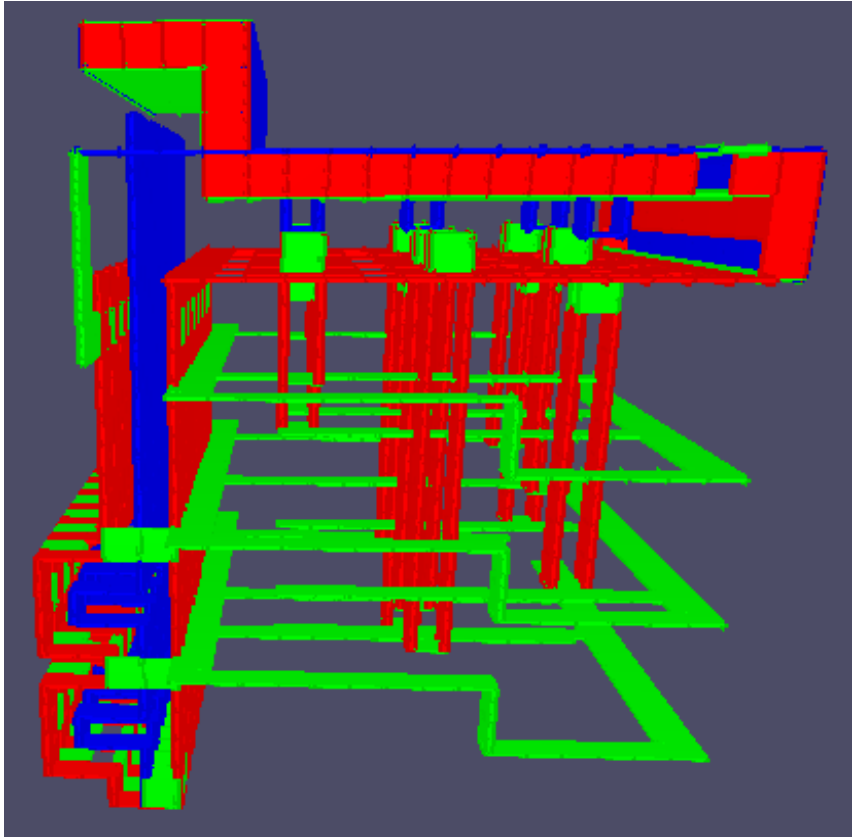}
    \caption{The trivial state that the assembly can exist in, as shown in the simulator}
    \label{fig:trivial_in_simulator}
\end{figure}

\begin{figure}[htp]
\centering
    \includegraphics[width=0.80\textwidth]{images/simulater_satisfied_side.png}
    \caption{The satisfied state, as shown in the simulator from the $-Y$ perspective}
    \label{fig:satisfied_in_simulator_side}
\end{figure}

\begin{figure}[htp]
\centering
    \includegraphics[width=1.00\textwidth]{images/simulater_satisfied_top.png}
    \caption{The satisfied state, as shown in the simulator from the $+Z$ perspective}
    \label{fig:satisfied_in_simulator_top}
\end{figure}

\begin{figure}[htp]
\centering
    \subfloat{%
        \label{fig:trivial_state}%
        \includegraphics[width=0.2\textwidth]{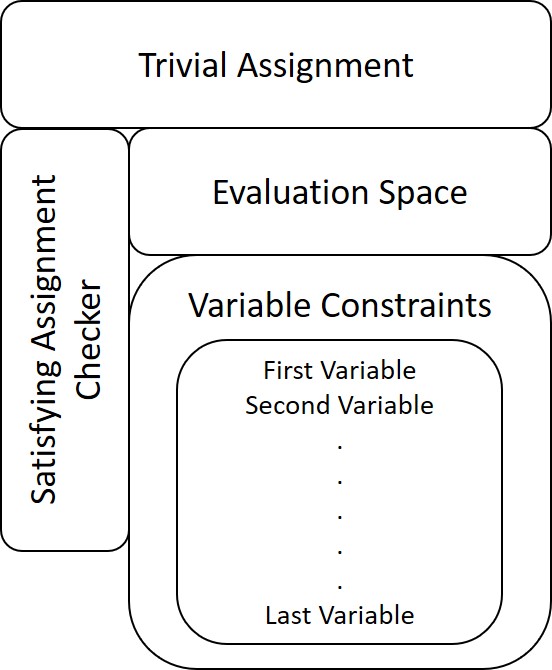}
        }
    \enskip
    \subfloat{%
        \label{fig:satisfying_state}%
        \includegraphics[width=0.35\textwidth]{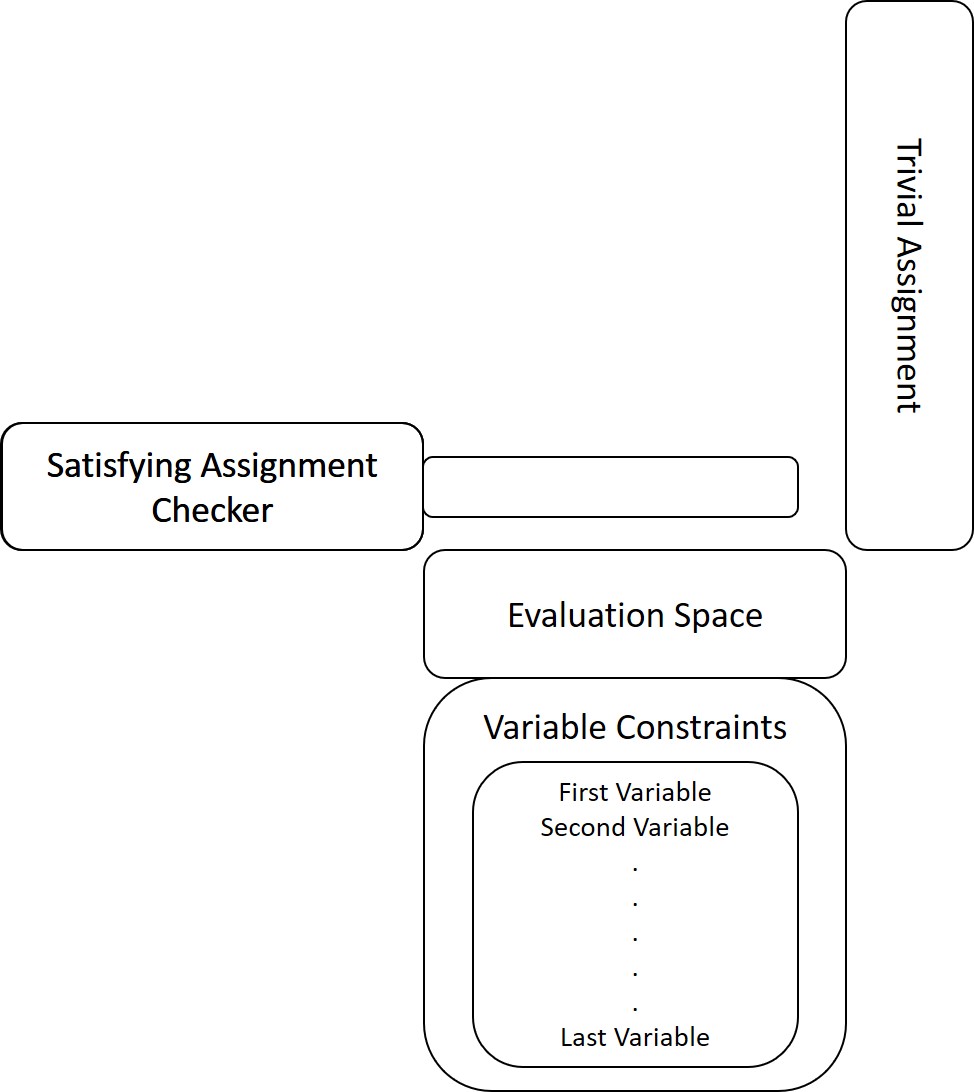}
        }
    \subfloat{%
        \label{fig:state_compass}%
        \includegraphics[width=0.18\textwidth]{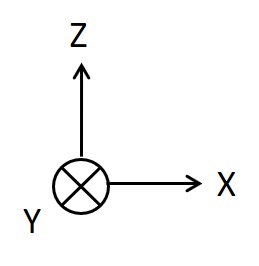}
        }
    \enskip
    \subfloat{%
        \label{fig:invalid_state}%
        \includegraphics[width=0.22\textwidth]{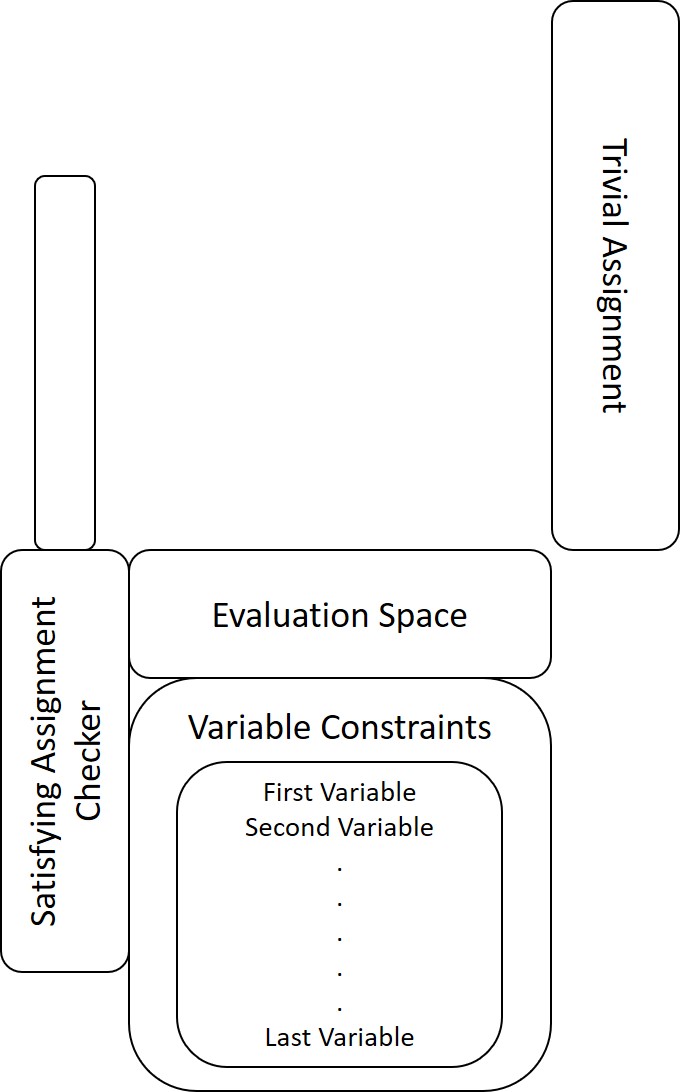}
        }
    \enskip
  \caption{Mock-up of (a) the initial state, (b) the alternate state, (c) the axes, and (d) the invalid state.}
  \label{fig:machine}
\end{figure}

Before we get into the details, we define terms and prove lemmas that we will use as tools in the overall proof. We define \emph{entanglement} to be the cause-effect relationship of connections within one set of faces in which the inversion of one connection causes the inversion of all other connections. In other words, a set of faces that have been entangled form a \emph{rigid component}, i.e. a subassembly that cannot reconfigure without fully inverting into a chiral configuration.

We define a \emph{traditional 4-sided loop}, as a cycle of faces in which each face is bound on opposite ends to two other faces. In other words, both pairs of faces that are opposite of each other will exist in planes in the same orientation $\{XY, YZ, ZX\}$. Notice that a traditional 4-sided loop will always be a rigid component. The four flexible bonds between the four faces that make up the loop will always have to be either a $\{U,U,U,U\}$ configuration or a $\{D,D,D,D\}$ configuration.

\begin{claim}\label{clm:one_piece_not_inverted}
    Given an assembly $\alpha$ and a configuration $c_\alpha$, for any rigid component $comp_{rigid}$ in $\alpha$, if $\alpha$ has another valid, non-trivial configuration $c_\alpha'$ that it can exist in, then there must also exist another valid, non-trivial configuration $c_\alpha''$, where $c_\alpha'$ may or may not equal $c_\alpha''$, in which the rigid component $comp_{rigid}$ is not inverted.
\end{claim}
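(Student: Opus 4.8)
The plan is to exploit two facts: that a rigid component has exactly two admissible internal states, and that global inversion (the chiral map) is a validity-preserving involution on configurations. First I would isolate the structural content of the definition of a rigid component. Restricted to the flexible bonds lying inside $comp_{rigid}$, every valid configuration of $\alpha$ must agree with $c_\alpha$ in one of exactly two ways: either it matches $c_\alpha$ on every internal bond (call this \emph{not inverted}), or it is obtained from $c_\alpha$'s internal assignment by the chiral flip that swaps Up$\leftrightarrow$Down and fixes Straight (call this \emph{inverted}). There is no third possibility, because by definition $comp_{rigid}$ cannot reconfigure internally except by fully inverting into its chiral. Thus the internal state of $comp_{rigid}$ in any valid configuration is a single bit, and the state in $c_\alpha$ serves as the reference ``not inverted'' value.

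Next I would split on the internal state of $comp_{rigid}$ in the hypothesized non-trivial configuration $c_\alpha'$. If $comp_{rigid}$ is already not inverted in $c_\alpha'$, I simply take $c_\alpha'' = c_\alpha'$ and there is nothing to do (this is the ``may equal'' case of the statement). The substantive case is when $comp_{rigid}$ is inverted in $c_\alpha'$; here I take $c_\alpha''$ to be the global chiral configuration $\overline{c_\alpha'}$ of $c_\alpha'$, i.e. the configuration obtained by flipping \emph{every} flexible bond of $c_\alpha'$.

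For this choice I would verify three things. (i) \emph{Validity}: the embedding induced by $\overline{c_\alpha'}$ is the reflection of an embedding of $c_\alpha'$, and a reflection preserves non-overlap of tiles, non-intersection of bonded surfaces, and loop closure, which are precisely the three validity conditions; hence $\overline{c_\alpha'}$ is valid. (ii) \emph{Rigid component restored}: global inversion flips every flexible bond, in particular every internal bond of $comp_{rigid}$, so it carries the inverted internal state of $c_\alpha'$ back to the not-inverted state, matching $c_\alpha$; thus $comp_{rigid}$ is not inverted in $c_\alpha''$. (iii) \emph{Non-triviality}: since inversion is an involution, $\overline{c_\alpha'} = c_\alpha$ would force $c_\alpha' = \overline{c_\alpha}$, and $\overline{c_\alpha'} = \overline{c_\alpha}$ would force $c_\alpha' = c_\alpha$, both of which are excluded because $c_\alpha'$ is non-trivial (distinct from $c_\alpha$ and from its chiral $\overline{c_\alpha}$). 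Therefore $c_\alpha''$ is distinct from both $c_\alpha$ and $\overline{c_\alpha}$, so it is again a valid, non-trivial configuration.

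I expect the main obstacle to be the non-triviality bookkeeping in step (iii): the claim's notion of ``non-trivial'' must be pinned down as ``not equal to $c_\alpha$ and not equal to its inversion $\overline{c_\alpha}$,'' and the argument hinges on the involutivity of the chiral map to transfer this property from $c_\alpha'$ to $\overline{c_\alpha'}$. The only other point requiring care is justifying that restricting an arbitrary valid global configuration to $comp_{rigid}$ really does land in one of just two internal states; this is the definitional heart of ``rigid component,'' so I would lean directly on the entanglement definition rather than re-deriving it.
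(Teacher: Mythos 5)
Your proposal is correct and follows essentially the same route as the paper's (very terse) proof: case-split on whether $comp_{rigid}$ is inverted in $c_\alpha'$, taking $c_\alpha'' = c_\alpha'$ in the first case and $c_\alpha'' = \texttt{inverse}(c_\alpha')$ (the global chiral configuration) in the second. Your explicit verification of validity under reflection and the involution-based non-triviality bookkeeping in step (iii) merely spells out details the paper leaves implicit.
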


\begin{proof}
    Assume that in $c_\alpha'$, $comp_{rigid}$ is not inverted. Then $c_\alpha' = c_\alpha''$. However, if we assume that in $c_\alpha'$, $comp_{rigid}$ is inverted, then $c_\alpha' = inverse(c_\alpha'')$.
\end{proof}

\begin{claim}\label{clm:one_piece_rigid}
    As a corollary of Claim~\ref{clm:one_piece_not_inverted}, for any rigid component $comp_{rigid}$ in $\alpha$, if $\alpha$ in $c_\alpha$ does not have another valid configuration with $comp_{rigid}$ in the same orientation, $\alpha$ itself is a rigid component.
\end{claim}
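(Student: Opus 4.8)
The plan is to derive this essentially as the contrapositive of Claim~\ref{clm:one_piece_not_inverted}, so that no new machinery is required beyond carefully unwinding the definition of a \emph{rigid component}. Recall that a rigid component is an (sub)assembly whose only possible reconfiguration is the full chiral inversion; equivalently, $\alpha$ is a rigid component precisely when it admits no \emph{non-trivial} valid configuration, where a configuration is non-trivial when it is neither $c_\alpha$ nor its full chiral $\text{inverse}(c_\alpha)$. The hypothesis of the present claim is that, apart from $c_\alpha$ itself, $\alpha$ has no valid configuration leaving $comp_{rigid}$ in the same orientation (i.e.\ un-inverted). So the strategy is to assume $\alpha$ is \emph{not} rigid and produce exactly such a forbidden configuration, contradicting the hypothesis.

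Concretely, I would argue by contradiction. Suppose $\alpha$ is not a rigid component. By the characterization above, $\alpha$ then possesses some valid, non-trivial configuration, which is exactly the premise of Claim~\ref{clm:one_piece_not_inverted}. Invoking that claim produces a valid, non-trivial configuration $c_\alpha''$ in which $comp_{rigid}$ is \emph{not} inverted, i.e.\ $comp_{rigid}$ occupies the same orientation as in $c_\alpha$. Since $c_\alpha''$ is non-trivial, it is in particular distinct from $c_\alpha$. Hence $c_\alpha''$ is an \emph{other} valid configuration of $\alpha$ in which $comp_{rigid}$ keeps the same orientation, which directly contradicts the stated hypothesis. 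Therefore $\alpha$ must itself be a rigid component, as claimed.

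The only point demanding care — and thus the main (albeit modest) obstacle — is keeping three notions rigorously aligned: ``non-trivial,'' ``distinct from $c_\alpha$,'' and ``$comp_{rigid}$ not inverted'' (equivalently, same orientation). In particular I must confirm that the full chiral $\text{inverse}(c_\alpha)$ does invert $comp_{rigid}$ (it does, since a full inversion reflects every bond, including all of those inside $comp_{rigid}$), so that the chiral configuration does not itself vacuously violate the hypothesis and the word ``another'' is used consistently. I also need to verify that the configuration $c_\alpha''$ handed back by Claim~\ref{clm:one_piece_not_inverted} genuinely lands in the ``other configuration with $comp_{rigid}$ un-inverted'' class rather than collapsing to $c_\alpha$; its non-triviality guarantees this. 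Once these identifications are pinned down, the corollary follows immediately from Claim~\ref{clm:one_piece_not_inverted}.
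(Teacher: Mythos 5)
Your proposal is correct and matches the paper's intent: the paper states this claim as an immediate corollary of Claim~\ref{clm:one_piece_not_inverted} with no separate proof, and your contrapositive unwinding is exactly the implied argument. Your added care in checking that the chiral configuration $\mathrm{inverse}(c_\alpha)$ does invert $comp_{rigid}$ (so it cannot vacuously violate the hypothesis) is a sound and worthwhile clarification, not a deviation.
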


These claims give us a powerful tool in proving that pieces of an assembly are rigid components, since it allows us to assume one smaller rigid component will not reorient and then examine if any other pieces can reorient in relation to it.

\begin{claim}\label{clm:two_piece_entanglement}
    If two rigid components $comp_{rigid}^1$ and $comp_{rigid}^2$ share two faces $p_1$ and $p_2$ that are not coplanar, then $comp_{rigid}^1$ and $comp_{rigid}^2$ can be entangled into one rigid component.
\end{claim}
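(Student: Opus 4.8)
The plan is to apply Claim~\ref{clm:one_piece_rigid} to the subassembly $comp_{rigid}^1 \cup comp_{rigid}^2$, taking $comp_{rigid}^1$ as the distinguished rigid component. It then suffices to show that this union has no valid configuration, other than its original one, in which $comp_{rigid}^1$ keeps its orientation: by Claim~\ref{clm:one_piece_rigid} the union is then itself a rigid component, which is exactly what it means for the two to be entangled. Fixing the orientation of $comp_{rigid}^1$ fixes the placements of all of its faces, in particular pinning the shared faces $p_1$ and $p_2$ to their original placements. Since $comp_{rigid}^2$ is also a rigid component, by definition it can occupy only two internal states, its original configuration or its fully inverted (chiral) one, so the only candidate alternative with $comp_{rigid}^1$ held fixed is the ``mixed'' state in which $comp_{rigid}^2$ is inverted while $comp_{rigid}^1$ is not. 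The whole argument reduces to excluding this mixed state.

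To do so I would first record the relevant geometric invariant: inverting $comp_{rigid}^2$ replaces the relative orientation it imposes on the pair $(p_1,p_2)$ by the mirror image of that relative orientation, since the embedding of a chiral configuration is a reflection of the original (this is the definition of inversion). In the mixed state, $comp_{rigid}^1$ forces $(p_1,p_2)$ into their original relative orientation while $comp_{rigid}^2$ forces them into the mirror-image relative orientation, so a consistent embedding exists only if these two relative orientations coincide. The crux is therefore to show that, because $p_1$ and $p_2$ are \emph{not coplanar}, their relative orientation is \emph{chiral}, i.e. unequal to its own mirror image, so no consistent embedding of the mixed state exists. Concretely I would phrase this as the nonexistence of an orientation-reversing isometry fixing both placements at once: such an isometry would have to equal the reflection across the plane through $p_1$ spanned by its normal and north vectors, and simultaneously equal the analogous reflection for $p_2$, and non-coplanarity is what is meant to keep these two reflection planes from coinciding.

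Once the mixed state is excluded, the only valid configurations of $comp_{rigid}^1 \cup comp_{rigid}^2$ are the original one and the global inversion, and in both the two components share the same orientation; equivalently, inverting any connection of one forces the inversion of the other, which is precisely entanglement. I would close by invoking Claim~\ref{clm:one_piece_rigid} with $comp_{rigid}^1$ as the rigid component, using Claim~\ref{clm:one_piece_not_inverted} to justify that, when testing for an alternative non-trivial configuration, we may assume without loss of generality that $comp_{rigid}^1$ is uninverted.

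The main obstacle I anticipate is making the chirality step fully rigorous for \emph{every} relative placement of two non-coplanar faces, rather than only for the generic perpendicular ``folded'' case that motivates the claim. The delicate sub-cases are when the shared faces are parallel, or perpendicular but sharing a common east--west axis, where the two candidate reflection planes can be parallel and the relative orientation threatens to be achiral. I would handle these by showing that any offset of $p_1$ relative to $p_2$ along their common east--west direction already drives the two reflection planes apart (restoring chirality), and that the remaining exactly-aligned positions are ruled out by the validity conditions on configurations, namely that two tiles may not occupy the same location and that bonds may not pass through the same space, so that the reflected copy of $comp_{rigid}^2$ either collides with $comp_{rigid}^1$ or fails to yield a configuration distinct from the original.
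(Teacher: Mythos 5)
Your overall skeleton matches the paper's proof: reduce to excluding the ``mixed'' state in which one component inverts while the other stays fixed (your use of Claims~\ref{clm:one_piece_not_inverted} and \ref{clm:one_piece_rigid} for this reduction is fine), and then kill the mixed state by showing that no orientation-reversing isometry can fix both shared faces at once. But the crux step is misidentified. A relative inversion of $comp_{rigid}^2$ against a fixed $comp_{rigid}^1$ must fix each shared face \emph{geometrically} --- every tile square together with the positions of all four of its sides, since those sides carry bonds into the fixed component. An isometry fixing a tile fixes four non-collinear points, hence fixes the face's plane pointwise, and the only orientation-reversing isometry of $\mathbb{R}^3$ fixing a plane pointwise is the reflection across that plane. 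So the unique candidate isometry is the reflection across the plane \emph{containing} $p_1$ --- this is exactly the paper's ``inversion happens over $p_1$,'' i.e.\ flipping the component over the face's plane like a page --- and not, as you assert, the reflection across the perpendicular plane through $p_1$ spanned by its normal and north vectors. Your reflection does not fix $p_1$ at all: it exchanges the east and west sides of its tiles, which would misplace the bonds those sides have into $comp_{rigid}^1$. With the correct characterization the claim is immediate: the fixing isometry would have to be simultaneously the reflection across the plane of $p_1$ and the reflection across the plane of $p_2$, forcing the two planes to coincide and contradicting non-coplanarity. No sub-cases arise.

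The ``delicate sub-cases'' you anticipate (parallel faces, or perpendicular faces sharing a common east--west axis) are artifacts of the wrong reflection plane, and your patch for them does not hold up. Two parallel shared faces are non-coplanar and must be covered by the claim, yet your two candidate normal--north planes can literally coincide there (e.g.\ two horizontal single-tile faces with equal normals and north vectors, one directly above the other), so your criterion fails to exclude the mixed state in precisely a case the claim needs; and the fallback assertion that ``the reflected copy of $comp_{rigid}^2$ either collides with $comp_{rigid}^1$ or fails to yield a distinct configuration'' is not forced by the hypotheses --- nothing about two abstract rigid components guarantees an overlap, and the actual impossibility in these cases comes from bond/glue consistency at the shared faces, not from collisions. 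Replacing your identification of the fixing isometry by the reflection across each face's own plane both closes the gap and recovers the paper's (much shorter) two-case argument.
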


\begin{proof}
    Proof by contradiction. Assume $comp_{rigid}^1$ can reorient relative to $comp_{rigid}^2$. Since $comp_{rigid}^1$ is rigid in and of itself, the only way it can reorient is to invert. Now, assume the inversion happens over $p_1$. In this case, $p_2$ must now be flipped over the plane that $p_1$ exists in. Since $p_2$ is also part of $comp_{rigid}^2$ and is in a new location, $comp_{rigid}^2$ must also have reoriented to avoid breaking bonds with $p_2$, the only way to do which would be inverting. Now, in the case that the inversion happens over another face, the plane that contains that face cannot contain both $p_1$ and $p_2$ (since they are not coplanar by the claim) and therefore at least one of the shared faces ($p_1$ or $p_2$) will be in a new location. This means that $comp_{rigid}^2$ still had to reorient. Since both of these cases lead to contradictions, $comp_{rigid}^1$ and $comp_{rigid}^2$ must be part of the same rigid component.
\end{proof}

This provides us another powerful tool, showing that two shared non-coplanar faces is all that is required to entangle two rigid components into one.

\begin{figure}
\centering
\includegraphics[width=\textwidth]{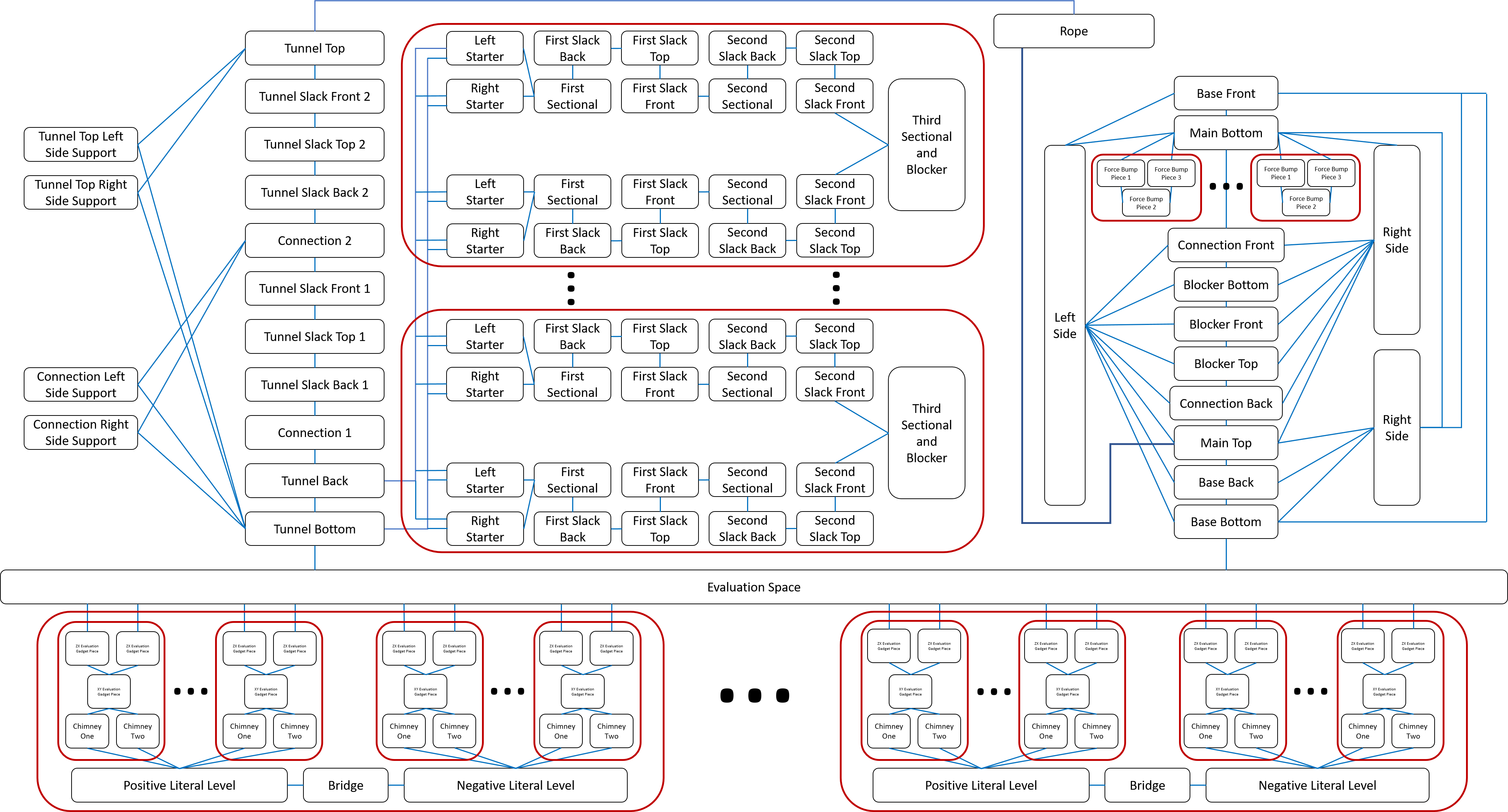}
    \caption{The full layout of the 3SAT machine. Each box highlights a section of the assembly that is duplicated with respect to the size of the corresponding 3SAT formula.}
    \label{fig:machine_layout}
\end{figure}

Now that we have some tools to use in our proof, we give a proof overview. Figure~\ref{fig:machine_layout} shows the machine assembly broken down into individual components in the form of a face graph. The highlighted pieces indicated the components that are duplicated as the corresponding 3SAT problem has more clauses and more variables. The first step in the proof will be to take some of these pieces and entangle them together into the major rigid components of our construction. These include the evaluation space, variable constraint gadgets, trival assignment hat, and satisfying assignment hat. Figure~\ref{fig:sat_machine_rigid_components} shows how these rigid components will be made, while Figure~\ref{fig:sat_machine_reduced} shows what the reduced layout will look like when considering these larger pieces. Finally, we will show that the pieces interact in such a way that the rigidity of the machine gives a valid reduction from the 3SAT problem.

\begin{figure}[htp]
\centering
    \subfloat[][The layout of the machine with all faces included. The faces that are in the same box will be entangled together into rigid components.]{%
        \label{fig:sat_machine_rigid_components}%
        \includegraphics[width=\textwidth]{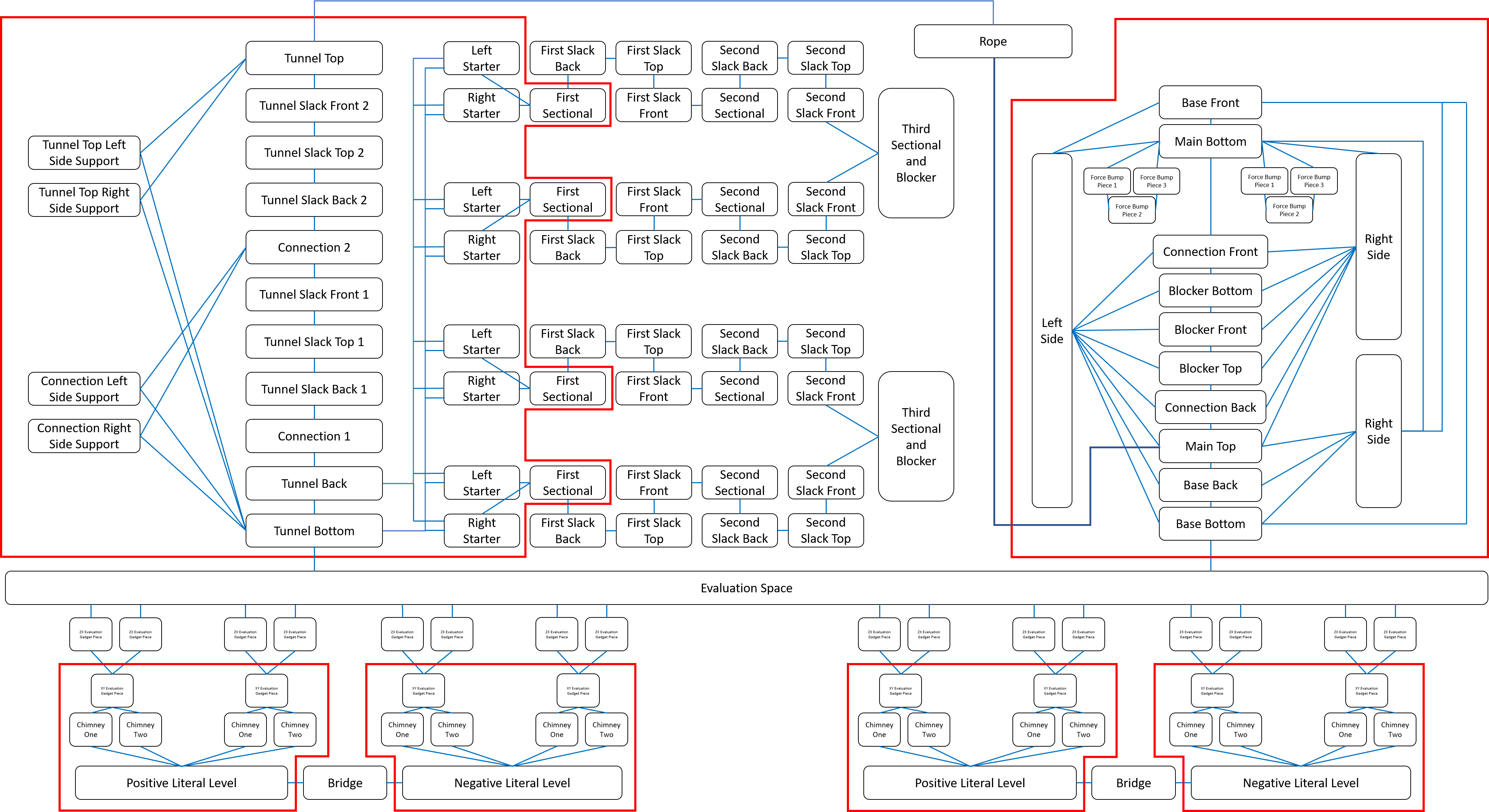}
        }
        \quad
    \subfloat[][The layout when considering the rigid components as one piece. The remaining flexible sections are the only ones that can reorient and only in the case that the corresponding 3SAT instance has a satisfying assignment.]{%
        \label{fig:sat_machine_reduced}%
        \includegraphics[width=0.75\textwidth]{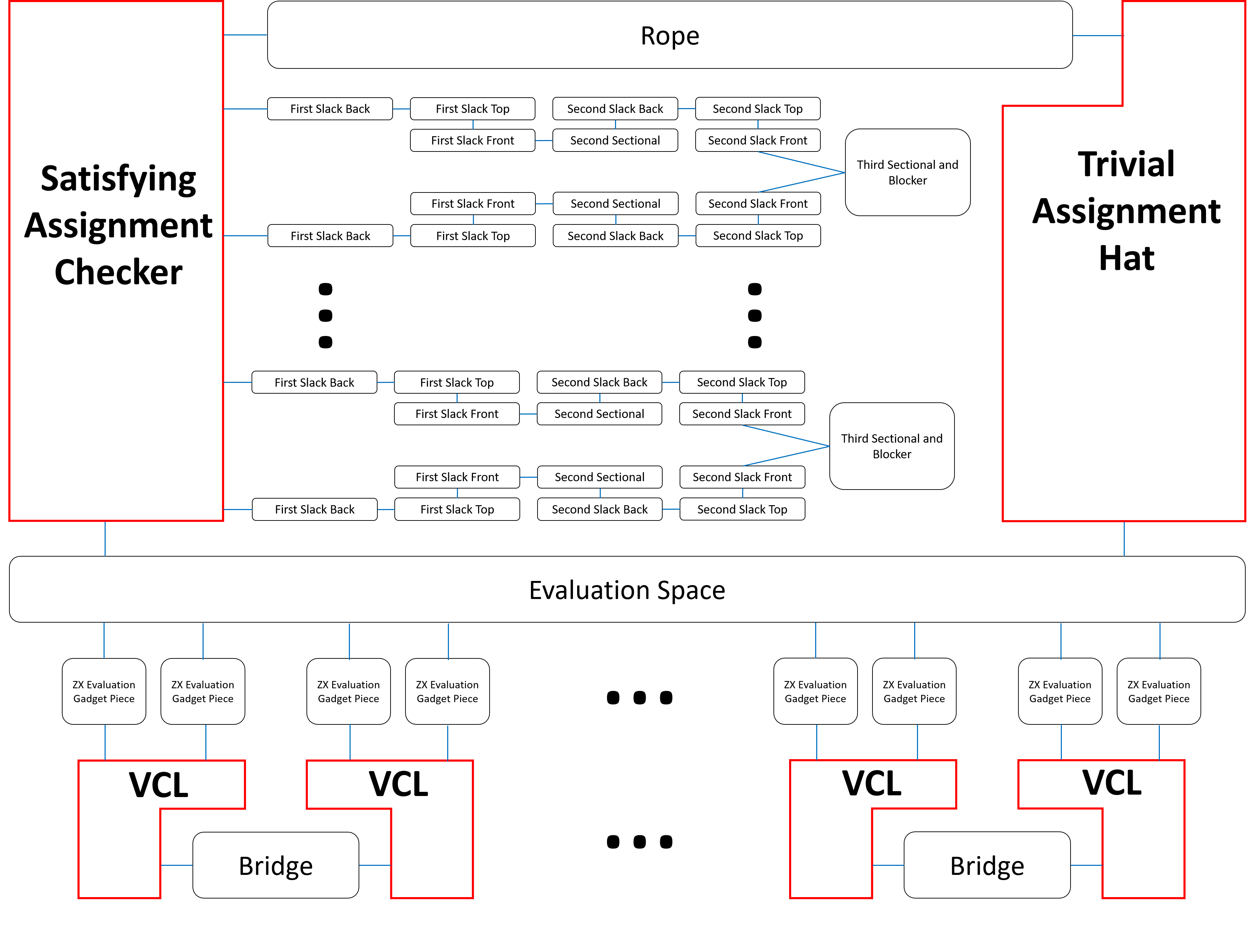}
        }
  \caption{The before and after diagrams with respect to the entanglement of rigid components.}
  \label{fig:machine_reduction}
\end{figure}

\subsubsection{Evaluation Space (ES)} \label{evaluationSpace}

The evaluation space, shown in Figure~\ref{fig:evaluation_space}, is a frame that facilitates interactions between the variable constraint gadgets, trivial assignment hat, and satisfying assignment hat. The frame consists of three \emph{evaluation gadgets} per row (which we colloquially refer to as ``bumps''), with as many rows as clauses in the corresponding 3SAT instance. The evaluation gadget consist of three tiles that form a bump over or under the open holes in the evaluation space. The three tiles that compose the evaluation gadget consist of two tiles in the $ZX$ plane and a tile in the $XY$ plane connecting them. An evaluation gadget forced up (Figure~\ref{fig:variable_false}) represents a literal evaluated to \emph{False}, while an evaluation gadget forced down (Figure~\ref{fig:variable_true}) represents a literal evaluated to \emph{True}. In the situation where the corresponding 3SAT instance does not have a satisfying assignment, the evaluation gadgets will collide with tiles from the satisfying assignment hat, preventing any configuration other than the initial. Every variable in the 3SAT instance has a corresponding evaluation gadget unless that variable has only positive literals or negative literals. If this is the case, we exclude that particular evaluation gadget since all instances of that variable can be assumed to be \emph{True}, so it is unnecessary for that evaluation gadget to ever collide with the satisfying assignment hat.

Attached to the evaluation space on opposite sides are the satisfying assignment hat and the trivial assignment hat. These pieces are attached to the evaluation space in such a way that exactly one of the them will be pressed against the evaluation space in any possible configuration. The satisfying assignment hat can only reorient down onto the evaluation space (as shown in Figures \ref{fig:satisfying_state}, \ref{fig:satisfied_in_simulator_side}, and \ref{fig:satisfied_in_simulator_top}) if and only if there is a solution to the corresponding 3SAT instance; otherwise, the configuration where the trivial assignment hat is pressed against the evaluation space (as shown in Figures \ref{fig:trivial_state} and \ref{fig:trivial_in_simulator}) will be the only valid configuration that the assembly can exist in.

\begin{figure}[htp]
\centering
    \subfloat[]{%
        \label{fig:variable_false}%
        \includegraphics[width=0.33\textwidth]{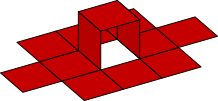}
        }%
    \subfloat[]{%
        \label{fig:variable_true}%
        \includegraphics[width=0.33\textwidth]{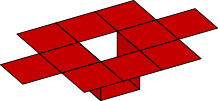}
        }%
    \subfloat[]{%
        \label{fig:formula}%
        \includegraphics[width=0.33\textwidth]{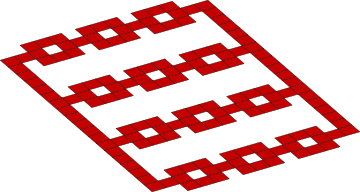}
        }
  \caption{The setup of the evaluation space. The tiles that make up the evaluation gadgets are included only in (a) and (b) to show the difference between a \emph{False} variable and a \emph{True} variable, respectively. Three individual evaluation gadgets together make up a clause, and multiple clauses together make up a formula, shown in (c).}
  \label{fig:evaluation_space}
\end{figure}

\subsubsection{Variable Constraint Gadgets (VCG's)} \label{variableConstraints}

The variable constraint gadgets sit below the evaluation space (in the $-Z$ direction) and interact with the evaluation gadgets to ensure that each instance of a variable, even the negated instances, agree with each other. Only variables that have at least one positive literal and at least one negative literal will have a variable constraint gadget associated with that specific variable. Since other variables with all positive or all negative instances have no evaluation gadgets, they also do not have variable constraint gadgets.

The variable constraint gadgets are designed such that all eligible variables are assigned a unique level at a different $Z$ value below the evaluation space. Each positive / negative instance of a single variable has two parallel strings of tiles called \emph{chimneys} that connect the $XY$ tile of all evaluation gadgets that correspond to that positive / negative variable to the \emph{variable constraint level} (VCL) of that positive / negative variable. For example, all evaluation gadgets for $x_1$ literals have chimneys that extend to the same variable constraint level at a specified $Z$ value, as well as all evaluation gadgets for $\overline{x_1}$ literals.

The variable constraint levels, shown in Figure~\ref{fig:variable_constraint_gadget}, connect all instances of a variable. There are two levels per variable, one for all positive literals, i.e. $x_1$, and one for all negative literals, i.e. $\overline{x_1}$. The two levels are then connected by a \emph{bridge}, which is attached to the end of both \emph{crossbars}. The bridge is simply a domino that connects the two levels in the $ZX$ plane. Its purpose is to ensure that the levels exist in $XY$ planes that are two units in the $Z$ direction apart. Since the levels are connected to the chimneys and to the respective evaluation gadgets, we will show that this ensures that the evaluation gadgets for positive and negative literals of the same variable always disagree.

\begin{figure}[htp]
\centering
    \subfloat{
        \label{fig:variable_contstraint_level}%
        \includegraphics[width=0.8\textwidth]{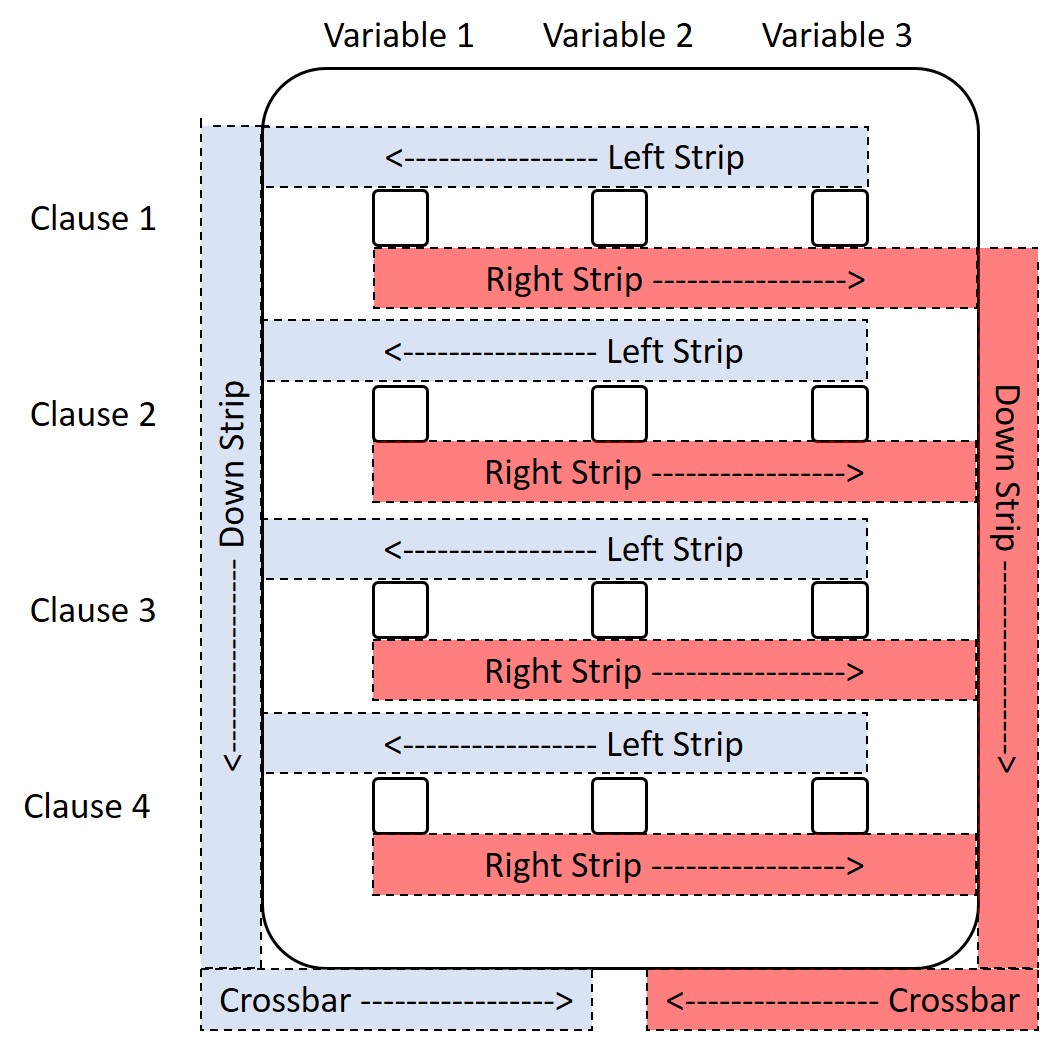}
        }
    \subfloat{
        \label{fig:vcl_compass}
        \includegraphics[width=0.2\textwidth]{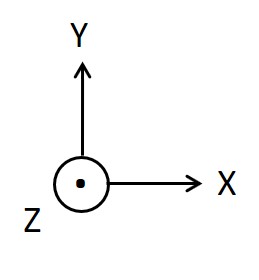}}
  \caption{The layout of a pair of variable constraint levels (VCL's). Each variable constraint gadget (VCG) has a ``positive'' level (blue) and a ``negative'' level (red). Each evaluation gadget is connected to its corresponding VCL by a vertical string of tiles (the chimneys) that descends in the $Z$ direction to the VCL designated for that variable. Once in that plane, the tiles within each VCL extend down the strips to the crossbars. The crossbars of two opposite levels are connected by a two tile long bridge.}
  \label{fig:variable_constraint_gadget}
\end{figure}

\begin{claim} \label{clm:vcl_rigid}
    The variable constraint gadgets can only reorient by changing truth value or being inverted.
\end{claim}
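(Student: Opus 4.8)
The plan is to show that the variable constraint gadget (VCG) is, up to inversion, a rigid body whose single remaining degree of freedom is exactly the truth-value flip. First I would decompose the VCG into its natural sub-pieces --- the positive variable constraint level (VCL) together with its chimneys, the negative VCL together with its chimneys, and the bridge --- and establish that each is a rigid component. Each VCL lies in a single $XY$ plane, so the tiles joined within it by rigid glues form a face and are rigid by definition. Each pair of parallel chimney strands descends from a common evaluation gadget to its level; together with the gadget's $XY$ tile above and the level's $XY$ tile below, the two strands close a \emph{traditional $4$-sided loop}, whose two opposite face pairs lie in parallel planes, so each such loop is rigid and binds the chimneys rigidly to their level (alternatively, one entangles a strand with its level via Claim~\ref{clm:two_piece_entanglement}, using the two non-coplanar faces they share). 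The bridge is a two-tile domino and is rigid. Thus the gadget reduces to two rigid plates joined through the rigid bridge at the two crossbars, as drawn in Figure~\ref{fig:variable_constraint_gadget}, with flexible bonds remaining only at the two crossbar--bridge junctions.

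Next I would invoke Claim~\ref{clm:one_piece_not_inverted} with the positive VCL (and its rigidly attached chimneys) as the chosen rigid component: for any alternative valid configuration, the claim lets me assume without loss of generality that this component is not inverted, the inverted case being precisely the ``being inverted'' alternative of the statement. I would therefore hold the positive VCL fixed and ask which placements of the bridge and the negative VCL relative to it are valid. Since the only flexible bonds unaccounted for are the two crossbar--bridge junction bonds, the whole question collapses to the folding of the bridge at the fixed crossbar followed by the attachment of the negative plate at the bridge's far end.

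The geometric core of the argument is to enumerate the orientations of the two junction bonds and show that exactly two of the resulting placements of the negative plate are valid, one per truth value. With the positive VCL fixed, each junction bond independently takes one of ``Up'', ``Down'', ``Straight'', giving finitely many candidate placements. I would rule out the ``Straight'' settings and the mismatched foldings by geometry: the bridge must carry the negative level into the parallel $XY$ plane exactly two units away in $Z$, so that its chimneys can still reach evaluation gadgets that are themselves constrained to point up or down. Any other orientation either leaves the negative level out of an admissible $XY$ plane, collides it with the positive level or the evaluation space (failing validity condition (1)), or drives a chimney into a placement where its bond cannot close (failing condition (3)). The two surviving placements are exactly the two truth-value states, in which the negative level's gadgets point opposite to the positive level's. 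Hence, by Claim~\ref{clm:one_piece_not_inverted}, any non-trivial reconfiguration may be taken to fix the positive VCL and is then exactly a truth-value change; allowing that plate to be inverted adds only the chiral copies, so every reconfiguration of the VCG is a truth-value change, an inversion, or a composition of the two, which is the claim.

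I expect the main obstacle to be precisely this junction case analysis, in particular the exclusion of the ``Straight'' orientation and of every partial folding of the bridge, since that is the one place where a spurious third configuration could slip in. The two-tile length of the bridge (forcing the fixed $Z$-separation) and the collision constraints against the evaluation space do the real work there, so I would state those geometric facts carefully enough that the elimination of every non-truth-value folding is airtight; verifying the ``disagreement'' orientation of the two surviving states is then immediate from the fixed $2$-unit $Z$-offset.
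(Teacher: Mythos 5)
Your decomposition matches the paper's own: the same traditional $4$-sided loops (an evaluation gadget's $XY$ piece, a pair of chimneys, and the level) entangle each level with its chimneys and $XY$ pieces into a rigid plate, and the inversion alternative is handled via Claims~\ref{clm:one_piece_not_inverted} and \ref{clm:one_piece_rigid}, exactly as in the paper. However, there is a genuine gap in your accounting of the remaining degrees of freedom: it is false that ``the only flexible bonds unaccounted for are the two crossbar--bridge junction bonds.'' The $ZX$ pieces of the evaluation gadgets are \emph{not} part of the rigid plates (only the $XY$ pieces are), so the flexible bonds joining each $ZX$ piece to the evaluation space and to its $XY$ piece all survive your reduction --- and these are precisely the bonds that realize the truth-value motion, since each plate translates up or down by two units relative to the evaluation space by flipping its $ZX$ pieces. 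Consequently, holding the positive plate fixed does not collapse the problem to folding the bridge: with the positive plate pinned, the evaluation space itself still has two admissible placements relative to it, and your enumeration of bridge-junction orientations never enumerates that freedom. Claim~\ref{clm:one_piece_not_inverted} only licenses assuming the plate is not inverted; it does not freeze the plate's relationship to the rest of the assembly across its flexible attachments, so the truth value (which is the plates' orientation relative to the evaluation space, not merely the sign of the bridge fold) is not pinned down by your case analysis as stated.

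The paper's proof puts the $ZX$ pieces at the center instead: (i) within each evaluation gadget, the pair of $ZX$ pieces must bind to the evaluation space both ``Up'' or both ``Down,'' since otherwise there is no one-unit gap for the $XY$ piece to fit into; (ii) rigidity of a plate propagates one pair's orientation to every pair attached to that plate; and (iii) the two plates cannot sit at the same level, because the bridge's two endpoints would then be incident on each other, leaving no room for the bridge itself. These three facts alone yield exactly the two truth-value states modulo inversion, with no bridge-fold enumeration at all. Your collision and bond-closure criteria gesture at (i) and (iii), and the argument could likely be repaired by explicitly enumerating the plate-versus-evaluation-space placements alongside the bridge folds; but as proposed, the step ``the whole question collapses to the folding of the bridge'' omits the very bonds the claim is about.
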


\begin{proof}
    To recap, for any variable $x$ with $p \ge 1$ instances of the positive literal and $n \ge 1$ instances of the negated literal, the variable constraint gadget is made up two levels (one for $x$ and one for $\overline{x}$), $p$ pairs of chimneys to one level, connected to the evaluation space collectively by $p$ evaluation gadgets, $n$ pairs chimneys to the other level, connected to the evaluation space collectively by $n$ evaluation gadgets, and one bridge between the two levels. Evaluation gadgets are made up of three pieces, one in the $XY$ plane and two in the $ZX$ plane.
    
    First, notice that an $XY$ piece of an evaluation gadget, a pair of chimneys, and the level are connected to make a 4 sided loop. Take anyone of these loops and assume it is rigid by Claim~\ref{clm:one_piece_rigid}. Obviously, the orientation of every other chimney loop that exists between the evaluation space and that specific level must also remain unchanged in order for them to connect to the level. Now, we have a rigid component of one level, all the chimneys attached to it, and $XY$ pieces of their respective evaluation gadgets. This rigid component can move up or down by two units due to reorientation of the $ZX$ pieces of the evaluation gadgets. We apply the same argument to level, chimneys, and evaluation gadgets corresponding to $\overline{x}$.
    
    Now, our variable constraint gadget for $x$ consists of two entangled rigid components (which we will now call $l$ and $r$), two sets of $ZX$ pieces from the evaluation gadgets (one connected to $l$ and one connected and $r$), and the bridge. We already know that a reorientation can happen that moves either $l$ or $r$ up two units in the $Z$ direction, the other down two units in the $Z$ direction, and causes a transformation of all $ZX$ pieces from the evaluation gadgets and the bridge. However, this is the only reorientation that can happen. To show this, notice that any pair of $ZX$ pieces for the evaluation gadgets must agree to be connected to the evaluation space through both $U$ bonds or both $D$ bonds in order to have a gap of one between them for the $XY$ piece to fit into. Since the rigid component $l$ or $r$ has already been shown to be rigid from one $ZX$ evaluation piece to another, the orientation of this pair of $ZX$ evaluation pieces will force the orientation of all other pair of evaluation pieces also connected to the $l$ or $r$ piece. The same argument goes with the $ZX$ evaluation pieces connected to the other $l$ or $r$ rigid component. Notice that the $l$ and $r$ component can't be on the same level, otherwise the two ends connected by the bridge would be incident on each other, leaving no room for the bridge itself. Therefore, one rigid component (either $l$ or $r$) must have each of its $ZX$ evaluation pieces be up, while the other rigid component must have each of its $ZX$ evaluation pieces be down. Therefore, the only ambiguity in the configuration of the variable constraint gadget (other than the trivial ambiguity of inversion) is determining which rigid component between $l$ and $r$ is up and which rigid component is down.
    
    Because all variable constraint gadgets have the same structure, this argument holds for each variable constraint gadget corresponding to each variables in the formula.
\end{proof}

\subsubsection{Trivial Assignment Hat (TAH)} \label{trivialAssignment}

 The purpose of the trivial assignment hat is to prevent the reorientation of any piece in the assembly as long as it is pressed against the evaluation space. It does this in two ways. First, it has what we labeled as ``Force Bumps'' that force the orientation of all evaluation gadgets and variable constraint gadgets, as shown in Figure~\ref{fig:tah_inkscape}. These are designed such that all negative literals in the evaluation space have a corresponding Force Bump on the trivial assignment hat and all positive literals do not. This effectively assigns a value of \emph{False} to all variables in the formula, hence the name ``trivial assignment hat'', forcing the evaluation gadgets and variable constraint levels of all negative variables down and thereby forcing the evaluation gadgets and variable constraint levels of all positive variables up.
 
 The second way the trivial assignment hat disables reorientation is by having an additional piece that blocks otherwise-free pieces in the satisfying assignment hat from moving. This piece can be seen in Figure~\ref{fig:tah_simulator}. It will later be proven that this is sufficient to show that the machine is rigid in the case when no satisfying assignment exists to the corresponding 3SAT problem.

\begin{figure}[htp]
    \subfloat{%
        \label{fig:tah_inkscape}
        \includegraphics[width=0.5\textwidth]{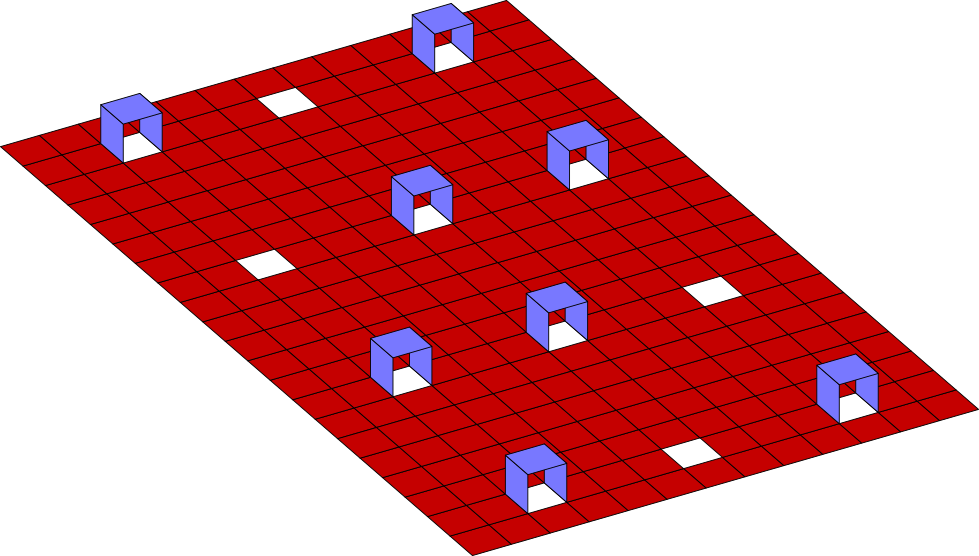}
        }%
    \subfloat{%
        \label{fig:tah_simulator}
        \includegraphics[width=0.5\textwidth]{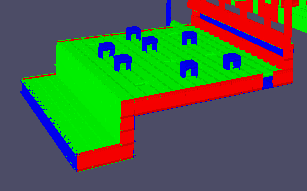}
        }
    \caption{The layout of the trivial assignment hat. The left image just shows the Main Bottom piece along with the Force Bumps. The right is an image take from FTAM visualizer. The bumps are aligned with variables in the evaluation space. From here, the gadget would be flipped over and pressed onto the top of the evaluation space to force the orientation of any variables that were free to move. The additional tiles at the bottom left of the right image are used to block the satisfying assignment checkers whenever the assembly is in the trivial state.}
    \label{fig:trivial_assignment_hat}
\end{figure}

\begin{claim}
    The trivial assignment hat can't reorient without being inverted.
\end{claim}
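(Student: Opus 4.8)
The plan is to show that the entire trivial assignment hat constitutes a single \emph{rigid component}, from which the claim follows immediately: by definition a rigid component is a subassembly that cannot reconfigure without fully inverting. The argument proceeds by repeatedly applying the entanglement tools, Claim~\ref{clm:one_piece_rigid} and Claim~\ref{clm:two_piece_entanglement}, starting from an obviously rigid seed and absorbing every remaining face of the hat into it, exactly as was done for the variable constraint gadgets in Claim~\ref{clm:vcl_rigid}.

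First I would take the Main Bottom piece (the flat collection of coplanar, rigidly bonded tiles shown in Figure~\ref{fig:tah_inkscape}) as the seed. Since it is a single face, it is trivially a rigid component, so by Claim~\ref{clm:one_piece_rigid} we may assume it does not reorient and then check whether anything else can reorient relative to it. Next I would argue that each Force Bump is locked to the Main Bottom. A Force Bump is a bump gadget consisting of two $ZX$ tiles joined across the top by an $XY$ tile, and the bottoms of its two vertical tiles both bond to the coplanar, rigid Main Bottom. The cycle (Main-Bottom segment) $\to$ vertical $\to$ top $\to$ vertical $\to$ (Main-Bottom segment) is therefore a \emph{traditional 4-sided loop}: its two opposite $XY$ faces share one plane orientation and its two opposite $ZX$ faces share another. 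Such a loop is always rigid, so no Force Bump can move relative to the Main Bottom, and the Main Bottom together with all its Force Bumps forms one enlarged rigid component.

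It then remains to absorb the additional blocking piece (the tiles at the lower left of Figure~\ref{fig:tah_simulator} that block the satisfying-assignment checkers). I would show that this piece attaches to the already-established rigid component through two \emph{non-coplanar} faces, so that Claim~\ref{clm:two_piece_entanglement} entangles it into the same rigid component. Once every piece has been absorbed, the trivial assignment hat is a single rigid component in the given configuration; applying Claim~\ref{clm:one_piece_rigid} once more shows the whole hat is rigid, and therefore its only available reconfiguration is inversion, which is precisely the statement.

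The main obstacle I expect is this final step: verifying the precise geometry of the blocking piece so that it genuinely shares two non-coplanar faces (or closes a traditional 4-sided loop) with the rest of the hat, since that is exactly what rules out an independent reorientation of the blocker while the Main Bottom stays fixed. The bump-loop step, by contrast, is routine once the local $2\times2\times2$ geometry is checked, because every Force Bump reduces to a traditional 4-sided loop that is already known to be rigid.
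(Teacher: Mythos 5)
There is a genuine gap at the Force Bump step, and it is exactly the subtlety the paper's proof is built around. A traditional 4-sided loop is a rigid \emph{component}, but by definition a rigid component can still invert as a whole: the loop formed by a Force Bump and the Main Bottom has two valid configurations, $\{U,U,U,U\}$ and $\{D,D,D,D\}$, and these place the bump on opposite sides of the Main Bottom while the Main Bottom itself stays fixed. So loop rigidity alone does \emph{not} give your conclusion that ``no Force Bump can move relative to the Main Bottom'': each bump could independently flip to the other side of the Main Bottom, a reorientation of the hat that is not a global inversion --- precisely what the claim must rule out. The paper closes this hole with a collision argument: the Main Top lies, with the same orientation, in the plane one unit above the Main Bottom, so a flipped Force Bump would collide with it; only then can each bump be entangled by Claim~\ref{clm:two_piece_entanglement}-style reasoning into the larger rigid component. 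Your ordering makes this fix unavailable, since you seed with the Main Bottom and absorb the Force Bumps \emph{first}, before the Main Top (or any other blocking face) has been entangled; indeed your proof never mentions the Main Top at all.

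Relatedly, the part of the hat you compress into ``the additional blocking piece'' is where most of the paper's actual work lies: the hat decomposes into many faces (Main Top, Base Bottom, Base Front, Base Back, Blocker Top, Blocker Bottom, Blocker Front, Blocker Back, Connection Front, Connection Back, the Left Side, and two Right Sides), which the paper entangles via seven explicit traditional 4-sided loops of the form Left Side -- piece -- Right Side -- piece, yielding two rigid components that are then merged by Claim~\ref{clm:two_piece_entanglement} through the three pairwise non-coplanar shared faces Left Side, Main Top, and Main Bottom. You defer exactly this verification as ``the main obstacle,'' but it cannot be deferred: without it the Main Top is not in your rigid component, and without the Main Top the Force Bump inversion cannot be excluded. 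The correct order of battle is the paper's --- build the Left Side/Right Side loop skeleton first, merge it into one rigid component, and handle the Force Bumps last via the Main Top collision.
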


\begin{proof}
    To clarify some of the terminology of this section refer to Figure~\ref{fig:tah_labeled}, Figure~\ref{fig:machine_layout}, or Figure~\ref{fig:machine_reduction}.

    Here we will show that most faces in the trivial assignment hat are part of a traditional four sided loop utilizing the ``Left Side'' piece and one of the two ``Right Side'' pieces, i.e. every loop will look like: Left Side, piece one, Right Side, piece two, Left Side. The planar sections that will be substituted into ``piece one'' and ``piece two'' are as follows: (Base Bottom, Main Top), (Base Front, Base Back), (Main Bottom, Main Top), (Main Bottom, Blocker Top), (Connection Front, Connection Back), (Blocker Front, Blocker Back), (Blocker Bottom, Blocker Top). By Claim~\ref{clm:two_piece_entanglement}, since each loop contains Left Side and one of the two Right Side's, we are left with two rigid components (corresponding to the split up of the Right Side's). However, again by Claim~\ref{clm:two_piece_entanglement}, both rigid components share Left Side, Main Top, and Main Bottom, none of which are coplanar, and can therefore be entangled into one rigid component. The only remaining planar sections in the trivial assignment hat are the Force Bumps, each of which forms a traditional four sided loop with the Main Bottom. Since each of these loops is rigid, the piece that is opposite of the Main Bottom must be in the same orientation as Main Bottom but in a plane that is one unit away in either normal direction. Since the Main Top is in the same orientation and in the plane that is one unit above the Main Bottom, no Force Bump can invert because it would collide with the Main Top. Therefore, each Force Bump can be entangled into the rigid component, leaving us with one rigid trivial assignment hat.
\end{proof}

\begin{figure}[htp]
\centering
    \subfloat{%
        \label{fig:tah_labeled}
        \includegraphics[width=\textwidth]{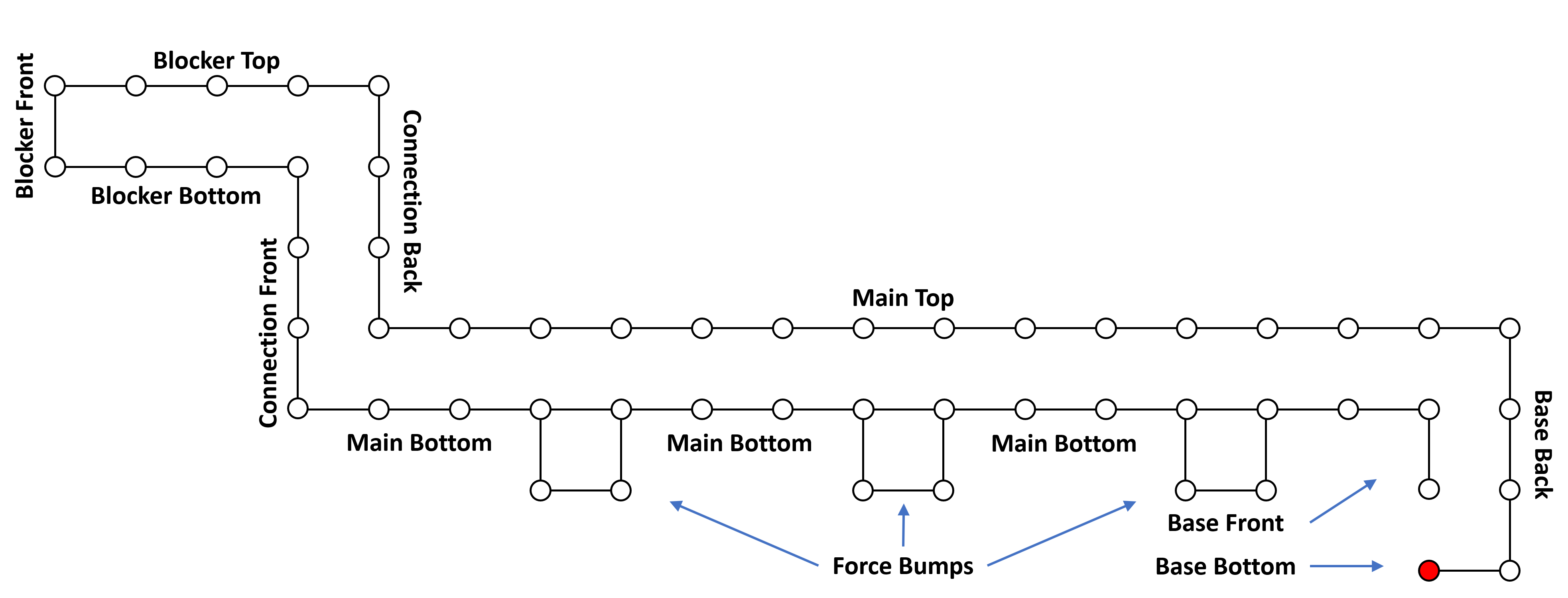}
        }
    \quad
    \subfloat{%
        \label{fig:sah_labeled}
        \includegraphics[width=\textwidth]{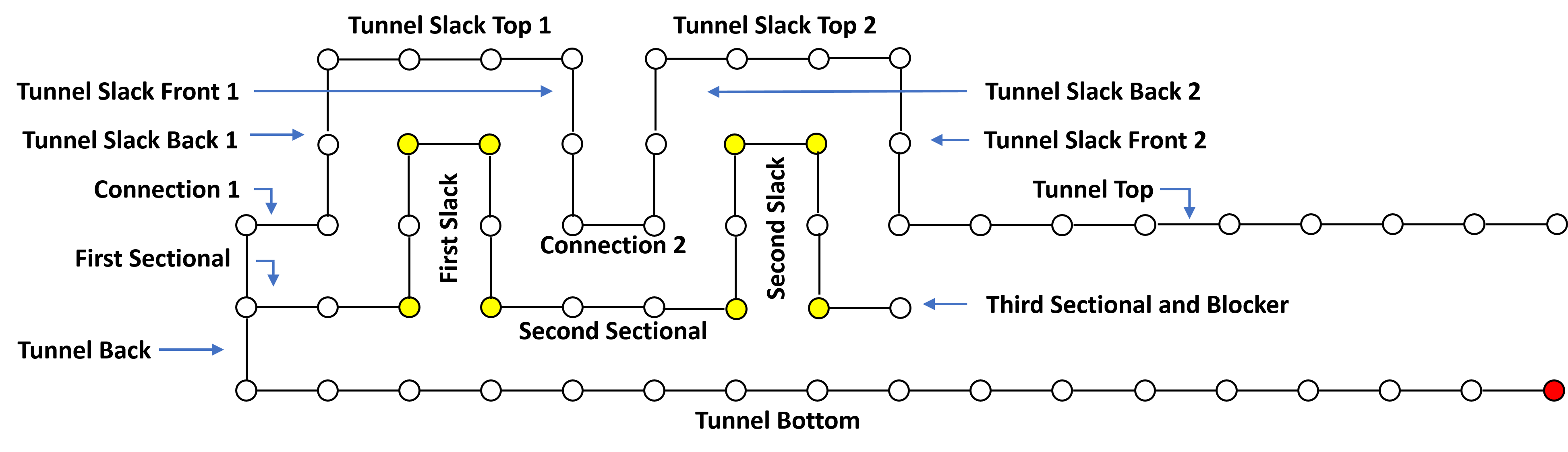}
        }
    \quad
    \caption{Labeled images showing the names of each face in the two assignment hats.}
    \label{fig:hat_labels}
\end{figure}

\subsubsection{Satisfying Assignment Hat (SAH)} \label{satisfyingAssignment}

The satisfying assignment hat comes down and presses up against the evaluation space to make an additional valid configuration for the whole machine if and only if there is a solution to the inputted 3SAT problem. The satisfying assignment hat works by using a \emph{checker} for each clause in the formula. The checker is encased in a structure, shown in Figure~\ref{fig:SAH_side_view}, that allows it to be in three different orientations, shown in Figure~\ref{fig:SAH_top_view}. These different orientations each project a blocking tile, shown in yellow, to a different location to block one of the three variables in the clause. The checkers can only be in three different orientations due to the structure the checker is encased in. In Figure~\ref{fig:SAH_side_view}, we can see this illustrated by the rigid outer structure and the partly flexible checker inside the structure. Marked in yellow, the flexible points in the checker allow for the checker to only ``fold into'' one of the available, raised parts of the structure. If the checker were to fold into a non-raised portion of the structure, the tiles would simply conflict with the outer structure, shown in red in  Figure~\ref{fig:SAH_side_view}. If the satisfying assignment hat is able to press against the evaluation space, that means the checker for each clause has found at least one of the variables to be \emph{True} in each clause, allowing the checker to be in the orientation that projects the blocking tile over the evaluation gadget that corresponds to that variable.

\begin{figure}[htp]
\centering
    \includegraphics[width=\textwidth]{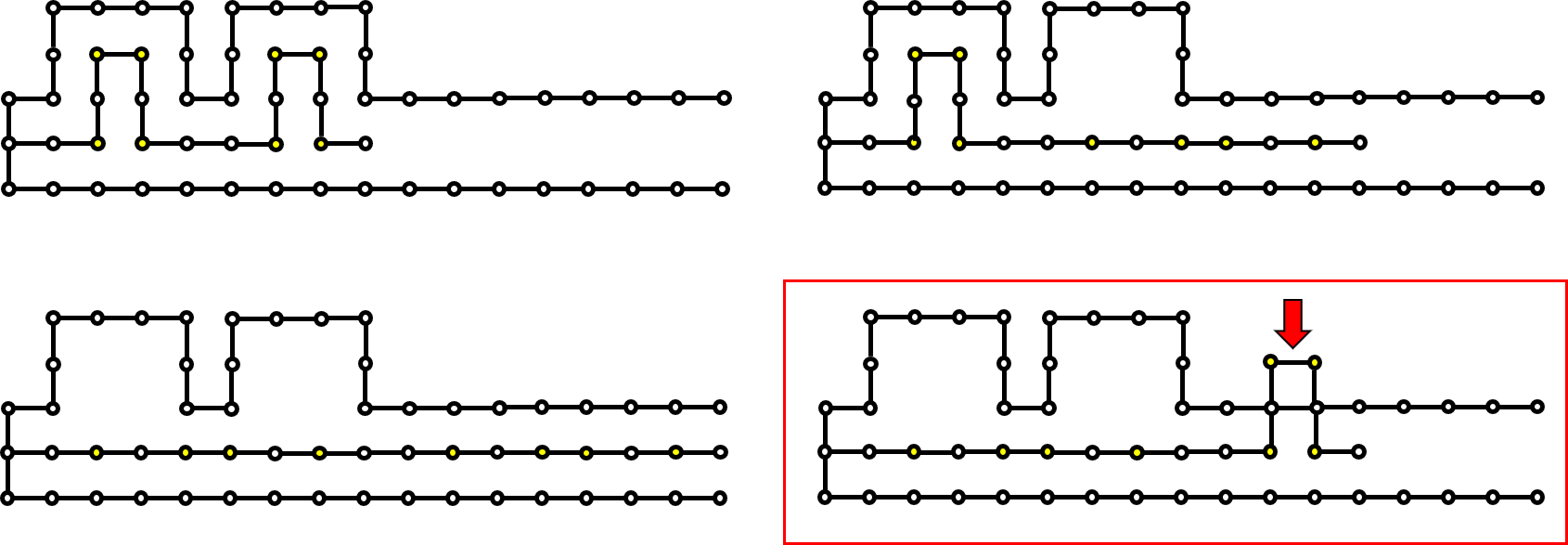}
    \caption{The structure that allows the checker piece to move to one of three different locations. The bottom right orientation is invalid because the checker piece conflicts with the outer structure.}
    \label{fig:SAH_side_view}
\end{figure}

\begin{figure}[htp]
\centering
    \includegraphics[width=0.5\textwidth]{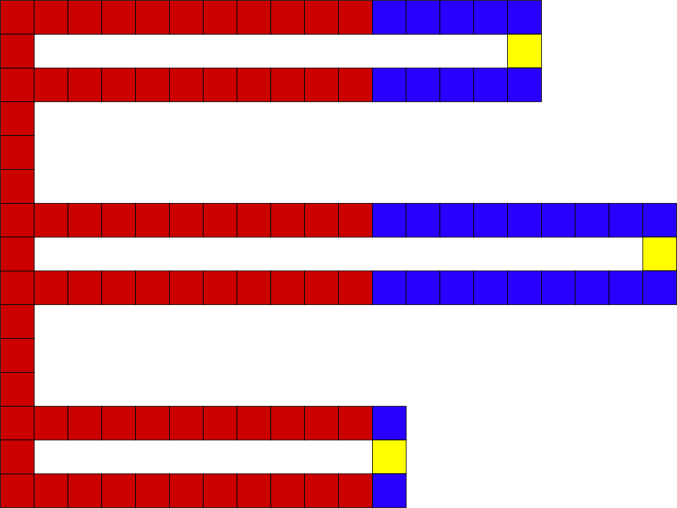}
    \caption{A top view of a satisfying assignment hat with the three checkers occupying the three different possible positions.}
    \label{fig:SAH_top_view}
\end{figure}

\begin{claim}
    The outer structure of the satisfying assignment hat can't reorient without being inverted.
\end{claim}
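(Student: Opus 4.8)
The plan is to mirror the argument used for the trivial assignment hat claim, establishing that the outer structure of the satisfying assignment hat is a single rigid component; once this is done, the claim follows immediately from the definition of a rigid component, which by our notion of entanglement cannot reconfigure without fully inverting into its chiral.

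First I would enumerate the faces comprising the outer structure using the labels in Figure~\ref{fig:sah_labeled}, and exhibit a collection of traditional four-sided loops that together cover all of these faces. Since every traditional four-sided loop is automatically a rigid component (its four flexible bonds are forced into either an all-``U'' or all-``D'' configuration), each loop supplies a rigid subassembly ``for free''. As in the trivial assignment hat proof, I would choose the loops so that they all pass through a common spine of faces --- analogous to the ``Left Side'' and ``Main Top''/``Main Bottom'' faces there --- so that consecutive loops share faces.

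Next I would repeatedly invoke Claim~\ref{clm:two_piece_entanglement}: whenever two of these rigid loops (or already-entangled groups of loops) share two faces that are not coplanar, they merge into a single rigid component. Working through the loops in order, the shared non-coplanar faces let me glue all of them into one rigid component encompassing the entire outer scaffold, including the raised portions that create the three admissible checker pockets. The only subtlety beyond the trivial assignment hat case is that the scaffold must also trap the raised ``shelf'' faces that define the three checker positions; I would handle these exactly as the Force Bumps were handled, each forming a traditional four-sided loop with a face already in the rigid component and thus entangling with it, with collision against an adjacent parallel face ruling out independent inversion.

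Finally, with the entire outer structure shown to be a single rigid component, I invoke the definition of a rigid component (equivalently Claim~\ref{clm:one_piece_rigid}) to conclude it cannot reorient except by inversion. The main obstacle I anticipate is the bookkeeping: correctly identifying a covering family of four-sided loops over the more intricate SAH scaffold and verifying at each merge step that the two shared faces are genuinely non-coplanar, since the encasing structure is geometrically richer than the trivial assignment hat and must simultaneously leave the interior checker free to occupy its three orientations.
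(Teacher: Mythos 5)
There is a genuine gap: your plan assumes the SAH outer structure, like the TAH, can be covered by a family of traditional four-sided loops through a common spine, merged pairwise via Claim~\ref{clm:two_piece_entanglement}. That assumption fails for the SAH precisely because it was engineered with ``slack'': the chain of pieces Connection~1, Tunnel Slack Back~1, Tunnel Slack Top~1, and Tunnel Slack Front~1 (see Figure~\ref{fig:sah_labeled}) connects the Tunnel Back to the Connection~2 piece through flexible bonds and belongs to no traditional four-sided loop, so no covering family of such loops exists. Even at the start, the two loops the paper does use --- (Tunnel Bottom, Tunnel Top via the Tunnel Top Side Supports) and (Tunnel Bottom, Connection~2) --- share only the single face Tunnel Bottom, so Claim~\ref{clm:two_piece_entanglement} cannot merge them; the paper instead entangles them by a collision argument, observing that inverting one without the other would force the three Tunnel Slack~2 pieces to pass through the Tunnel Bottom.

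The heart of the paper's proof, entirely absent from your proposal, is a Manhattan-distance budget argument for the slack chain. The Tunnel Back forms a vertex with the Tunnel Bottom and the Starter pieces, so it must be ``Up'' or ``Down'' relative to the rigid component; if it were ``Down'' (locally inverted), the four-piece chain of total length $8$ would have to span a Manhattan distance of $8$, i.e.\ realize a shortest path, and every such shortest path collides with the Tunnel Bottom or Tunnel Back --- contradiction. With the Tunnel Back entangled, the remaining distance is $4$, and since the piece lengths are $1,2,3,2$, the only length arithmetic that works makes the two length-$2$ pieces cancel, leaving exactly one non-colliding fold of the slack. Without this accounting, your loop-covering strategy cannot rule out the slack chain refolding while the rest of the scaffold stays put, which is exactly the kind of reorientation-without-inversion the claim must exclude; handling the raised pockets ``like the Force Bumps'' does not reach these pieces. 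Your fallback to collision reasoning is the right instinct, but it must be made quantitative along the lines above rather than patterned on the TAH loop decomposition.
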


\begin{proof}
    To clarify some of the terminology of this section refer to Figure~\ref{fig:sah_labeled}, Figure~\ref{fig:machine_layout}, or Figure~\ref{fig:machine_reduction}.

    First, notice the traditional 4-sided loop between the Tunnel Bottom and Tunnel Top utilizing the Tunnel Top Side Support's. Similarly, the Tunnel Bottom and Connection 2 piece form a 4-sided loop. We can entangle these two loops because, if one were to invert without the other, the three Tunnel Slack~2 pieces would have to collide with the Tunnel Bottom piece in order to connect the two. Since the Connection 2 piece and the Tunnel Top are part of the same rigid component, we can also entangle the three Tunnel Slack 2 pieces. Now, notice that the Tunnel Bottom, Tunnel Back, and any Left Starter or Right Starter piece form a vertex. This means that, in relation to the Tunnel Bottom, the Tunnel Back must either be ``Up'' or ``Down'' but cannot be ``Straight''. However, if it were to be ``Down'' (thereby inverted relative to the rigid component), its connection to the Tunnel Back would be $4$ down in the $Z$ dimension and $4$ to the left in the $X$ dimension to the connection between the Tunnel Slack Front 1 and the Connection 2. Since this Manhattan distance is $8$ and the combined length of the four piece that need to make up this distance (Connection 1, Tunnel Slack Back 1, Tunnel Slack Top 1, and Tunnel Slack Front~1) is also $8$, the pieces would have to form a shortest path from one connection to the other. However, since every shortest path would intersect with either the Tunnel Bottom or Tunnel Back, it cannot make this distance. Therefore, the Tunnel Back (and Left Starters, Right Starters, and First Sectionals) can be entangled into the rigid component. Now, the Manhattan distance that needs to be made up by the 4 pieces is just $4$. Since the respective lengths of the 4 pieces in the order previously given are $1$, $2$, $3$, $2$, the only arrangement that would give a total displacement of $4$ would for the $2$ length pieces (Tunnel Slack Back 1 and Tunnel Slack Front 1) to cancel out and for the other two pieces (Connection~1 and Tunnel Slack Top 1) to make up the distance. This leaves us with two options, the back and front slack folded away from the Tunnel Bottom or towards the Tunnel Bottom. As mentioned earlier, if the slacks fold towards the Tunnel Bottom, the Tunnel Slack Top 1 would collide with the Tunnel Bottom. Therefore, there is only one configuration of these pieces with respect to the rigid component, meaning they can be entangled, leaving us with one big rigid component.
\end{proof}

\subsubsection{3SAT Checker Machine} \label{3SAT}

There is one last piece of the assembly we must introduce, which we refer to as the \emph{rope}. This is a $15 \times 1$ face of tiles that connects the trivial and satisfying assignment hats. It connects to the Tunnel Top on the SAH and the Main Top on the TAH. It's purpose is to prevent the invalid configuration from Figure~\ref{fig:invalid_state} where neither hat is pressed against the evaluation space.

Now that we have introduced all the pieces of the assembly, we need to focus on how the remaining flexible bonds can move relative to one another. We group the remaining flexible bonds into three groups: the main loop bonds, the checker bonds, and the variable constraint bonds. The main loop group consists of four bonds: the ES to the SAH, the SAH to the rope, the rope to the TAH, and the TAH to the ES. Notice that these bonds form a loop. By looking at Figure~\ref{fig:main_loop}, you will notice this loop can only take two configurations, ``UDSS'' and ``SSDU'' (with respect to ES-SAH, SAH-rope, rope-TAH, TAH-ES, with the normal of the rope pointed in the $+Z$ dimension and the normal of the ES pointed in the $-Z$ dimension). We refer to these two configurations as the trivial state and satisfied state, respectively.

\begin{figure}[htp]
\centering
    \includegraphics[width=\textwidth]{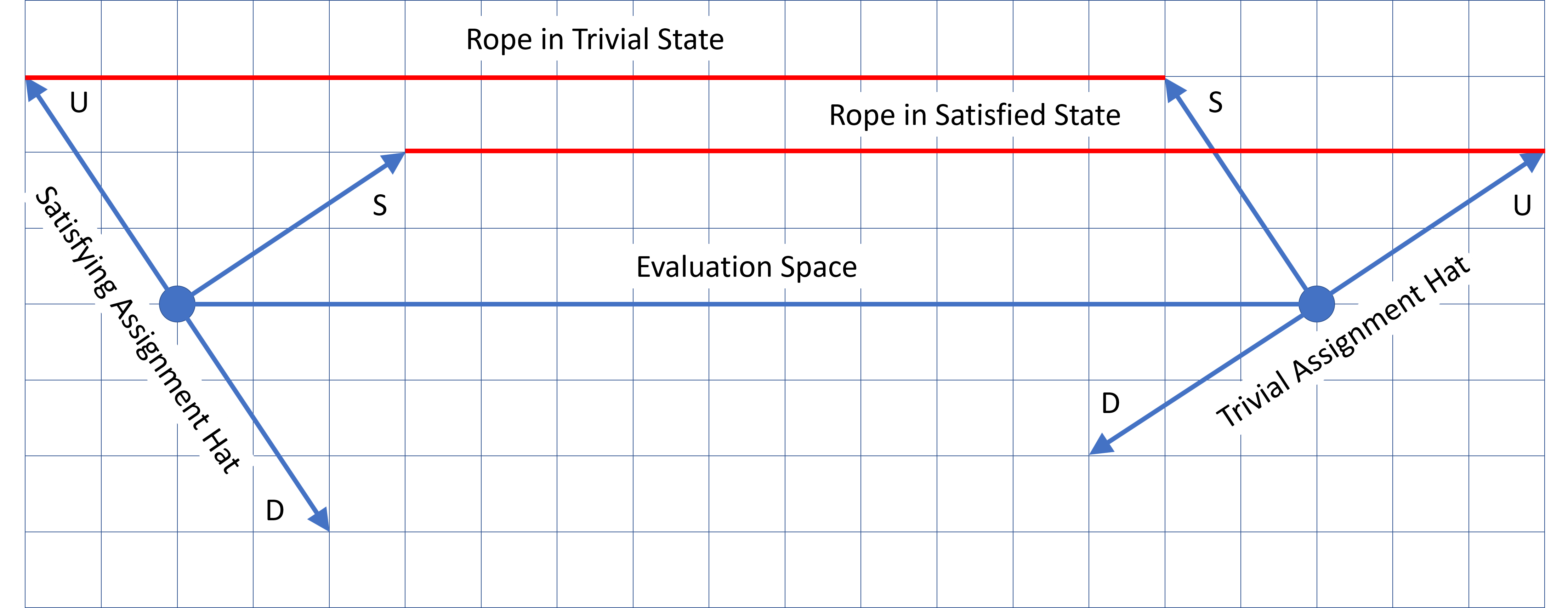}
    \caption{A view of the main loop of the assembly from the negative $Y$ direction. The arrows represent the displacement between the bonds from the evaluation space to each hat and the bonds from each hat to the rope. The orientation labels are for the bonds between the evaluation space and each hat. The illustration is meant to show that the only positions that the two hats can be in such that their bonds to the rope are in the same $Y$ plane and exactly 15 units apart in the $X$ plane are the two labeled positions for the rope.}
    \label{fig:main_loop}
\end{figure}

The checker group consists of 16 bonds for each checker gadget in the SAH. These 16 bonds can be viewed as two subgroups of 8 which must be the same configuration from the set of ``UDDUUDDU'', ``UDDUSSSS'', and ``SSSSSSSS'' (with respect to the sequences of bonds starting at the First Sectional piece and ending with the Third Sectional and Blocker piece, with the normal of these two pieces pointed at the Tunnel Top). The sequences for the two subgroups of any checker gadget must match since both sequences of bonds end with the Third Sectional and Blocker piece, which connects with both previous pieces at the same $X$ coordinate. As opposed to a formal proof, we direct the reader to Figure~\ref{fig:SAH_side_view} to see the reasoning that these are the only three possible configurations. Note that the invalid configuration shown in the illustration would be ``SSSSUDDU''.

The variable constraint group consists of all the bonds between the ES and the $ZX$ evaluation gadget pieces, all the bonds between the $ZX$ evaluation gadget pieces and the $XY$ evaluation gadget pieces, and all the bonds between the VCL's and the bridges. As we saw in Claim~\ref{clm:vcl_rigid}, this entire group of bonds can only have as many possible configurations as there are possible assignments to the variables in the corresponding boolean formula, i.e. $2^n$ where $n$ is the number of variable in the formula. An intuitive explanation of this is that each variable has a variable constraint gadget, each of which can only take two states, one where the variable is \emph{False} and one where the variable is \emph{True}.

Finally, we show what can happen when the main loop is in the trivial state and satisfied state.

\begin{claim}
    Neither the trivial nor the satisfying assignment hat can invert without the other.
\end{claim}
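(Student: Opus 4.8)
The plan is to argue entirely within the \emph{main loop}, the cycle of four flexible bonds ES-SAH, SAH-rope, rope-TAH, and TAH-ES identified above. By Claim~\ref{clm:one_piece_not_inverted} and Claim~\ref{clm:one_piece_rigid} I may fix the evaluation space as a rigid, non-inverted reference: if the ES were itself inverted, I would simply apply the whole argument to the global chiral configuration instead. Both hats have already been shown to be rigid components in the preceding claims, so relative to the fixed ES each hat is in exactly one of two states, \emph{non-inverted} or \emph{inverted}. The goal is then to rule out the two \emph{mixed} cases (exactly one hat inverted), leaving only the two \emph{matched} cases, which are precisely the trivial and satisfied states ``UDSS'' and ``SSDU'' (together with their global chirals).

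First I would pin down the two faces at which the rope attaches to the hats, namely the Tunnel Top face on the SAH and the Main Top face on the TAH, and express the position and orientation of each attachment point relative to the fixed ES as a function of (i) the orientation of the corresponding ES-hat bond and (ii) whether that hat is inverted. Because each hat is rigid, inverting it reflects its entire body, and in particular its rope-attachment point, across the plane of its ES-hat bond; this flips the relevant ``Up''/``Down'' offsets while leaving any ``Straight'' bond straight. Tracking these reflections yields, for each of the four combinations of hat states, a candidate pair of positions for the two rope endpoints.

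Next I would impose the geometric constraint that the $15 \times 1$ rope forces on its endpoints, namely that they lie in a common $Y$-plane and are exactly $15$ units apart in the $X$ direction; this is exactly the content summarized in Figure~\ref{fig:main_loop}. Checking this constraint against the four candidate endpoint pairs is the crux of the proof: in each matched case the endpoints land in an arrangement realized by ``UDSS'' or ``SSDU'' and the rope closes the loop, whereas in each mixed case inverting a single hat moves its attachment point out of the common $Y$-plane (or off the required $X$-separation), so no placement of the rigid $15 \times 1$ rope can simultaneously bind both hats. Since the main loop then admits no valid closure in the mixed cases, no valid configuration of the machine can have one hat inverted while the other is not, which is precisely the claim.

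The main obstacle will be the geometric bookkeeping in this last step: correctly computing how the inversion of one hat reflects its rope-attachment point relative to the fixed ES, and then verifying in all four cases that the rope coplanarity-and-length constraint holds exactly in the matched cases and fails in the mixed ones. This is the same style of displacement-counting used in the hat-rigidity claims, but here it must be carried out for the loop as a whole rather than for a single traditional four-sided loop, so the reflections of \emph{both} attachment points have to be handled simultaneously.
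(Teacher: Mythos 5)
Your proposal takes essentially the same approach as the paper: the paper also works entirely within the main loop, treating each rigid hat's inversion as a reflection and comparing the displacement vector from its ES bond to its rope bond, which inversion turns from a rotation of $(2,3)$ in the $ZX$ plane into a rotation of $(3,2)$, so that requiring the rope's two endpoints to share a $Z$ value forces their $X$ separation to be $15 \pm 2 \pm 2$ or $15 \pm 3 \pm 3$, i.e.\ $9$, $11$, $19$, or $21$, never the required $15$. The ``geometric bookkeeping'' you defer as the main obstacle is precisely this short displacement computation, and your matched-versus-mixed case analysis reaches the paper's conclusion that only the trivial and satisfied states (together with their global chirals) can close the loop.
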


\begin{proof}
    Notice that the displacement vectors in Figure~\ref{fig:main_loop} between the bond where either hat connects to the ES and the bond where the same hat connects to the rope is some rotation of the vector $(2,3)$ in the $ZX$ plane (it is the reverse of the apparent vector in the Figure, since the illustration is from the $-Y$ perspective instead of the $+Y$ perspective). If one hat were to invert without the other, the displacement vector of the inverted hat would change to some rotation of the vector $(3,2)$. Then, for the two displacement vectors to have the same $Z$ value (to have the ends of the rope be in the same $Z$ plane) the $X$ values must be the negation of each other. This leaves the possible distance between the two bonds at the ends of the rope to be $9$, $11$, $19$, or $21$. These numbers were computed by taking the length of the evaluation space, $15$, and adding or subtracting both of the possible $X$ coordinates of the displacement vectors, $2$ or $3$, since the displacement vectors must point in opposite $X$ directions. Since none of these possible distances are $15$, the actual distance of the rope, then none of the possible configurations resulting from an inversion of one hat without the other are valid.
\end{proof}

\begin{claim} \label{clm:trivial_rigid}
    If the main loop is in the trivial state, no other bond can reconfigure.
\end{claim}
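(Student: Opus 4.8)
The plan is to show that in the trivial state the trivial assignment hat (TAH), now pressed flush against the evaluation space (ES), physically freezes each of the two remaining flexible bond groups identified earlier, namely the variable constraint bonds and the checker bonds. Since the four main loop bonds are fixed by the hypothesis that the main loop is in the trivial state, freezing these two groups suffices to conclude that no other bond can reconfigure. First I would record that, by the analysis of the main loop, the trivial state forces the TAH into direct contact with the ES, so every Force Bump and the additional blocking piece on the TAH occupy their intended locations adjacent to the evaluation gadgets and the satisfying assignment hat (SAH) checkers.

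Next I would dispatch the variable constraint bonds. By Claim~\ref{clm:vcl_rigid}, the only nontrivial freedom of each variable constraint gadget is to flip the truth value of its variable, which amounts to swapping the up/down orientations of the evaluation gadgets for that variable. I would argue that in the trivial state a Force Bump sits directly over each negative-literal evaluation gadget (see Figure~\ref{fig:tah_inkscape}), so that gadget is pinned into the ``down'' orientation; attempting to flip it ``up'' would collide with the Force Bump. Via the rigid structure of the variable constraint gadget established in Claim~\ref{clm:vcl_rigid}, pinning the negative-literal gadgets down forces the corresponding positive-literal gadgets up, so every evaluation gadget orientation is determined by contact (direct or through the gadget's own rigidity). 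Consequently no variable constraint bond can reconfigure without separating the TAH from the ES, which would require leaving the trivial state.

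I would then handle the checker bonds. Each checker in the SAH is confined by its encasing structure to exactly three orientations (Figure~\ref{fig:SAH_side_view}), and in the trivial state the TAH's additional blocker piece projects into the space that any checker would have to occupy in order to fold into a different one of those three orientations. I would verify that the blocker obstructs all three alternative foldings, so each checker is held in place and no checker bond can change. Combining the three bond groups then yields the claim: the main loop bonds are fixed by assumption, the variable constraint bonds are frozen by the Force Bumps, and the checker bonds are frozen by the blocker, so in the trivial state no bond of the assembly can reconfigure.

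The hard part will be the two geometric collision arguments—showing that a Force Bump genuinely occupies the tile location needed for the alternate evaluation-gadget orientation, and that the single blocker piece simultaneously obstructs all three reorientations of every checker. These are the steps where one must confirm, coordinate by coordinate (or by careful appeal to the figures), that the obstructing tiles leave no collision-free alternative; the remainder is bookkeeping that follows from Claim~\ref{clm:vcl_rigid} and the main loop geometry.
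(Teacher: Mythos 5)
Your overall architecture matches the paper's proof (main loop fixed by hypothesis, Force Bumps freeze the variable constraint bonds, the TAH's extra blocking piece freezes the checker bonds), but there is a genuine gap in how you dispatch the variable constraint group. You assert that by Claim~\ref{clm:vcl_rigid} ``the only nontrivial freedom of each variable constraint gadget is to flip the truth value of its variable.'' That is a misreading: Claim~\ref{clm:vcl_rigid} says each gadget can reorient by changing truth value \emph{or by being inverted}, and the inversion in question is a \emph{local} inversion of that gadget relative to the rest of the assembly --- a distinct configuration of its flexible bonds, not the trivial global chirality flip. This freedom is real: in the proof of Claim~\ref{clm:satisfied_flexible} the paper explicitly notes that in the satisfied state the variable constraint gadgets \emph{can} invert across the $XY$ evaluation gadget pieces (harmlessly there, but it shows the motion exists when unobstructed). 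So to establish rigidity in the trivial state you must separately rule out local inversion of each variable constraint gadget, and your Force Bump argument does not do that --- pinning every evaluation gadget's up/down orientation fixes the truth-value degree of freedom but does not by itself preclude the gadget's chimneys and levels flipping to the mirror side. The paper closes this with a dedicated first step: since the TAH lies across the entire evaluation space in the trivial state, an inverting gadget's chimneys would collide with the Main Bottom and Main Top pieces of the TAH. Adding that collision argument repairs your proof.

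Two minor points. First, you say the blocker must obstruct ``all three alternative foldings'' of each checker; each checker has only three orientations total, so the blocker needs to exclude the two alternatives to the held orientation ``UDDUUDDU'' --- a wording slip, not a substantive error. Second, your polarity convention (Force Bumps over negative-literal gadgets, pinning them down) matches the paper's description of the TAH construction, though you should be aware the paper's own proof of this claim states the opposite polarity (Force Bumps popping positive-literal gadgets down); the paper is internally inconsistent here, and the polarity is immaterial to the logical structure either way.
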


\begin{proof}
    First, we also prove that the variable constraint gadgets can't invert without the SAH and TAH. This is easy to see from Figures~\ref{fig:simulater_machine} and \ref{fig:machine}. Since the TAH lays across the entire ES, if the variable constrain gadgets were to invert without the other main components, the chimneys would collide with the Main Bottom and Main Top pieces in the TAH. Therefore, the variable constraint gadgets can't reorient without both of the other main components reorienting as well.
    
    Now, we show the remaining flexible bonds (the checker group and the variable constraint group) can't reorient either. In the trivial state, the checkers in the SAH are blocked by the extension of the TAH, which prevents the bonds from taking any configuration other than ``UDDUUDDU''. The bonds in the variable constraint group are also forced into a configuration by the TAH, since the Force Bumps on the TAH force all evaluation gadgets for positive literals to be popped down, thereby forcing their respective VCL's down, thereby forcing their opposing VCL's up, and thereby forcing all evaluation gadgets for negative literals to be popped up. Therefore, if the main loop bonds are fixed in the trivial state, the entire assembly is rigid.
\end{proof}

\begin{claim} \label{clm:satisfied_flexible}
    If the main loop is in the satisfied state, the other free flexible bonds can only configure in a way such that no tiles in the assembly are overlapping if and only if there is a satisfying assignment to the corresponding 3SAT formula.
\end{claim}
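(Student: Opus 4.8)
The plan is to prove the biconditional by establishing each direction separately, leaning on the structural claims already proved for the individual gadgets. First I would record the state of the machine once the main loop is pinned in the satisfied state: by Claim~\ref{clm:vcl_rigid} the only remaining degrees of freedom are the \emph{variable constraint group} and the \emph{checker group}, and the variable constraint group has exactly $2^n$ admissible configurations, one for each truth assignment of the $n$ variables, with each evaluation gadget ``up'' iff its literal is \emph{False} and ``down'' iff its literal is \emph{True} (and positive and negative instances of a variable always disagreeing). Thus any valid configuration of the free bonds canonically reads out a consistent truth assignment $\phi$, and conversely every assignment $\phi$ can be realized by the variable constraint group.

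The crux is a local geometric lemma tying each checker to one literal of its clause. I would argue that, with the SAH pressed down, every checker must fold into exactly one of the three raised channels described in Figure~\ref{fig:SAH_side_view} (any other fold collides with the rigid outer structure, as already established in the SAH rigidity claim), and that the channel chosen projects the checker's blocking tile over the evaluation gadget of exactly one of the clause's three literals. The key step is to show that this projection is non-overlapping \emph{iff} that evaluation gadget is ``down'': when the gadget is down (literal \emph{True}) it vacates the space above the hole and the blocking tile seats in cleanly, whereas when the gadget is up (literal \emph{False}) the popped-up bump occupies that space and forces a tile overlap with the descending blocking tile. This is precisely where satisfaction of a clause is encoded geometrically, so verifying the exact coordinates of the bump and blocking tile---and confirming that no alternative checker orientation avoids the collision---is the main obstacle.

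Granting the lemma, the forward direction is immediate: given a satisfying assignment, set the variable constraint group to realize it, and for each clause pick a literal it makes \emph{True}; by the lemma the corresponding checker can be routed to block that literal without overlap, so the SAH seats fully against the ES and the resulting configuration is valid and distinct from the trivial state. For the reverse direction, suppose a valid non-overlapping configuration exists in the satisfied state. Reading off $\phi$ from the variable constraint group as above, the lemma forces the blocking tile of each clause's checker to lie over a ``down'' gadget, i.e.\ over a literal that $\phi$ makes \emph{True}; hence every clause contains a true literal and $\phi$ satisfies the formula. Combining the two directions yields the claim, and together with Claim~\ref{clm:trivial_rigid} (rigidity in the trivial state) this gives the reduction needed for Lemma~\ref{lem:rigidity-from-assembly-NP-hard}.
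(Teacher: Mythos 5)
Your proposal follows essentially the same route as the paper's proof: both read a truth assignment off the variable constraint group (gadget up $=$ \emph{False}, gadget down $=$ \emph{True}), both hinge on the geometric fact that each checker's blocking piece must occupy the location one unit above one of its clause's three evaluation gadgets and collides exactly when that gadget is popped up, and both then run the two implications just as you do. (The paper labels the directions in the opposite order and, in deriving an assignment from a non-overlapping configuration, assigns values per checker and argues no variable ever needs reassignment because the VCGs enforce consistency, whereas you read the whole assignment from the variable constraint group at once --- an immaterial difference. Also, what you isolate as a ``local geometric lemma'' the paper asserts inline with reference to Figure~\ref{fig:SAH_side_view} rather than verifying coordinates in detail.)

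The one step you omit that the paper treats explicitly: once the TAH lifts off the ES in the satisfied state, Claim~\ref{clm:vcl_rigid} still permits each variable constraint gadget to \emph{invert} across the $XY$ evaluation pieces to which it is bound, so your assertion that the variable constraint group has exactly $2^n$ admissible configurations is not literally true as stated --- inversions are additional, geometrically realizable configurations. The paper patches this in one sentence: an inversion preserves the complementary up/down pattern of the positive and negative literals of each variable, hence preserves the read-out assignment, so inversions can be ignored for the purposes of the biconditional (and it remarks that the Tunnel Top could be extended to block them outright). Your argument needs that same observation to be airtight; with it added, the proof goes through exactly as in the paper.
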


\begin{proof}
    First, we address the inversion of variable constraint gadgets. Since the TAH is no longer pressed against the ES, the variable constraint gadgets could invert across the $XY$ evaluation gadget pieces to which that specific variable constraint gadget is bound. However, this inversion would not affect the functionality of the variable constraint gadget, since it would still cause all positive literals of one variable to be one truth value and all negative literals of that variable to be the other truth value. Therefore, we can ignore any potential inversion of these gadgets in this proof. For elegance, the Tunnel Top piece of the SAH could be extended such that it blocks these inversions.

    For the forward direction of the bijection, we prove that, ``If the assembly configures such that no tiles overlap, there is a satisfying assignment''. We know that, in the satisfied state, the Third Sectional and Blocker piece in each checker gadget of the SAH must occupy the tile location that is one unit in the $Z$ dimension above one of the three evaluation gadgets in the clause to which the checker gadget corresponds. Therefore, if no tiles overlap, one of the literals in that clause must have evaluated to \emph{True} so that it is popped down in the $Z$ dimension, allowing the checker gadget to occupy the space above. If this evaluation gadget represents a positive literal, we can start building the satisfying assignment by setting that variable to \emph{True}. If the evaluation gadget represents a negative literal, we set the variable to \emph{False}. We do this for each checker gadget. We know that we will never have to reassign a variable a new value, since the variable constraint gadgets ensure that every literal of the same type agrees with each other. After we go through each checker gadget, we have a variable assignment that satisfies every clause in the boolean formula. Any variable that hasn't been assigned can be given either truth value.
    
    For the reverse direction, we prove that, ``If there is a satisfying assignment, the assembly can configure such that no tiles overlap''. Here, we can take every variable in the satisfying assignment, and if it is assigned the value \emph{True}, we raise the VCL that corresponds to the negative literals of that variable (and pop up the connected evaluation gadgets) and lower the VCL that corresponds to the positive literals of that variable (and pop down the connected evaluation gadgets). If the variable is assigned the value \emph{False}, we do the opposite. Now, because we know this is a satisfying assignment, at least one evaluation gadget in each clause must be popped down, allowing each checker gadget to have at least one configuration that doesn't overlap with an evaluation gadget, giving us a valid configuration for the whole assembly.
\end{proof}

To prove Lemma~\ref{lem:rigidity-from-assembly-NP-hard}, we use the previous claims as follows. Take the 3SAT machine in the trivial state. By Claim~\ref{clm:satisfied_flexible}, if the corresponding 3SAT formula has a satisfying assignment, then the machine has at least one additional configuration in the satisfied state it can reconfigure into. If the corresponding 3SAT formula does not have a satisfying assignment, there are no valid configurations in the satisfied state. By Claim~\ref{clm:trivial_rigid}, there is only one configuration in the trivial state. Therefore, determining if the assembly has multiple valid configurations, the complement of rigidity-from-assembly, is polynomial time reducible from 3SAT and therefore NP-hard.

\subsection{Terminality-from-assembly is co-NP-complete: Technical details} \label{sec:assembly-terminality-append}

Here we provide technical details for the proof of Lemma~\ref{lem:terminality-from-assembly-NP-hard}.

\begin{proof}
We will again reduce this problem to 3SAT. Take the 3SAT machine used to prove Lemma \ref{lem:rigidity-from-assembly-NP-hard}. This assembly has a bond between a piece called the ``Rope'' and the TAH that is in a ``Straight'' orientation when the machine is in trivial state and in a ``Down'' orientation in the satisfied state. For both tiles that make up this bond, add a unique flexible glue that is on the side 90 degrees clockwise from the original bond on one tile and 90 degrees counterclockwise from the original bond on the other tile. This way, both unique flexible glues are pointed in the same dimension. In the trivial state, these glues are adjacent to two different tile locations that are adjacent themselves. However, if the assembly can reconfigure into the satisfied state, these glues become adjacent to a mutual tile location. Therefore, by adding a tile type to the system with the complements of both glues on adjacent sides such that it could bind in this location, we create a situation in which the assembly is not terminal if and only if the corresponding 3SAT problem has a satisfying solution. Therefore, the complement of terminality-from-assembly is NP-hard.
\end{proof}

\end{document}